\documentclass[a4paper,12pt]{amsart}
\usepackage{amsmath,amssymb,amsthm}
\usepackage{amscd}
\usepackage{mathrsfs}
\usepackage[usenames,dvipsnames]{color}
\usepackage{graphicx}
\usepackage[all]{xy}
\usepackage{eucal}
\usepackage{extarrows} 
\usepackage{txfonts}      
\usepackage{tabularx}
\usepackage{tikz}
\usepackage{xparse}
\usepackage{colordvi}
\usepackage{multicol}
\usepackage{enumitem}
\usepackage[normalem]{ulem}
\usepackage{epsfig}
\usepackage{xspace}
\usepackage{stmaryrd}
\usepackage{tikz-cd}

\usepackage{ifpdf}
\ifpdf
 \usepackage[colorlinks,final,backref=page,hyperindex]{hyperref}
\else
 \usepackage[colorlinks,final,backref=page,hyperindex,hypertex]{hyperref}
\fi

\usepackage{graphicx}
\usepackage{epstopdf}
\usepackage{epsfig}
\usepackage{pdfsync}    

\usetikzlibrary{arrows}  
\usetikzlibrary{decorations.pathmorphing}  

\setlength{\textheight}{230mm}
\setlength{\headheight}{1mm}
\setlength{\headsep}{5mm}
\setlength{\textwidth}{160mm}
\setlength{\topmargin}{1mm}
\setlength{\leftmargin}{1mm}
\setlength{\footskip}{8mm}
\setlength{\oddsidemargin}{1mm}
\setlength{\evensidemargin}{1mm}


\def\XXint#1#2#3{{\setbox0=\hbox{$#1{#2#3}{\int}$ }
\vcenter{\hbox{$#2#3$ }}\kern-.58\wd0}}

\def\XXsum#1#2#3{{\setbox0=\hbox{$#1{#2#3}{\sum}$ }
\vcenter{\hbox{$#2#3$ }}\kern-.51\wd0}}

\begin{document}

\newcommand\cutoffint{\mathop{-\hskip -4mm\int}\limits}
\newcommand\cutoffsum{\mathop{-\hskip -4mm\sum}\limits}
\newcommand\cutoffzeta{-\hskip -1.7mm\zeta}
\newcommand{\goth}[1]{\ensuremath{\mathfrak{#1}}}
\newcommand{\bbox}{\normalsize {}%
        \nolinebreak \hfill $\blacksquare$ \medbreak \par}
\newcommand{\simall}[2]{\underset{#1\rightarrow#2}{\sim}}

\newtheorem{theorem}{Theorem}[section]
\newtheorem{prop}[theorem]{Proposition}
\newtheorem{lemdefn}[theorem]{Lemma-Definition}
\newtheorem{propdefn}[theorem]{Proposition-Definition}
\newtheorem{lem}[theorem]{Lemma}
\newtheorem{lemma}[theorem]{Lemma}
\newtheorem{thm}[theorem]{Theorem}
\newtheorem{coro}[theorem]{Corollary}
\newtheorem{claim}[theorem]{Claim}
\newtheorem{outline}[theorem]{Outline}
\newtheorem{question}[theorem]{Problem}
\newtheorem{Goal}[theorem]{Goal}
\theoremstyle{definition}
\newtheorem{defn}[theorem]{Definition}
\newtheorem{rk}[theorem]{Remark}
\newtheorem{remark}[theorem]{Remark}
\newtheorem{ex}[theorem]{Example}
\newtheorem{coex}[theorem]{Counterexample}

\renewcommand{\theenumi}{{\it\roman{enumi}}}
\renewcommand{\theenumii}{\alpha{enumii}}

\newenvironment{thmenumerate}{\leavevmode\begin{enumerate}[leftmargin=1.5em]}{\end{enumerate}}

\setcounter{MaxMatrixCols}{20}

\newcommand{\nc}{\newcommand}

\nc{\delete}[1]{{}}


\nc{\mlabel}[1]{\label{#1}}  
\nc{\mcite}[1]{\cite{#1}}  
\nc{\mref}[1]{\ref{#1}}  
\nc{\meqref}[1]{\eqref{#1}}
\nc{\mbibitem}[1]{\bibitem{#1}} 

%
\delete{
\nc{\mcite}[1]{\cite{#1}{{\bf{{\ }(#1)}}}}  
\nc{\mlabel}[1]{\label{#1}  
	{\hfill \hspace{1cm}{\bf{{\ }\hfill(#1)}}}}
\nc{\mref}[1]{\ref{#1}{{\bf{{\ }(#1)}}}}  
\nc{\meqref}[1]{\eqref{#1}{{\bf{{\ }(#1)}}}}  
\nc{\mbibitem}[1]{\bibitem[\bf #1]{#1}} 
}


\nc{\mseva}{{\cale^Q_{\rm MS}}}

\nc{\wvec}[2]{{\scriptsize{\Big [ \!\!\begin{array}{c} #1 \\ #2 \end{array} \!\! \Big ]}}}

\nc{\lexle}{\le_{\mathrm{lex}}}

\nc{\mti}[1]{\widehat{#1}}

\nc{\fpower}{\calp_{\rm fin}}

\nc{\mzf}{multiple zeta fraction\xspace}
\nc{\mzfs}{multiple zeta fractions\xspace}

\nc{\pwvec}[2]{\Big(\wvec{#1}{#2}\Big)}

\nc{\fvec}[2]{\frakf\pwvec{#1}{#2}}
\nc{\name}[1]{{\bf #1}}

\nc{\scs}[1]{\scriptstyle{#1}}
\newfont{\scyr}{wncyr10 scaled 550}
\nc{\ssha}{\,\mbox{\bf \scyr X}\,}
\newfont{\bcyr}{wncyr10 scaled 1000}

\nc{\Sh}{\mathrm{Sh}}
\nc{\ug}{{U}}
\nc{\zsg}[1]{\widehat{#1}} \nc{\bs}{\bar{S}}
\nc{\spair}[2]{\Big[\begin{array}{c}\scs{#1} \\ \scs{#2} \end{array} \Big]}
\nc{\pfpair}[2]{\Big(\begin{array}{c}\scs{#1} \\ \scs{#2} \end{array} \Big)}

\nc{\supsp}{{\rm Supp}}
\nc{\depsp}{{\rm Dep}}
\nc{\pres}{\text{\rm p-res}}
\nc{\pord}{\text{\rm p-ord}}
\nc{\lexg}{>_{\rm lex}}
\nc{\perpq}{\perp^Q }
\nc{\qorth}{locality\xspace}			
\nc{\Qorth}{Locality\xspace}			
\nc{\msubalg}[1]{{\calm_{\Q+}}^{\perpq }\left\{#1\right\}}
\nc{\orth}{orthogonal\xspace}

\nc{\weak}{weak\xspace}
\nc{\wch}{{\rm wCh}}

\nc{\general}{ordered\xspace}
\nc{\gc}{L}

\nc{\thing}{Speer\xspace}
\nc{\tch}{{\rm Sp}}

\nc{\ssg}[1]{\overline{#1}}
\nc{\psg}[1]{\widetilde{#1}}
\nc{\shf}{{^{\ssha}}}
\nc{\lone}{_{\hskip -7.5pt 1}}
\nc{\qsshab}{{{\ssha\hspace{-2pt}_{\rho}}}\,}
\nc{\bsh}{{^{\qsshab}}}
\nc{\PF}{\mathbb{Q}\mathcal{F}}
\nc{\MZV}{\mrm{MZV}\xspace}
\nc{\MZVs}{\mrm{MZVs}\xspace}

\nc{\lexge}{\geq_{\mathrm{lex}}}
\nc{\lexgr}{>_{\mathrm{lex}}}
\nc{\sloc}{\top}

\nc{\Lyn}{\mathrm{Lyn}}

\nc{\ordfr}{\text{ordered fraction}\xspace}
\nc{\ordfrs}{\text{ordered fractions}\xspace}
\nc{\Ordercones}{Ordered cones\xspace}
\nc{\ordercones}{ordered cones\xspace}
\nc{\ordergerm}{ordered germ\xspace}
\nc{\ordergerms}{ordered germs\xspace}
\nc{\Ordergerms}{Ordered germs\xspace}

\nc{\mcone}[1]{C(#1)}
\nc{\conepair}[2]{C\spair{#1}{#2}}

\nc{\mpoly}[1]{{\calm_{\Q+}}^{\perpq }\left[#1\right]}
\nc{\Gal}{\mathrm{Gal}}
\nc{\mcalg}{\calg}
\nc{\calu}{\mathcal U}

\newcommand{\bottop}{\top\hspace{-0.8em}\bot}
\newcommand{\bbB}{\mathbb{B}}
\newcommand{\C}{\mathbb{C}}
\newcommand {\bbC}{\mathbb{C}}
\newcommand{\F}{\mathbb{F}}
\newcommand {\bbF}{{\mathbb{F}}}
\newcommand{\bbG}{\mathbb{G}}
\newcommand{\K}{\mathbb{K}}
\newcommand{\N}{\mathbb{N}}
\newcommand {\bbP}{{\mathbb{P}}}
\newcommand{\PP}{\mathbb{P}}
\newcommand{\Q}{\mathbb{Q}}
\newcommand{\R}{\mathbb{R}}
\newcommand {\bbR}{{\mathbb{R}}}
\newcommand{\T}{\mathbb{T}}
\newcommand{\bbU}{\mathbb{U}}
\newcommand{\W}{\mathbb{W}}
\newcommand {\bbW}{{\mathbb{W}}}
\newcommand{\Z}{\mathbb{Z}}
\newcommand{\fraka}{\mathfrak{a}}
\newcommand{\frakb}{\mathfrak{b}}
\newcommand {\frakc}{{\mathfrak {c}}}
\newcommand {\frakd}{{\mathfrak {d}}}
\newcommand {\frakf}{{\mathfrak {f}}}
\newcommand {\fraku}{{\mathfrak {u}}}
\newcommand {\fraks}{{\mathfrak {s}}}
\newcommand {\frakS}{{\mathfrak {S}}}
\newcommand {\cala}{{\mathcal {A}}}
\newcommand {\calb}{{\mathcal{B}}}
\newcommand {\calc}{{\mathcal {C}}}
\newcommand {\cald}{{\mathcal {D}}}
\newcommand {\cale}{{\mathcal {E}}}
\newcommand {\calf}{{\mathcal {F}}}
\newcommand {\calg}{{\mathcal {G}}}
\newcommand {\calh}{\mathcal{H}}
\newcommand {\cali}{\mathcal{I}}
\newcommand {\call}{{\mathcal {L}}}
\newcommand {\calm}{{\mathcal {M}}}
\newcommand {\calo}{{\mathcal {O}}}
\newcommand {\calp}{{\mathcal {P}}}
\newcommand {\calr}{{\mathcal {R}}}
\newcommand {\cals}{{\mathcal {S}}}
\newcommand {\calt}{{\mathcal {T}}}
\newcommand {\calv}{{\mathcal {V}}}
\newcommand {\calw}{{\mathcal {W}}}
\newcommand {\frakF}{{\mathfrak {F}}}

\nc{\vep}{\varepsilon}
\def \e {{\epsilon}}
\newcommand{\sy}[1]{{\color{purple}  #1}} 
\newcommand{\cy}[1]{{\color{cyan}  #1}}
\newcommand{\zb }[1]{{\color{blue}  #1}}
\newcommand{\li}[1]{{\color{red} #1}}
\newcommand{\lir}[1]{{\textcolor{red} {Li: #1}}}
\nc{\light}[1]{\textcolor{green} {(Out of context?) #1}}

\nc{\deff}{K}
\nc{\RR}{{\mathbb R}}
\nc{\ZZ}{{\mathbb Z}}
\nc{\QQ}{{\mathbb Q}}
\nc{\rd}{\mrm{rd}}
\nc{\monic}{{\mathfrak M}}
\nc{\coof}{L}           
\nc{\coef}{\bfk}        
\nc{\close}[1]{\overline{#1}}
\nc{\cl}{c}                 
\nc{\op}{o}                 
\nc{\io}{{io}}
\nc{\co}{\#}                 
\nc{\cc}{\mathcal{C}^c}      
\nc{\subd}{\mathrm{sub}}
\nc{\ccsp}{\calv^\cl}       
\nc{\ccrel}{\calw^\cl}      
\nc{\cccl}{\calh^\cl}       
\nc{\ocsp}{\calv^\op}       
\nc{\oc}{\mathcal{C}^o}      
\nc{\chenc}{\mathcal{CC}^o}  

\nc{\soc}{\mathcal{SC}^o}   
\nc{\doc}{\mathcal{DC}^o}    
\nc{\dsoc}{\mathcal{DSC}^o}  
\nc{\dosmc}{\mathcal{DTC}^o}    
\nc{\dfrsoc}{\mathcal{DFRSC}^o} 
\nc{\dfrcc}{\mathcal{DFRC}^c} 
\nc{\doch}{\mathcal{DCH}^o}    
\nc{\dcch}{\mathcal{DCH}^c}     
\nc{\dfrm}{\mathcal{DFRM}}   
\nc{\ocmzv}{\mathcal{CZV}^o} 
\nc{\oscmzv}{\mathcal{SZV}^o}  
\nc{\lzvset}{\mathcal{LZV}} 
\nc{\szvset}{\mathcal{SZV}} 
\nc{\spcc}{\mathcal{SpC}^c} 
\nc{\mcc}{\mathcal{MC}^c}    
\nc{\dcc}{\mathcal{DCC}^c}    
\nc{\cmzv}{\mathcal{DCZV}^c} 

\nc{\deco}[1]{\overline{#1}^o}  
\nc{\decc}[1]{\overline{#1}^c}  
\nc{\tn}{T}     
\nc{\dscc}{\mathcal{DSC}^c}  
\nc{\dsmc}{\mathcal{DMC}^c}  
\nc{\drc}{\mathcal{DRC}^c}  
\nc{\cdsmc}{\mathcal{CDMC}^c} 
\nc{\cs}{\mathcal{S}}   
\nc{\cm}{\mathcal{SM}}   
\nc{\ds}{\mathcal{DS}}  
\nc{\dm}{\mathcal{DM}}  

\nc{\smzv}{\mathcal{SZV}} 
\nc{\ocrel}{\calw^\op}      
\nc{\occl}{\calh^\op}       
\nc{\usp}{\{0\}}           
\nc{\cset}{\calc^\co}      
\nc{\csp}{\calv^\co}       
\nc{\crel}{\calw^\co}      
\nc{\ccl}{\calh^\co}       
\nc{\cone}[1]{\langle #1\rangle}
\nc{\ccone}[1]{\langle #1\rangle^\cl}
\nc{\ocone}[1]{\langle #1\rangle^\op}
\nc{\ocmzvset}{\mathcal{CZV}^o} 
\nc{\ccmzvset}{\mathcal{LZV}} 
\nc{\czvset}{\mathcal{CZV}}
\nc{\mzvset}{\mathcal{MZV}}
\nc{\dcset}{\calc^\co}      
\nc{\dcsp}{\calv^\co}       
\nc{\dcrel}{\calw^\co}      
\nc{\dccl}{\calh^\co}       
\nc{\dccset}{\calc^\cl}      
\nc{\dccsp}{\calv^\cl}       
\nc{\dccrel}{\calw^\cl}      
\nc{\dcccl}{\calh^\cl}       
\nc{\docrel}{\calw^\op}      
\nc{\doccl}{\calh^\op}       
\nc{\ccmzv}{\mathcal{CZV}^c}
\nc{\mzv}{\mathcal{MZV}}
\nc {\dtcp}{\mathcal {DTP}}

\nc{\mrm}[1]{{\rm #1}}
\nc{\Aut}{\mathrm{Aut}}
\nc{\depth}{{\mrm d}}
\nc{\id}{\mrm{id}}
\nc{\Id}{\mathrm{Id}}
\nc{\Irr}{\mathrm{Irr}}
\nc{\Span}{\mathrm{span}}
\nc{\mapped}{operated\xspace}
\nc{\Mapped}{Operated\xspace}
\newcommand{\redtext}[1]{{\textcolor{red}{#1}}}
\newcommand{\Hol}{\text{Hol}}
\newcommand{\Mer}{\text{Mer}}
\newcommand{\lin}{\text{lin}}
\nc{\ot}{\otimes}
\nc{\Hom}{\mathrm{Hom}}
\nc{\CS }{\mathcal{CS}}
\nc{\bfk}{{K}}
\nc{\lwords}{\calw}
\nc{\ltrees}{\calf}
\nc{\lpltrees}{\calp}
\nc{\Map}{\mathrm{Map}}
\nc{\rep}{\beta}
\nc{\free}[1]{\bar{#1}}
\nc{\OS}{\mathbf{OS}}
\nc{\OM}{\mathbf{OM}}
\nc{\OA}{\mathbf{OA}}
\nc{\based}{based\xspace}
\nc{\tforall}{\text{ for all }}
\nc{\hwp}{\widehat{P}^\calw}
\nc{\sha}{{\mbox{\cyr X}}}
\font\cyr=wncyr10 \font\cyrs=wncyr7
\nc{\Mor}{\mathrm{Mor}}
\def\lc{\lfloor}
\def\rc{\rfloor}
\nc{\oF}{{\overline{F}}}
\nc{\mge}{_{bu}\!\!\!\!{}}
\newcommand {\bfc}{{\bf {C}}}
\nc {\conefamilyc}{{\underline{{C}}}}
\newcommand{\loc}{locality\xspace}
\newcommand{\Loc}{Locality\xspace}

\nc{\supp}{\mathrm{Dep }}
\nc {\ordcone} {ordered cone\xspace}
\nc {\ordcones} {ordered cones\xspace}
\nc {\simplex}{simplex\xspace}
\nc{\QsubS}{\QQ\Pi^Q(\mti{\cals})}
\nc{\subS}{\Pi^Q(\mti{\cals})}
\nc{\wt}{weight\xspace}
\nc{\wts}{weights\xspace}

\title[Galois groups and
germs in several variables] {Locality Galois groups \\ of meromorphic germs in several variables
}

\author{Li Guo}
\address{Department of Mathematics and Computer Science,
	Rutgers University,
	Newark, NJ 07102, USA}
\email{liguo@rutgers.edu}

\author{Sylvie Paycha}
\address{Institute of Mathematics,
	University of Potsdam,
	D-14469 Potsdam, Germany}
\email{paycha@math.uni-potsdam.de}

\author{Bin Zhang}
\address{School of Mathematics, Sichuan University, Chengdu, 610064, China}
\email{zhangbin@scu.edu.cn}

\begin{abstract}  Meromorphic germs in  several variables with linear poles naturally arise in mathematics in various disguises. We investigate their rich structures under the prism of locality, including locality subalgebras, locality transformation groups and locality characters. The key technical tool is the dependence subspace for a meromorphic germ with which we define a locality orthogonal relation between two meromorphic germs. {  We describe the structure of  locality subalgebras  generated by   classes of meromorphic germs with certain types of poles}. We also define and determine {their  group  of locality transformations which fix the holomorphic germs and preserve  multivariable residues}, a group we call the locality Galois group.

We then specialise to two classes of meromorphic germs with prescribed types of nested poles, arising from multiple zeta functions in number theory and Feynman integrals in perturbative quantum field theory respectively. We show that they are locality polynomial subalgebras with locality polynomial bases given by the locality counterpart of Lyndon words.
This enables us to  explicitly describe their locality Galois group.
As an application, we {propose a mathematical interpretation} of Speer's analytic renormalisation for Feynman amplitudes.
We study a class of locality characters, called generalised evaluators after Speer. We show that the locality Galois group acts transitively on generalised evaluators by composition, thus providing a candidate for a renormalisation group in this multivariable approach.
\end{abstract}

\subjclass[2010]{
32A20, 
13B05, 
08A55,	
32A27, 
81T15, 
15A63, 
52C07, 
81T17, 
11M32, 
05A05   
}

\keywords{meromorphic germ, locality, Galois group, Lyndon word, renormalisation, evaluator, convex cone, renormalisation group}

\maketitle

\vspace{-1cm}
\tableofcontents

\vspace{-1cm}

\allowdisplaybreaks

\section{Introduction}

This work seeks to reveal the rich structure of meromorphic germs in several variables with linear poles, to describe subalgebras, to explore the structure of transformation group, to evaluate them at poles in a consistent manner, and to compare different evaluations. The locality framework developed by the authors {\cite{CGPZ1}}, appears to be well suited to achieve these goals. 

\subsection{From one to multivariable renormalisation}

Renormalisation is a procedure used to evaluate divergent expressions in various areas of  physics and mathematics, ranging from the classical instance of Feynman amplitudes in perturbative quantum field theory \mcite{t'H,t'HV} to multiple zeta functions at poles (see \mcite{GZ,MP} for example) and Todd functions for toric varieties  see, {e.g. \mcite{BV2,P}}.
In either case, a  preliminary step is a regularisation procedure, after which one can extract divergences and evaluate at the poles.
There is a great freedom in the choice of regularisation and the method of extracting divergences, specifically, a regularisation can involve one or multiple parameters. 

The algebraic structure underlying single parameter renormalisation
has attracted great interest  in mathematics since the groundbreaking work of Connes and Kreimer~\mcite{CK1,CK2} to tackle Feynman integrals.

In their approach, the combinatorics of the divergent expressions are organized into a connected Hopf algebra $\calh$, while the regularisations of the divergent expressions have their Laurent series expansions in the Rota-Baxter algebra $\calm(\C)=\C[z^{-1},z]]$,
characterised by its linear decomposition
\begin{equation} \mlabel{eq:splitMC} \calm(\C)= \calm_-(\C)\oplus \calm_+(\C),\end{equation}
into subalgebras $\calm_-(\C):=z^{-1}\C[z^{-1}], \calm_+(\C):= \C[[z]]$ and with the induced projection
\begin{equation} \mlabel{eq:proj1}
\pi_+:\calm(\C)\longrightarrow \calm_+(\C), \quad f(z)=\sum_{k=-K}^\infty a_k\, z^k\mapsto \sum_{k=0}^\infty a_k\, z^k.
\end{equation}
The regularisation map is enriched to an algebra homomorphism
\begin{equation} \mlabel{eq:phiMC} \phi:\calh \to \calm(\C),\end{equation}
which then factorises according to the algebraic Birkhoff factorisation, as the convolution product $\phi=\phi_-^{-1}\star \phi_+$ of a holomorphic part $\phi_+$ with values in the subalgebra $\calm_+(\C)$ and a polar part $\phi_-$ with values in $\calm_-(\C)$. The renormalised map $\phi_+$ which is then evaluated at the poles,
is built inductively with the recursion encoded in  the coproduct,  reflecting the celebrated  BPHZ procedure in perturbative quantum field theory \cite{BP,Hep,Zi}.

Extending the Connes-Kreimer approach to multiple parameter regularisations leads to an algebra homomorphism 
\[\phi:\calh\longrightarrow \calm(\C^\infty)\]
with values in $\calm(\C^\infty)$, the algebra of multivariable meromorphic germs at zero with linear poles for the Hopf algebra of convex polyhedral cones~\mcite{GPZ} and then later in a more general locality framework ~\mcite{CGPZ1,CGPZ2,CGPZ3}. This approach uses in an essential way a locality version of the  Rota-Baxter algebraic structure on $\calm(\C^\infty)$. 

\subsection{Locality for multivariable meromorphic germs} 
In physics, the principle of locality is a key feature of field theory which states that an object is influenced directly only by its immediate surroundings. We interpret locality more generally as certain binary relations ~{\mcite{CGPZ1}},  {enhance algebras to locality algebras  and call locality morphisms, the  morphisms  that preserve such locality relations. Its relation to causality in quantum field theory was discussed in \mcite{Re}.} 
In this paper the locality relation {on meromorphic germs} is  induced by an inner product $Q$ on the underlying vector space in the following way. A pair of   meromorphic germs lies in the graph of the locality relation if the linear spaces spanned by the  sets of variables they respectively depend on, called their {\bf  dependence subspaces}, are mutually orthogonal.This locality provides a natural splitting
$$\calm(\C^\infty)=\calm_+(\C^\infty)\oplus \calm_-^Q(\C^\infty)$$ into the subspace $\calm_+(\C^\infty)$ of holomorphic germs and a space $\calm_-^Q(\C^\infty)$ of what we call ``polar germs".
{While  $\calm_-^Q(\C^\infty)$ is  not a subalgebra of $\calm^Q(\C^\infty)$, it bares the remarkable property of being a {\bf locality ideal}.}

 {In applications ~\mcite{CGPZ2,CGPZ3}, we equip the Hopf algebra with a locality structure, turning it into a locality Hopf algebra, and $\phi$ becomes a locality morphism of algebras.} 
 {Thanks} to the fact that $\calm^Q(\C^\infty)$ is a locality ideal, not only the algebraic Birkhoff factorisation of $\phi$ can be recovered in the locality setting, {moreso,} the renormalisation procedure $\phi_+$ simplifies to the post composition $\pi_+^Q\circ \phi$ with the projection
$$ \pi_+^Q: \calm(\C^\infty) \to \calm_+(\C^\infty)$$ 
on the holomorphic part. {Its composition  ${\rm ev}_0\circ \pi_+^Q$ with the evaluation at zero ${\rm ev}_0:\calm_+(\C^\infty)\to \C$ on the resulting holomorphic germs, can be viewed as  a {\bf multivariable  minimal substraction scheme}.}

A prototype of this approach {was proposed in} the pioneering work of Speer~\mcite{Sp2,Sp4} on analytic renormalisation in quantum field theory. {This multivariable renormalisation method was later implemented}  for the baby model of Riemann integrals indexed by rooted trees in~\mcite{CGPZ3} {and further discussed in~\mcite{DZ}}. 

Thus while passing from the classical approach of one parameter renormalisation to the locality approach of multiple parameter renormalisation, the focus of analysis is shifted from the source Hopf algebra $\calh$ of $\phi$ to the target algebra $\phi(\calh)$ in $\calm(\C^\infty)$ discussed in this paper.

\subsection{Locality Galois groups and locality Lyndon words}
In practice, we consider locality subalgebras $\phi(\calh)\subset \cala\subset \calm(\C^\infty)$. { A linear} transformation $T$ on $\cala$ 
 induces   another  morphism
\[T\circ \phi: \calh\longrightarrow  \cala\subset\calm(\C^\infty),\]
from which we can again build a map
\[\pi_+^Q\circ T\circ \phi: \calh\longrightarrow  \calm_+(\C^\infty).\]
 
Locality isomorphisms of locality subalgebras $\cala\subset \calm(\C^\infty)$, which restrict to the identity map on $\calm_+(\C^\infty)$ form a group ${\rm Gal}^Q(\cala/\calm_+)$ we call {\bf locality Galois groups}. { It plays the role of a renormalisation group {\cite{CK2,CM}} in relating different renormalisations.}

Evaluating at zero by the map ${\rm ev}_0$, gives rise to a locality character \[{\mathcal E}_T:={\rm ev}_0\circ\pi_+^Q\circ T\circ \phi: \calh\longrightarrow \C,\]
which after Speer, we call  an evaluator, depending on the choice of $T$.

To describe the action of the locality Galois group, we give a careful study of the structure of locality subalgebras of $\calm(\C^\infty)$, defined by prescribed types of linear poles. We focus on Chen type poles which typically arise from multiple zeta functions and on the more general class of \thing fractions which arise in Feynman integrals.
We show that they both form {\bf \qorth polynomial algebras}, with a locality polynomial basis given a locality version of Lyndon words, we call {\bf locality Lyndon words}. For this purpose, we enhance to the locality setup, the realisation of shuffle product algebras as polynomial algebras on Lyndon words in the classical work of Chen-Fox-Lyndon and Radford~\cite{CFL,Ra}. 
As an application, transitivity is established for the action of the locality Galois groups on the generalised evaluators on these two classes of locality subalgebras.

\subsection{Generalised evaluators and locality characters}
In his seminal work~\mcite{Sp1,Sp2,Sp3,Sp4} on analytic renormalisation, Speer gives an axiomatic formulation for the evaluation of regularised quantities, called generalised evaluators
which he applies to spaces of meromorphic germs  in several variables  singled out by the   regularisation step in his study of Feynman integrals. In the final  renormalisation step, he proposes a generalised evaluator defined by averaging over iterated  one dimensional evaluators successively  applied in each variable.

Speer's pioneering multivariable approach nevertheless lacks a covariance  property since it is coordinate dependent. To circumvent this coordinate dependence, we require the evaluator to be multiplicative on products of germs depending on variables which span mutually perpendicular spaces (the dependence subspaces mentioned above) instead of them having disjoint sets of variables as in Speer's work. This paper  provides a covariant counterpart of Speer's  approach in a sound  mathematical framework, with the aim of setting up a general framework to tackle divergences  in various contexts and with the following three goals in mind.

\begin{question} \mlabel{it:sp1} Equip the space of germs arising  in Speer's  and other multiparameter regularisations with appropriate {\bf locality   polynomial algebra} structures;
\end{question}

\begin{question} \mlabel{it:sp2} Interpret Speer's generalised evaluators  as {\bf locality characters}  on the corresponding (locality) algebras, leading to a general concept of {\bf locality generalised evaluator};
\end{question}

\begin{question} \mlabel{it:sp3} Build a  {\bf transformation group} which  relates different locality evaluators.
\end{question}

\subsection {Outline of the paper}

In the coalgebraic approach to renormalisation in one variable \`a la Connes and Kreimer,  one    calls upon an inductive procedure to deal with mutual compensations of divergences among different levels. Instead,  here we want  to avoid the occurence of such compensations  by means of a multiparameter regularisation, which enables us to  regularise each subdivergence in an autonomous way by introducing a different parameter at each level as we go deeper in the subdivergences. This is possible using locality structures, which take care of keeping the different levels separate in requiring the regularisation map to be a locality algebra homomorphism with range in $\calm _\Q$.

We first provide some background in Section~\mref{sec:mero} on the space $\calm _\Q$ of meromorphic germs with linear poles and rational coefficients on the filtered lattice space $(\R ^\infty, \Z ^\infty)$. For a given inner product $Q$ on the underlying space,  a complement of the holomorphic germs $\calm_{\Q+}$ is given by the subspace $\calm^Q_{\Q-}$ of polar germs defined in Eq.~\eqref{eq:polargerm}.
This gives rise to Laurent expansions and various decompositions and invariants (residues) in $\calm _\Q$ (Theorem~\mref{thm:SpaceDec}), which serve as  the  building blocks of our further study.
We further give a detailed study of dependence spaces of meromorphic germs and their decompositions in \S \mref {ss:DepSp} (Theorem \mref{thm:poldep}), in order to define the orthogonality of meromorphic germs.

With the above orthogonality of meromorphic germs at hand, we carry out a careful study of locality algebras in Section~\mref{sec:local}, focusing on subalgebras of $\calm _\Q$. \S \ref {subsec:localgerm} gives a description of the structures of a locality subalgebra $\calm _{\Q+}^Q(\subS )$ of $\calm _\Q$, which contains $\calm_{\Q+}$ and is generated by a set of  fractions $\cals$. The automorphisms of   such a locality subalgebra, which fix the holomorphic germs and preserve residue type invariants of polar germs, are shown to form a group in \S \ref{sussec:AutoLoc} and \S \mref {subsec:LocGal}. Inspired  by Cartier's cosmic Galois group~\mcite{B,C,CM}, we call it the {\bf locality Galois group} of the locality subalgebra (Definition~\mref{defn:Galoisgroup}).
A reduction theorem (Theorem~\mref{thm:phiiso}) of locality Galois groups is obtained, which shows  that there is a subgroup of the locality Galois group that can be described by special automorphisms of the locality subalgebra generated {\it over $\Q$}  by the same set of fractions $\cals$.

To obtain the structure of locality Galois groups, in Section~\mref{sec:locpoly} we first give a locality variant of polynomial algebras (Definition \ref{defn:localg}) and  then extend the polynomial generation of shuffle product algebras by Lyndon words to the locality setting (Theorem \mref{thm:lynloc}). We finally show (Theorem \mref{thm:conefrac}) that the locality subalgebras generated by certain classes of fractions are locality polynomial algebras. These include Chen fractions arising in multiple zeta functions, described in Example \ref{ex:chen} and Speer fractions described in Example \ref{ex:uexam}, named after Speer in acknowledgment of his work on analytic renormalisation~\mcite{Sp2,Sp3,Sp4} (see also~\mcite{BR,DZ}).

Finally in Section~\mref{sec:app}, we apply the developed results to revisit Speer's approach in the locality framework. We first show (Proposition \ref{prop:FeynmanamplSpeer}) that the space spanned by the fractions arising from  Speer's s-families is precisely the space of the aforementioned Speer fractions, and hence it is a locality polynomial algebra. This addresses Problem~\mref{it:sp1}.

As an analog  of Speer's generalised evaluators, in \S\mref{ss:ge} we introduce the notion of \qorth generalised evaluators  (Definition \mref{defn:geneval}) on a locality subalgebra of $\calm_{\QQ}$. They are linear forms $\cale$ extending the usual evaluation ${\rm ev}_0$ at $0$ defined  on holomorphic germs and, in accordance with the locality principle, they are required to obey the following locality multiplicativity:
\[  f_1\perp^Q f_2\Rightarrow \cale(f_1\, f_2)=\cale(f_1)\,  \cale(  f_2).\]
A proptotype is  the  minimal subtraction evaluator
$\mseva \coloneqq {\rm ev}_0\circ  \pi_+^Q$, see
Eq.~\eqref{eq:mseva}.
\\ \noindent
We further show in Proposition \ref{pp:comeva}, that our locality generalised evaluators   satisfy  the conditions required by Speer for  generalised evaluators  modulo a topological requirement (which lies out of the scope of this paper, and is discussed in  \cite{DPS}) and when  adapting the locality relation appropriately. This addresses Problem~\mref{it:sp2}. Along the way, we  discuss the difference between Speer's coordinate dependent generalised evaluator $\cale_\bullet^{\rm iter}$ and our covariant minimal subtraction scheme $\cale^Q_{\rm MS}$.

The locality Galois group naturally acts on  locality generalised evaluators by  composition (Eq.~\eqref{eq:actionGaloisgroup}). When the locality subalgebra is generated by a locality polynomial algebra  of fractions, the action is shown to be transitive (Theorem~\mref{thm:evaluators}). Consequently, every locality generalised evaluator on such a locality polynomial algebra factors through the minimal subtraction evaluator $\cale^Q_{\rm MS}$.
Applying these results to Chen fractions and Speer fractions, we  obtain Corollary \ref{coro:PolyA},  addressing Problem~\mref{it:sp3}.
Finally we show that multiple zeta values naturally give rise to a locality generalised evaluator $\cale^\zeta$ on  the locality polynomial algebra  $\cala^{\rm Chen}:=\calm_{\Q+}^Q(\Pi ^Q(\mti{\calf}^{\rm Ch}))$  of meromorphic germs at zero with Chen type poles (Eq.~\eqref{eq:mzvchar}). Comparing $\cale^\zeta$ with the minimal subtraction evaluator $\cale^Q_{\rm MS}$ gives a natural element in the locality Galois group.

\section {Meromorphic germs with linear poles}
\mlabel{sec:mero}
This section first summarises definitions and results of \mcite{GPZ3} on the space of meromorphic germs with linear poles, and then studies the dependence space of  meromorphic germs and their decompositions.

\subsection{Spaces of meromorphic germs with linear poles}
We work in a {\bf filtered lattice Euclidean space} $(\R ^\infty, \Z ^\infty, Q)$, consisting of
\begin{enumerate}
	\item a {\bf filtered lattice space} $(\R^\infty,\Z^\infty)$ defined by direct limits
$$\R ^\infty \coloneqq \varinjlim \R ^k,\quad  \Z^\infty \coloneqq \varinjlim \Z ^k,
$$
under the standard embeddings $i_k:\R ^k\to \R ^{k+1}$,
\item an {\bf inner product} $Q$ on the filtered lattice space, defined by a family $Q=(Q_k)_{k\geq 1}$ of inner products
$$  Q_k: \R^k\ot \R ^k \to \RR,$$
such that
$$Q_{k+1}|_{\R ^k\times \R ^k}=Q_k, \quad Q_k (\Z ^k\otimes \Z ^k)\subset \Q.$$
\end{enumerate}

For a field $\K$ with $\Q\subset \K\subset \R$, we denote by $\call _\K(\C ^k)=\call _\K (\K ^k\otimes \bbC)$   the space of linear forms on $\bbC ^k$ which take $\K$-values on $\K ^k$.

On the filtered lattice space $(\R ^\infty, \Z^\infty)$, a meromorphic germ $f$ at zero on $\R^k\otimes \C$ is said to be {\bf $\K$-holomorphic} if it is a holomorphic germ at zero whose power series expansion for any dual basis of $\Z ^k$ has coefficients in $\K$, and to
have {\bf  $\K$-linear poles} if there are vectors $L_1, \ldots, L_k\in (\Z^k)^*\ot \K$ (possibly with repetitions) such that
$$f\,\Pi_{i=1}^k L_i$$
is a $\K$-holomorphic germ.

\begin{rk}
	For the rest of the paper, all meromorphic germs are taken to be at zero unless otherwise stated.
\end{rk}

Let $\calm _\K (\bbC ^k)=\calm _\K (\bbR ^k\otimes \bbC)$ (resp. $\calm _{\K +}(\bbC ^k)=\calm _{\K +}(\bbR ^k\otimes \bbC)$) denote the space of meromorphic (resp. $\K$-holomorphic)  germs with $\K$-linear poles.

The inner product $Q$ induces a family of linear bijections
$$Q_k: \K ^k \to (\K ^k)^*, \quad
	u\mapsto Q_k(u, \cdot)
$$
and
$ Q_k^{-1}: (\K^k)^*\to \K^k.$
This gives rise to maps
$$ p_k\coloneqq Q_k^{-1}i_k^*Q_{k+1}:\C^{k+1} \to \C^k.$$
Explicitly, let $W_k$ be the orthogonal complement of $\C^k$ in $\C^{k+1}$. Then the map $p_k$ is the projection of $\C^{k+1}$ to $\C^k$ along $W_k$ and give rise to the directed systems:
$$p_k: \calm _\K (\bbC ^k)\to \calm _\K (\bbC ^{k+1}), \quad  p_k: \calm_{\K +}  (\bbC ^k)\to \calm_{\K +} (\bbC ^{k+1}),
$$
and the direct limits
$$\calm _\K: =\calm _\K (\bbC ^\infty):=\varinjlim\calm _\K (\bbC ^k), \quad \calm _{\K +}:=\calm _{\K +}(\bbC ^\infty):=\varinjlim \calm_{\K +} (\bbC ^k)
$$  of spaces of meromorphic germs with $\K$-linear poles (resp. $\K$-holomorphic germs).

By restriction, we also let
\vspace{-.2cm}	
\begin{equation}\label{eq:Klinearforms}
	\call _\K\coloneqq \call_\K (\C ^\infty):=\varinjlim\call _\K (\C ^k)
\end{equation}
be the direct limit of spaces of $\K$-linear forms. Notice that $Q$ induces an inner product in
$\call _\K (\C ^\infty)$ which we still denote by $Q$.

As in \mcite{GPZ3},   on a filtered lattice Euclidean space $(\R ^\infty, \Z ^\infty, Q)$, we define a {\bf polar germ} in $\bbC ^k$ with $\K$-coefficients to be a germ of meromorphic functions  of the form
\vspace{-.2cm}	
\begin{equation}\frac {h(\ell_1, \ldots , \ell_m)}{L_1^{s_1}\cdots L_n^{s_n}},
\mlabel{eq:polargerm}
\vspace{-.3cm}	
\end{equation}
where
\vspace{-.1cm}	
\begin{enumerate}
\item
$h$ lies in $\calm_{\K+}(\bbC^m)$,
\item
$\ell_1, \ldots, \ell_m, L_1, \ldots ,L_n$  lie in $\call _\K (\C ^k) $, with $L_1, \ldots ,L_n$ linearly independent, such that
$$Q(\ell_i, L_j)=0 \quad \forall i\in \{1, \ldots, m\}, j\in \{1, \ldots, n\},
$$
\item
$m$ is a nonnegative integer and $ n, s_1,\ldots, s_n$
are positive integers.
\end{enumerate}
The convex cone (resp. subspace space) spanned by $L_1,\ldots,L_n$ is called the {\bf supporting cone} (resp. {\bf supporting space}) of the polar germ $f$.
The supporting space, independent of the presentation of the germ in the form of the fraction (see~\cite[Lemma 2.9]{GPZ3}), is denoted by $\supsp(f)$. Recall from \cite[Definition 5.1]{GPZ3} that, for a polar germ $\frac {h(\ell_1, \ldots , \ell_m)}{L_1^{s_1}\cdots L_n^{s_n}}$, the integer
\vspace{-.2cm}	
$$\text{p-ord} \left(\frac {h(\ell_1, \ldots , \ell_m)}{L_1^{s_1}\cdots L_n^{s_n}}\right)\coloneqq |(s_1,\ldots,s_n)|\coloneqq s_1+\cdots +s_n
\vspace{-.1cm}	
$$
is well defined, called the {\bf p-order} of the polar germ.

Let $\calm ^Q_{\K-}(\C ^k)$ denote the linear space spanned by polar germs on $\C ^k$ with $\K$-linear poles. Again we have a directed system
\vspace{-.2cm}	
$$p_k: \calm^Q_{\K-} (\bbC ^k)\to \calm^Q_{\K-} (\bbC ^{k+1})
\vspace{-.2cm}	
$$
and the direct limit
\vspace{-.2cm}	
$$\calm ^Q_{\K-}\coloneqq \calm ^Q_{\K-}(\bbC ^\infty)\coloneqq \varinjlim \calm^Q_{\K-} (\bbC ^k)
\subseteq \calm_\K (\bbC^\infty)
\vspace{-.1cm}	
$$
of polar germs with $\K$-linear poles.

Polar germs split according to their supporting subspaces and p-orders as follows.
\vspace{-.1cm}	
\begin{lemma} \cite[{Lemma 3.5}]{GPZ3}
Suppose
$\sum\limits _{i=1}^r S_i=0
$
for a sum of holomorphic germs and $\K$-polar germs.
For any  linear $\K$-subspace $W$ of $V$ and $N \in \ZZ _{>0}$, we have
$ \sum\limits_i \!^\prime S_i =0, $
where the sum is over the terms $S_i$ with $\supsp(S_i)=W$ and $ \pord (S_i)=N$, with the convention that the sum over an empty set is zero.
\mlabel{lem:SepSpace}
\end{lemma}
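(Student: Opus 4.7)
The plan is to reduce the question to two successive separations: first isolate the polar contributions using the decomposition $\calm_\K = \calm_{\K+} \oplus \calm^Q_{\K-}$ from Theorem~\mref{thm:SpaceDec}, and then refine the polar part by supporting space and finally by p-order. Since each $S_i$ is by hypothesis either holomorphic or polar, and since the condition $N>0$ selects only polar germs, projecting $\sum_i S_i=0$ onto $\calm^Q_{\K-}$ along this direct sum reduces the claim to showing that any linear relation $\sum_i T_i = 0$ among polar germs splits into one identity per pair $(W,N)$.

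For the separation by supporting space, I would induct on the cardinality of the finite collection $\{\supsp(T_i)\}$ of distinct supporting subspaces in the sum. Choose a subspace $W_*$ maximal in this collection, pick a $\K$-rational vector $v\in W_*$ lying outside every $W_i\subsetneq W_*$, and restrict each $T_i$ along the complex line $\C v$. Terms with $\supsp(T_i)=W_*$ contribute a pole at the origin of $\C v$ of order equal to $\pord(T_i)$, whereas terms with $\supsp(T_i)\not\supseteq W_*$ restrict holomorphically there. Extracting the polar part of the restriction isolates $\sum_{\supsp(T_i)=W_*}T_i$, which must therefore vanish; subtracting it from the original relation and applying the induction hypothesis completes the step.

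For the separation by p-order within a fixed supporting space $W$, I would use a scaling argument. Decompose the ambient space as $V=W\oplus W^\perp$ orthogonally with respect to $Q$ and substitute $x\mapsto(\lambda x_W, x_{W^\perp})$ for $\lambda\in\C^*$. For any polar germ $T=h(\ell_1,\ldots,\ell_m)/(L_1^{s_1}\cdots L_n^{s_n})$ with $\supsp(T)=W$, each $L_j$ corresponds under $Q$ to a vector in $W$, so $L_j(x)$ scales by $\lambda$, while each $\ell_i$ corresponds under $Q$ to a vector in $W^\perp$ and is therefore unaffected by the substitution. Hence $T$ becomes $\lambda^{-\pord(T)}T$. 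The resulting identity $\sum_i\lambda^{-\pord(T_i)}T_i=0$ is a Laurent polynomial in $\lambda$ with coefficients in $\calm^Q_{\K-}$, and comparing the $\lambda^{-N}$-coefficients yields the desired refinement.

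The main obstacle is the supporting-space step. The p-order separation is essentially a clean one-variable Laurent argument, made transparent by the orthogonality built into the definition of a polar germ, but distinct supporting subspaces can interact in subtle ways (neither contained in the other, intersecting nontrivially along lower-dimensional flats), so the direction $v$ must be chosen with enough genericity to kill all unwanted terms simultaneously while remaining $\K$-rational. Since $\Q\subseteq\K$ is an infinite field, such a $v$ is generically available inside $W_*$; making this precise within the filtered lattice Euclidean space $(\R^\infty,\Z^\infty,Q)$ and verifying that the restriction map preserves the class of polar germs with $\K$-linear poles is where the real care is needed.
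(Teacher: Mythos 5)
The paper only cites \cite[Lemma 3.5]{GPZ3} and provides no proof, so there is no in-paper argument to compare against. Of your three steps, the p-order separation is sound: once all $T_i$ share supporting space $W$, the scaling $x\mapsto(\lambda x_W,x_{W^\perp})$ rescales each $L_j$ by $\lambda$, fixes each $\ell_i$ (by the orthogonality $Q(\ell_i,L_j)=0$ built into Eq.~\eqref{eq:polargerm}), and the identity becomes a Laurent polynomial in $\lambda$ whose coefficients one compares. The opening move, however, is circular as stated: the direct sum $\calm_\K=\calm_{\K+}\oplus\calm^Q_{\K-}$ of Theorem~\mref{thm:SpaceDec} is \cite[Theorem~5.3]{GPZ3}, whose directness is precisely a consequence of \cite[Lemma~3.5]{GPZ3}. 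This is easy to route around, since holomorphic summands have p-order $0$ and never meet a pair $(W,N)$ with $N>0$, but as written it assumes part of the conclusion.

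The genuine gap is the supporting-space step. Restricting a polar germ $T=h(\ell_1,\ldots,\ell_m)/(L_1^{s_1}\cdots L_n^{s_n})$ to a line $\C v$ gives $T(tv)=h(t\ell_1(v),\ldots,t\ell_m(v))/\bigl(t^{\pord(T)}\prod_j L_j(v)^{s_j}\bigr)$, which for generic $v$ (all $L_j(v)\neq 0$) has a pole of order $\pord(T)$ at $t=0$ whenever $h(0)\neq 0$, \emph{regardless} of any relation between $\supsp(T)$ and $W_*$. Concretely, take $T_1=1/L_1$ with $\supsp(T_1)=\mathrm{span}(L_1)\subsetneq W_*$; for $v\in W_*$ generic we have $L_1(v)\neq0$, so $T_1(tv)=1/(tL_1(v))$ is not holomorphic, contradicting your claim that terms with $\supsp(T_i)\not\supseteq W_*$ restrict holomorphically. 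Restriction to a line detects the p-order of a polar germ but is blind to its supporting space, so the polar part of the one-variable restriction mixes contributions from all the $T_i$ and does not isolate those supported on $W_*$. There is also a secondary issue even if it did: $T_i|_{\C v}$ can be identically zero (when $h_i(0)=0$) although $T_i\neq 0$, so vanishing of the one-variable restriction would not directly yield the multivariable identity $\sum^{\prime}_i T_i=0$. Separating incomparable supporting subspaces is the real substance of this lemma, and it requires a genuinely multivariable mechanism rather than a one-parameter slice.
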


Following \cite[Definition 3.2]{GPZ3},
\begin{enumerate}
\item a family of (convex) cones is called {\bf properly positioned} if every pair of cones in the family intersect along their faces, including the zero dimensional face at $0$, and their union does not contain a straight line;
\item a family of polar germs {\bf properly positioned} if, for each of the polar germs, there is a choice of a supporting cone such that the resulting family of cones is properly positioned;
\item a family of polar germs is called {\bf  projectively} properly positioned if it is properly positioned and none of the denominators of the polar germs is proportional to another.
\end{enumerate}

Here is a useful criterion for the linear independence of polar germs.

 \begin {prop} \cite[Proposition 3.6]{GPZ3} A finite family of polar germs with projectively properly positioned supporting cones is linearly independent.
 \mlabel{pp:proper}
 \end {prop}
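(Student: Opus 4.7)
The plan is to reduce, via Lemma~\ref{lem:SepSpace}, to a family of polar germs sharing a common supporting subspace $W$ and common p-order $N$; to decouple the holomorphic numerators using the $Q$-orthogonal splitting $\bbC^\infty = W \oplus W^\perp$; and then to isolate each germ by a residue-type functional tailored to its own simplicial cone.

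Suppose $\sum_{i=1}^r c_i f_i = 0$ with $f_i = h_i/D_i$. Lemma~\ref{lem:SepSpace} lets me project this relation onto each fixed pair (supporting subspace, p-order), so I may assume every $f_i$ has common supporting space $W$ of dimension $n$ and common p-order $N$. Because $Q(\ell, L) = 0$ for every numerator form $\ell$ and every denominator form $L$ in the presentation of $f_i$, the decomposition $\bbC^\infty = W \oplus W^\perp$ writes each polar germ as $f_i(w, w') = h_i(w')\cdot (1/D_i(w))$, where $h_i$ is a nonzero holomorphic germ on $W^\perp$ and $D_i(w) = \prod_{k=1}^n L_{i,k}(w)^{s_{i,k}}$ with $L_{i,1}, \ldots, L_{i,n}$ a basis of $W$ generating $C_i$. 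Since the $h_i$ are nonzero and depend only on $W^\perp$-variables while the $1/D_i$ depend only on $W$-variables, the problem reduces to linear independence of the rational functions $\{1/D_i\}_{i=1}^r$ on $W$.

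To prove the latter, fix $i_0$ and let $L_1, \ldots, L_n$ be the generators of $C_{i_0}$, viewed as coordinates $z_k = L_k$ on $W$. Define $R_{i_0}$ to be the coefficient-extracting functional that picks out $z_1^{-s_{i_0,1}} \cdots z_n^{-s_{i_0,n}}$ in the Laurent expansion at $0$; by construction $R_{i_0}(1/D_{i_0}) = 1$. For $i \neq i_0$ with $C_i = C_{i_0}$, one may scale so that $D_i = z_1^{s_{i,1}} \cdots z_n^{s_{i,n}}$; the projective condition together with the common p-order forces $(s_{i,k}) \neq (s_{i_0,k})$, so $R_{i_0}(1/D_i) = 0$. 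For $i$ with $C_i \neq C_{i_0}$, the two simplicial, full-dimensional cones in $W$ cannot share all $n$ edges, hence some $L_k = z_k$ is not proportional to any generator $L_{i,j}$ of $C_i$; each $L_{i,j}(z)$ then has a nonzero value at $z_k = 0$ for generic remaining coordinates, so $1/D_i$ is holomorphic in $z_k$ near $z_k = 0$ and its Laurent expansion contains only non-negative powers of $z_k$, giving $R_{i_0}(1/D_i) = 0$. Applying $R_{i_0}$ to the relation yields $c_{i_0} = 0$, and since $i_0$ was arbitrary, the family is linearly independent.

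The principal obstacle is the well-definedness of the iterated Laurent expansion and of the coefficient-extracting functional $R_{i_0}$ when the denominators $D_i$ of other cones are not aligned with the coordinate axes: one must specify an order of iteration and verify that the extracted coefficient is independent of that choice. The cleanest resolution is to invoke the Grothendieck residue paired with the regular sequence $(L_1^{s_{i_0,1}}, \ldots, L_n^{s_{i_0,n}})$, which is intrinsic; the vanishing $R_{i_0}(1/D_i) = 0$ for $C_i \neq C_{i_0}$ then follows because the absence of some coordinate direction $L_k$ from the pole divisor of $1/D_i$ trivializes the relevant local cohomology contribution. The ``same cone, distinct exponents'' case is elementary after clearing denominators by $\prod_k L_k^{N}$, which turns the relation into a polynomial identity among distinct monomials.
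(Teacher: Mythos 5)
Your overall plan --- reduce by Lemma~\ref{lem:SepSpace} to a fixed supporting space $W$ and p-order $N$, decouple numerators via the $Q$-orthogonal splitting $\bbC^\infty=W\oplus W^\perp$, and isolate each simplex fraction with a residue-type functional --- is a reasonable strategy, and your ``same cone, distinct exponents'' case is fine. However, the key step, the construction and vanishing of $R_{i_0}$, has a genuine gap that the appeal to the Grothendieck residue does not close.

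For rational functions of several variables with poles passing through the origin, ``the coefficient of $z_1^{-s_1}\cdots z_n^{-s_n}$ in the Laurent expansion at $0$'' is \emph{not} intrinsic: it depends on the wedge, i.e.\ on the order of iteration of the one-variable Laurent expansions. And this ambiguity is not merely hypothetical for the germs at hand. Take $n=2$, $W$ with $Q$-orthonormal coordinates $z_1,z_2$, $C_{i_0}$ the first quadrant so that $D_{i_0}=z_1^{s_1}z_2^{s_2}$, and $C_i$ the simplicial cone spanned by $e_1$ and $e_1-e_2$, so $D_i=z_1^{t_1}(z_1-z_2)^{t_2}$ with $t_1+t_2=s_1+s_2=N$. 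The pair $\{C_{i_0},C_i\}$ is projectively properly positioned. Expanding first in $z_2$ (the wedge $|z_2|\ll|z_1|$), $1/D_i$ has no negative powers of $z_2$, so the coefficient of $z_1^{-s_1}z_2^{-s_2}$ is $0$ --- this is the case you describe. But expanding first in $z_1$ (the wedge $|z_1|\ll|z_2|$) gives
\begin{equation*}
\frac{1}{D_i}=(-1)^{t_2}\sum_{m\geq 0}\binom{t_2+m-1}{m}\,z_1^{\,m-t_1}\,z_2^{\,-t_2-m},
\end{equation*}
whose coefficient of $z_1^{-s_1}z_2^{-s_2}$ is $(-1)^{t_2}\binom{t_1+t_2-s_1-1}{t_1-s_1}$, which is nonzero whenever $t_1\geq s_1$ (e.g.\ $(s_1,s_2)=(1,2)$, $(t_1,t_2)=(2,1)$). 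So $R_{i_0}$ genuinely depends on the order, and your per-$i$ choice of a ``missing'' coordinate direction $z_k$ does not assemble into a single well-defined functional: the order that kills $1/D_i$ for one cone can fail to kill $1/D_{i'}$ for another cone in the family. The Grothendieck residue does not rescue this: the residue symbol is intrinsic only when paired against a form holomorphic near $0$ (or with polar locus disjoint from $\{0\}$); here $1/D_i$ has poles along hyperplanes through the origin, so the integral over $\{|z_j|=\epsilon_j\}$ depends on the homology class of the torus in the complement of $\{D_i=0\}\cup\{z_1\cdots z_n=0\}$ --- which is exactly the same order ambiguity. To repair the proof, you would need to exhibit a single iteration order valid simultaneously for the whole family, and establish the vanishing for that order using proper positioning of the \emph{entire} collection $\{C_i\}$ (not just of each pair $(C_{i_0},C_i)$); or replace the residue device by a genuinely geometric one, for instance evaluating along rays chosen in the pairwise-disjoint interiors of the supporting cones, in the spirit of the ``evaluation at appropriate arguments'' used elsewhere in~\cite{GPZ3}.
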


\subsection{Laurent expansions and the induced decompositions}
There are several decompositions of meromorphic germs with linear poles. Recall that a convex cone is called {\bf simplicial} if it is spanned by a set of linearly independent vectors.
\begin {prop} \mlabel{pp:lau}
\cite[Theorem 4.13.]{GPZ3} For any  $f$  in $\calm _\K $, there exist a properly positioned family $\conefamilyc$ of simplicial cones together with a family of $\K$-polar germs   $\{S_j\}_{j\in J}$ supported on $\conefamilyc$ $($in the sense that a supporting cone of each $S_j$ is in $\conefamilyc)$,  and a holomorphic germ $h$, such that
\begin{equation}\mlabel{eq:decgerm}
	f= \sum _{j\in J} S_j+ h.
\end{equation}
\end{prop}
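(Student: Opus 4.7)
The plan is to build the decomposition in three stages: (i) reduce to polar germs with simplicial supporting cones via partial fractions, (ii) put each polar summand into the canonical form of Eq.~\eqref{eq:polargerm} by extracting a holomorphic remainder using the $Q$-orthogonal splitting of the ambient space, and (iii) refine the resulting collection of simplicial cones so that the family is properly positioned.

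For stage (i), I would write $f = g/(L_1^{s_1} \cdots L_n^{s_n})$ with $g \in \calm_{\K+}$, and if the $L_i$'s fail to be linearly independent, exploit a positive relation $L_0 = L_1 + L_2$ among them via the partial-fraction identity
\begin{equation*}
\frac{1}{L_1 L_2} = \frac{1}{L_0}\left(\frac{1}{L_1} + \frac{1}{L_2}\right),
\end{equation*}
together with its higher-multiplicity analogues, to split $f$ into summands whose denominator factors are linearly independent; geometrically, this triangulates the cone $\cone{L_1, \ldots, L_n}$ from a chosen interior ray into simplicial subcones. For stage (ii), once $L_1, \ldots, L_k$ are linearly independent in a given summand I set $W = \mathrm{span}(L_1, \ldots, L_k)$, choose a $Q$-orthogonal family $\ell_1, \ldots, \ell_m$ of linear forms spanning $W^\perp$, decompose any extra forms in the numerator into their $W$-parallel and $W^\perp$-parts (the parallel parts becoming $\K$-linear combinations of $L_1, \ldots, L_k$, the perpendicular parts being absorbed into the $\ell_j$'s), and then Taylor-expand so that the tail divisible by $\prod_i L_i^{s_i}$ contributes to the holomorphic remainder, while the remaining finitely many terms — after cancelling any positive powers of $L_i$ in their numerators against the denominator — take the canonical polar form of Eq.~\eqref{eq:polargerm}. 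Finally, for stage (iii), I would take a common simplicial refinement $\conefamilyc$ of the finite collection of supporting cones produced so far (existence being a standard fact from the theory of simplicial fans) and run one more round of partial fractions to express each polar summand on the cones of $\conefamilyc$.

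The main obstacle will be the termination of stage (i) when the multiplicities $s_i$ exceed $1$: iterated partial fractions with higher-order poles generate intermediate terms whose p-orders do not immediately decrease, so one needs a careful induction, for instance on the lexicographic pair $(n - \dim\,\mathrm{span}(L_i),\, \pord)$, checking that each step strictly reduces the first coordinate (the number of dependent relations) even when it may preserve or increase the p-order of individual summands. Along the way, Lemma~\ref{lem:SepSpace} and Proposition~\ref{pp:proper} should be invoked to keep control of linear independence and to safely aggregate summands with coinciding supports without information loss.
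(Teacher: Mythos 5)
The paper itself does not prove this proposition; it simply cites \cite[Theorem~4.13]{GPZ3}, so there is no in-paper argument to compare against, and your proposal should be judged on its own terms. Your three-stage plan (iterated partial fractions towards simplicial denominators, $Q$-orthogonal extraction of a holomorphic remainder, and a refinement for proper positioning) is the natural route, but the termination argument you give for stage~(i) is incorrect. Take $L_3 = L_1 + L_2$ and the germ $\tfrac{1}{L_1^2 L_2 L_3}$. Applying your identity to the factor $\tfrac{1}{L_1 L_2}$ with pivot $L_3$ gives
\[
\frac{1}{L_1^2 L_2 L_3} \;=\; \frac{1}{L_1^2 L_3^2} \;+\; \frac{1}{L_1 L_2 L_3^2}.
\]
The second summand still carries all three dependent forms, so $n - \dim\,\mathrm{span}(L_i)$ is still $1$, and its p-order is still $4$: your lexicographic pair is unchanged, and the claim that ``each step strictly reduces the first coordinate'' fails as soon as multiplicities exceed~$1$. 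The process does terminate, but for a different reason: with a normalised relation $L_{i_0}=\sum_{i\neq i_0} a_i L_i$ ($a_i\ge 0$) and pivot $L_{i_0}$ fixed, it is the total multiplicity $\sum_{i\neq i_0,\,a_i\neq 0} s_i$ of the non-pivot forms in that relation which drops by exactly one at each application, and once one of them reaches $0$ that dependency is gone. You should replace your measure by one of this kind, or, more cleanly, fix a simplicial triangulation of $\langle L_1,\ldots,L_n\rangle$ with the $L_i$'s among its rays \emph{before} doing any partial fractions, so the target cones, and hence termination, are prescribed from the start.

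Two smaller points. A ``positive relation'' $L_0 = L_1 + L_2$ is not available for an arbitrary linear dependency: from $L_1 + L_2 + L_3 = 0$ one must first rescale and choose which form plays the role of the interior ray, and when $\langle L_1,\ldots,L_n\rangle$ contains a line a preliminary splitting is needed --- this is exactly the degeneracy that the ``union contains no straight line'' clause in the definition of properly positioned is guarding against. Also, your stage~(iii) of taking a common refinement followed by ``one more round of partial fractions'' risks undoing the canonical polar form produced in stage~(ii); the safer ordering is to fix a single properly positioned triangulation at the outset (merging (i) and (iii)) and only then carry out the $Q$-orthogonal holomorphic extraction block by block, invoking Lemma~\ref{lem:SepSpace} and Proposition~\ref{pp:proper} at the end to aggregate the polar contributions consistently.
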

Eq.~\meqref{eq:decgerm} is called a {\bf Laurent expansion} of $f$ supported on $\conefamilyc$ and it  is unique up to subdivisions of the properly positioned family of simplicial cones.

For $p\in \Z_{\geq 0}$, $d\in \Z_{\geq 0}$  and  a finite dimensional $\K $-subspace $U\subset \R ^\infty$, let
\begin{itemize}
	\item $\calm_{\K}^p$ denote the linear span of $\K$-polar germs of p-order $p$;
	\item   $\calm_{\K,d}$ denote the linear span of $\K$-polar germs whose supporting cones have dimension $d$;
\item  $\calm_{\K ,U}$ denote the linear span of $\K$-polar germs with supporting space $U$.
\end{itemize}

\begin{theorem}\mlabel{thm:SpaceDec}
	\cite[Theorem~5.3]{GPZ3} We have the decompositions
{\small
\begin{eqnarray}
{\calm} _{\K}&=&\bigoplus_{p\geq 0 } {\calm} _{\K}^p,\mlabel{eq:pgrad}\\
{\calm} _{\K }&=&\bigoplus_{d\geq 0} {\calm} _{\K, d},\mlabel{eq:dgrad}\\
{\calm} _{\K }&=&\bigoplus_{U \subset \R^\infty } {\calm} _{\K, U}.\mlabel{eq:ugrad}
\end{eqnarray}
}
In particular, there is a decomposition $($see also \mcite{BV1}$)$:
\[\calm_{\K}= \calm_{\K+}\oplus \calm ^Q_{\K-}.\]
\end{theorem}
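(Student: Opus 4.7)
The plan is to prove the three graded decompositions in parallel, using Proposition~\ref{pp:lau} for existence (spanning) and Lemma~\ref{lem:SepSpace} for directness, and then deduce the splitting $\calm_\K = \calm_{\K+} \oplus \calm^Q_{\K-}$ as an immediate corollary of \eqref{eq:pgrad}. The key observation throughout is that each polar germ carries three canonical invariants --- its p-order, the dimension of its supporting cone, and its supporting space --- and the three decompositions simply record the grading by each of these invariants.

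For spanning, given any $f\in\calm_\K$, Proposition~\ref{pp:lau} supplies a Laurent expansion $f=\sum_{j\in J}S_j+h$ with polar germs $S_j$ supported on a properly positioned family of simplicial cones and a holomorphic remainder $h$. Grouping the $S_j$ by their p-order exhibits $f$ as a finite sum of elements of the $\calm_\K^p$ (with $h\in\calm_\K^0$ under the convention $\calm_\K^0=\calm_{\K+}$, since every polar germ has p-order $\geq 1$). Grouping instead by dimension of supporting cone, respectively by supporting space, gives the analogous containment in $\sum_d\calm_{\K,d}$ and $\sum_U\calm_{\K,U}$, with $h$ falling into the $d=0$, respectively $U=\{0\}$, summand.

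For directness, suppose $\sum_\alpha f_\alpha=0$ with $f_\alpha$ in distinct summands of one of the three decompositions. Expand each $f_\alpha$ as a linear combination of polar germs and a holomorphic piece, so that the whole identity has the form $\sum_i S_i=0$ required by Lemma~\ref{lem:SepSpace}. For each pair $(W,N)$ the lemma yields that the subsum of terms with supporting space $W$ and p-order $N$ vanishes. To conclude directness of \eqref{eq:pgrad}, fix $N$ and sum these identities over all $W$: the result is that the total of all p-order-$N$ terms is zero, hence $f_N=0$; the holomorphic piece then also vanishes. Directness of \eqref{eq:ugrad} is dual (fix $W$ and sum over $N$), and directness of \eqref{eq:dgrad} follows from \eqref{eq:ugrad} by the obvious refinement $\calm_{\K,d}=\bigoplus_{\dim U=d}\calm_{\K,U}$. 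Finally, \eqref{eq:pgrad} collapses to $\calm_\K=\calm_{\K+}\oplus\calm^Q_{\K-}$ upon identifying $\calm_\K^0=\calm_{\K+}$ and $\bigoplus_{p\geq 1}\calm_\K^p=\calm^Q_{\K-}$.

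The substantive content is entirely absorbed into Lemma~\ref{lem:SepSpace}, whose application here is routine; the obstacle, already dealt with in \mcite{GPZ3}, is that lemma's dependence on the linear independence criterion of Proposition~\ref{pp:proper} for projectively properly positioned supporting cones, which is what permits one to separate polar contributions of distinct supporting spaces and p-orders. Granting Lemma~\ref{lem:SepSpace}, all three decompositions are proved by the same bookkeeping argument, with only the choice of invariant changing.
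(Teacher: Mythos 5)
This theorem is imported from \cite[Theorem~5.3]{GPZ3} and not proved in the present paper, so there is no internal proof to compare your argument against. Your reconstruction from Proposition~\ref{pp:lau} (spanning) and Lemma~\ref{lem:SepSpace} (directness) is correct: fixing one invariant and summing the vanishing $(W,N)$-subsums over the other recovers each graded component, the refinement $\calm_{\K,d}=\bigoplus_{\dim U=d}\calm_{\K,U}$ yields \eqref{eq:dgrad} from \eqref{eq:ugrad}, and the final collapse to $\calm_{\K+}\oplus\calm^Q_{\K-}$ is immediate. The one convention needed (and you rightly invoke it) is that $\calm_\K^0$, $\calm_{\K,0}$ and $\calm_{\K,\{0\}}$ are to be read as $\calm_{\K+}$, since by the definition in Eq.~\eqref{eq:polargerm} every polar germ has $n\geq 1$, hence p-order $\geq 1$ and a nontrivial supporting space, so the literal definitions would leave the holomorphic summand uncovered.
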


We give further notations.
\begin{enumerate}
	\item Corresponding to  the decomposition in Eq.~\meqref{eq:pgrad}, let $q$ be the highest p-order of the polar germs in a (thus every) Laurent expansion of $f$. Define the
	{\bf p-residue}  of $ f$~\cite[Definition 6.1]{GPZ3} by
	\begin{equation}
	{\rm  pRes} (f)\, {:}=\sum_{\pord(S_i)=q}  \frac {h_i(0)}{\vec L_i^{\vec s_i}}.
	\mlabel{eq:plt}
	\end{equation}
	\item Corresponding to  the decomposition in Eq.~\meqref{eq:dgrad}, let $e$  be the largest   among the dimensions of the supporting spaces of the polar germs in a (thus every) Laurent expansion of $f$. Define the
	{\bf d-residue} of $ f$ by
	\begin{equation}
	{\rm  dRes} (f)\, {:}=\sum_{{\rm dim} (\supsp(S_i))=e}  \frac {h_i (0)}{\vec L_i^{\vec s_i}}.
	\label{eq:dlt}
	\end{equation}
\end{enumerate}

\begin{rk} As proved in  \cite[Proposition 6.2]{GPZ3}, the p-residue of a meromorphic germ with linear poles depends neither on the choice of a Laurent expansion nor on the choice of the inner product. The d-residue does not depend on the choice of a Laurent expansion, but it does depend on the choice of the inner product.
\end{rk}

\subsection {Dependence subspaces}
\mlabel{ss:DepSp}
	
For any subset $U$ of $\calm_\Q$, let
$\Q U$ denote the $\Q$-subspace of $\calm_\K$ spanned by $U$.

A {\bf simplex fraction} is a fraction of the form $ \frac {1}{L_1^{s_1}\cdots L_k^{s_k}}$, where $L_1, \ldots, L_k \in \call _\Q $ are linearly independent and $s_i\in \Z _{>0}$, $i=1, \ldots, k$.
Let $\calf $ be the set of all simplex fractions over $\Q$.
Then for any inner product $Q$ in $(\R ^\infty, \Z ^\infty)$, we trivially have $\Q \calf \subset \calm _{\Q-}^Q.$

\begin{ex}
\mlabel{ex:chen}
In the Euclidean filtered lattice space $(\R ^\infty, \Z ^\infty, Q)$, let $\mathcal{B}\coloneqq (e_i)_{i\in \Z_{>0}}$ be an orthonormal basis.
Let $z_i$ be the coordinate function corresponding to $e_i$.  A fraction of the form
\begin{equation}\label{eq:chenfrac}\frakf\pfpair{s_1,\ldots,s_k}{{u_1},\ldots,{u_k}}\coloneqq \frac 1{z_{u_1}^{s_1}(z_{u_1}+z_{u_2})^{s_2}\cdots (z_{u_1}+z_{u_2}+\cdots+z_{u_k})^{s_k}}\,, \   u_i, s_i\in \Z_{>0},k\in \N, u_i\neq u_j  \, {\rm if}\ i\neq j,
\end{equation}
 is called a {\bf Chen fraction}.
The set of Chen fractions is denoted by
$$\calf^{\rm Ch}\coloneqq \calf^{{\rm Ch}, Q,\mathcal{B}}.$$
\end{ex}

We borrow the following definitions from \mcite{GPZ3}.
 A  meromorphic function $f$ on $\C ^k$ of the form $f=g(L_1, \ldots, L_n)$,  where $L_1, \ldots , L_n$ are linear forms on $\C ^k$ and $g$ a meromorphic function on $\C ^n$,  is said to {\bf depend on} the linear subspace of $(\C ^k)^*$ spanned by $L_1, \ldots, L_n$.
 One can check that if $f$ depends on $V_1$ and $V_2$, then it depends on $V_1\cap V_2$. Thus it makes sense to set the following definition.
\begin{defn} The {\bf dependence subspace} ${\rm Dep}(f)$ of $f$ is the smallest linear subspace of $(\C^k)^*$ on which $f$ depends.
	\end{defn}

\begin{ex} Let $(e_1, e_2, \ldots)$ be an orthonormal basis of $(\R ^\infty, \Z ^\infty, Q)$.
$${\rm Dep }\left(\frac 1{z_1(z_1+z_2)}+\frac 1{z_2(z_1+z_2)}-\frac {2}{z_1(z_1+2z_2)}-\frac 1{z_2(z_1+2z_2)}+\frac 1{z_3}\right)=\{e_3\},
 $$
since the sum of the first four terms is zero.
\end{ex}

\begin{defn} \mlabel{defn:depspace}
Two meromorphic germs $f$ and $g$ in $\calm_\Q$ are called {\bf $Q$-orthogonal}, which we write $f\perpq g$, if their dependence subspaces are  orthogonal.
\end{defn}

\begin{ex} \begin{enumerate}
\item 	
 Let $(e_1, e_2, \ldots)$ be an orthonormal basis of $(\R ^\infty, \Z ^\infty, Q)$. We have
\vspace{-.1cm}	
$$\frac{1}{z_1+z_2}  \, \perpq \, (z_1-z_2).
\vspace{-.1cm}	
$$
\item Polar germs are precisely germs of the form
$h/M$
for $h $ in $ \calm_{\Q +}(\bbC^\infty)$ and $M$ given by products of powers of linearly independent linear forms, such that $h \perp^Q M$.
\end{enumerate}
\end{ex}
An element $f\in \calm_\Q$ is of the form $f=\frac{h}{\ell_1 \cdots \ell_r}$ for a holomorphic germ $h$ and linear forms $\ell_1,\ldots,\ell_r$. The next lemma shows that the factors in the fraction can be chosen to have their dependence subspaces contained in the dependence subspace of $f$.
\begin{lemma}
\mlabel{lem:depsupp}
For any $f$ in $ \calm_\Q$, there are linear forms $\ell_i=\ell_i(L_1,\ldots,L_n), i=1,\ldots,p,$ and a holomorphic germ $h=h(L_1,\ldots,L_n)$ for a basis $L_1,\ldots,L_n$ of ${\rm Dep}(f)$, such that
$\displaystyle{f=\frac{h}{\ell_1\cdots\ell_p}}.$
\end{lemma}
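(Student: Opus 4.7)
The plan is to identify, inside the ambient coordinates, the directions $f$ actually depends on, and then prove algebraically that every denominator factor can be chosen in those directions. Fix a representative of $f$ in $\calm_\Q(\C^k)$ for some $k$. Let $V \coloneqq {\rm Dep}(f)\subseteq (\C^k)^*$, choose a basis $L_1,\ldots,L_n$ of $V$, and extend it to a basis $L_1,\ldots,L_k$ of $(\C^k)^*$. Using the $L_i$ as linear coordinates on $\C^k$, the definition of ${\rm Dep}(f)$ produces a meromorphic germ $g$ at $0$ on $\C^n$ with $f(L_1,\ldots,L_k)=g(L_1,\ldots,L_n)$. It then suffices to present $g$ as $h/(\ell_1\cdots \ell_p)$ with $h\in \calm_{\Q+}(\C^n)$ and $\ell_j\in \call_\Q(\C^n)$, since substituting back the $L_i$ yields the asserted presentation of $f$.

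Next, I would pass to a reduced presentation of $f$. From the assumption that $f$ has $\Q$-linear poles there exists a writing $f = H/\prod_i M_i^{s_i}$ with $H \in \calm_{\Q+}(\C^k)$, pairwise non-proportional $M_i \in \call_\Q(\C^k)$, and exponents $s_i > 0$. Each $M_i$, being a nonzero linear form, is a prime element of the UFD $\C\{L_1,\ldots,L_k\}$ of convergent power series at $0$; cancelling any shared factor of $H$ with an $M_i$ (which preserves $\Q$-coefficients, since the quotient of a $\Q$-power series by a $\Q$-linear form remains $\Q$-rational) allows me to assume $\gcd(H,M_i)=1$ for every $i$. In this reduced presentation the irreducible components of the polar locus of $f$ at $0$ are exactly the hyperplanes $\{M_i=0\}$.

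The key step, and the main obstacle, is to prove that each $M_i$ actually lies in $V$. Invoking the coordinate description $f(L_1,\ldots,L_k)=g(L_1,\ldots,L_n)$, the polar locus of $f$ at $0$ equals $P\times \C^{k-n}$ under the splitting $\C^k\cong \C^n\oplus \C^{k-n}$ induced by the chosen coordinates, where $P$ is the polar locus of $g$ at $0$. Matching the irreducible analytic components coming from the two descriptions forces each hyperplane $\{M_i=0\}$ to coincide with a set of the form $Q\times \C^{k-n}$ for an irreducible hypersurface $Q\subset \C^n$; equivalently, $M_i$ does not involve $L_{n+1},\ldots,L_k$, so $M_i\in V$. (If $g$ is holomorphic the statement is trivial with $p=0$ and $h=f$.) It remains to write each $M_i = \ell_i(L_1,\ldots,L_n)$, observe that $H = f \cdot \prod_i M_i^{s_i}$ then depends only on $V$ and hence equals $h(L_1,\ldots,L_n)$ for some $h\in \calm_{\Q+}(\C^n)$, and list each $\ell_i$ with multiplicity $s_i$ (giving $p = \sum_i s_i$) to arrive at the desired form.
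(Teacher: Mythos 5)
Your proof is correct, but it takes a genuinely different route from the paper's. The paper extends $L_1,\ldots,L_n$ to a full basis $L_1,\ldots,L_k$, starts from an arbitrary presentation $\ell_1\cdots\ell_m f\in\calm_{\Q+}(\C^k)$, splits the $\ell_i$ into those lying in $\mathrm{Dep}(f)$ and those not, and then \emph{specialises} the extra coordinates $L_{n+1},\ldots,L_k$ to a small generic tuple $(a_{n+1},\ldots,a_k)$ so that each remaining $\ell_i$ becomes an affine function $\bell_i$ with $\bell_i(0)\neq 0$; these factors $\bell_i$ are then invertible as holomorphic germs and get absorbed into the numerator. You instead \emph{reduce} the fraction to lowest terms in the UFD $\C\{L_1,\ldots,L_k\}$ and then show, by comparing the polar locus in the two presentations $f=H/\prod_i M_i^{s_i}$ and $f=g\circ\pi$ (where $\pi$ is the projection onto $\mathrm{Dep}(f)$), that every prime factor $M_i$ of the reduced denominator must vanish on $\pi^{-1}(0)=\{0\}\times\C^{k-n}$ and hence lie in $\mathrm{Dep}(f)$.

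Both arguments are sound. The paper's is more elementary and self-contained, needing only a generic choice of small constants (rational, if one wants $\tilde h\in\calm_{\Q+}$); the minor price is that the holomorphic germ obtained is only canonical up to the choice of $a$. Your argument is more structural: it produces the reduced denominator intrinsically and identifies the irreducible components of the polar germ, at the cost of invoking a few facts you should at least flag explicitly --- that linear forms are prime in $\C\{L_1,\ldots,L_k\}$ and that reduction to coprime numerator and denominator is possible over $\Q$ (the Weierstrass-division step showing $H/M_i\in\calm_{\Q+}$ when $M_i\mid H$), that the polar set of a meromorphic germ is an analytic germ of pure codimension one equal to the zero set of the reduced denominator, that $f=g\circ\pi$ implies the polar germ of $f$ equals $\pi^{-1}$ of the polar germ of $g$, and that the product of an irreducible analytic germ with a factor $\C^{k-n}$ is again irreducible (so the component-matching is legitimate). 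Since all of these are standard, the argument is complete; it is a clean alternative that avoids any choice of specialisation point, and arguably gives a slightly stronger conclusion in passing, namely that the denominator in a \emph{reduced} presentation of $f$ already has all its linear factors inside $\mathrm{Dep}(f)$.
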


\begin{proof}
Let $f$ be in $\calm_\Q(\C^k)$ for some $k\geq 1$.
We extend a basis $L_1, \ldots, L_n$ of ${\rm Dep}(f)$ to a basis  $L_1, \ldots, L_n, \ldots, L_k$ of $(\C ^k)^*$. Since $f$ is in $\calm_\Q(\C^k)$, there are linear combinations $\ell_1, \ldots, \ell _m$ of $L_1, \ldots, L_k$ such that the product
$ \ell _1\cdots \ell_m\, f$ is in $\calm _{\Q+}(\C^k),$
that is,
\begin{equation}\label{eq:lghol}\ell _1\cdots  \ell_m\,f(L_1, \ldots, L_n)\in \calm _{\Q+}(\C^k).
\end{equation}
By rearrangement, we can assume that $\ell_1, \ldots, \ell_p$ are linear combinations of $L_1, \ldots, L_n$ only; while $\ell_{p+1}, \ldots, \ell_m$ have nontrivial linear contributions from the extra linear forms $L_{n+1}, \ldots, L_k$.
\nc{\bell}{\lambda}
Then we can choose a tuple $ (a_{n+1}, \ldots , a_k)\in \C^{k-n}$ such that the maps
$$(L_1, \ldots, L_n)\mapsto \bell_i(L_1, \ldots, L_n)\coloneqq \ell_i(L_1, \ldots, L_n, a_{n+1}, \ldots , a_k),\quad i=p+1, \ldots, m, $$
are affine with $\bell_i(0,\ldots,0)\neq 0$.
Consequently, the maps  $(L_1, \ldots, L_n)\longmapsto \frac 1{\bell_i(L_1, \ldots, L_n)} $ are holomorphic germs.

Thus setting $L_{n+1}=a_{n+1}, \ldots, L_k=a_k$ in Eq.~\meqref{eq:lghol} yields  a holomorphic germ
$$h:(L_1, \ldots, L_n)\longmapsto  \ell_1\cdots \ell_p\,\bell_{p+1}\cdots \bell_m\,  f, $$ from which we define another holomorphic germ
\vspace{-.2cm}	
$$\tilde h(L_1, \ldots, L_n)\coloneqq \frac{h(L_1, \ldots, L_n)}{  \bell_{p+1}\cdots \bell_m}.
\vspace{-.2cm}	
$$
Hence,
\vspace{-.2cm}	
$$f=f(L_1, \ldots, L_n)=\frac{h(L_1,\ldots,L_n)}{\ell_1\cdots \ell_p\bell_{p+1}\cdots \bell_m}=\frac{\tilde h(L_1, \ldots, L_n)}{\ell_1(L_1, \ldots, L_n)\cdots \ell_p(L_1, \ldots, L_n)}
\vspace{-.1cm}	
$$
is of the desired form.
\end{proof}

\begin{theorem}
\mlabel{thm:poldep}
Write a germ $f$ in $ \calm_\Q$ according to the decomposition in Eq.~\eqref{eq:ugrad}$:$
\begin{equation}\label{eq:decf}f=\sum_{U\in \calu} f_U+f_0,\end{equation}
where  $\calu$ is a finite set of nonzero finite-dimensional subspaces of $\RR^\infty$,  $0\neq f_U $ is a sum of polar germs   with  supporting space $U$ and $f_0$ lies in  $ \calm_{\Q +}$.
We have
$${\rm Dep}(f)= \sum_{U\in \calu} {\rm Dep}(f_U)+ {\rm Dep}(f_0).
$$
\end{theorem}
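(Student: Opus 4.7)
The plan is to prove the two inclusions separately. The forward inclusion $\mathrm{Dep}(f) \subseteq \sum_{U \in \calu} \mathrm{Dep}(f_U) + \mathrm{Dep}(f_0)$ is immediate from the definition of the dependence subspace: each summand in Eq.~\eqref{eq:decf} depends on its own dependence subspace, so $f$ itself depends on $\sum_U \mathrm{Dep}(f_U) + \mathrm{Dep}(f_0)$, and $\mathrm{Dep}(f)$ is by definition the smallest subspace on which $f$ depends.

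For the reverse inclusion, set $V \coloneqq \mathrm{Dep}(f)$ and fix a basis $L_1,\ldots,L_n$ of $V$. The first step is to invoke Lemma~\ref{lem:depsupp} to write $f = h/(\ell_1 \cdots \ell_p)$, where $h$ is a holomorphic germ in the variables $L_1,\ldots,L_n$ and each linear form $\ell_i$ lies in $V$. Viewing $f$ as a meromorphic germ on $V\otimes\C$, apply Proposition~\ref{pp:lau} with the restricted inner product $Q|_V$ to obtain a Laurent expansion
$$f = \sum_{j \in J} S_j + h',$$
where the $S_j$ are polar germs with supporting cones lying in $V$ and $h'$ is a holomorphic germ depending only on $L_1,\ldots,L_n$. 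Since $Q|_V$ agrees with $Q$ on $V \times V$, each $S_j$ is also a polar germ in the ambient $\calm_\Q$ with supporting space contained in $V$, and each $S_j$ as well as $h'$ depends only on $V$.

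Grouping the $S_j$'s by their supporting spaces presents $f$ in the form prescribed by the decomposition Eq.~\eqref{eq:ugrad}. By the uniqueness of this decomposition (Theorem~\ref{thm:SpaceDec}), it must coincide termwise with $f = \sum_{U \in \calu} f_U + f_0$: each $f_U$ equals $\sum_{\supsp(S_j) = U} S_j$, and $f_0 = h'$. Since every $S_j$ and $h'$ depends only on $V$, so does each $f_U$ and $f_0$, whence $\mathrm{Dep}(f_U) \subseteq V$ for every $U \in \calu$ and $\mathrm{Dep}(f_0) \subseteq V$. Summing yields the reverse inclusion.

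The main delicate point is the compatibility between the Laurent expansion performed inside $V$ with the inner product $Q|_V$ and the structure of $\calm_\Q$ in the ambient space: one must verify that numerators which are $Q|_V$-orthogonal to denominators in $V$ are automatically $Q$-orthogonal in the ambient space, so that the polar germs produced by Proposition~\ref{pp:lau} remain polar germs in $\calm_\Q(\C^\infty)$. Once this compatibility is established, the uniqueness of the supporting-space decomposition in Theorem~\ref{thm:SpaceDec} is what allows the ``depends only on $V$'' property of the constructed Laurent expansion to transfer back to the $f_U$'s and $f_0$ appearing in the given decomposition of $f$.
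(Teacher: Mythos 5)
Your proof is correct and follows essentially the same route as the paper's: both establish the easy forward inclusion, then invoke Lemma~\ref{lem:depsupp} to realise $f$ as a germ depending only on $\mathrm{Dep}(f)$, take a Laurent expansion inside $\mathrm{Dep}(f)$ (the paper cites \cite[Theorem~2.11]{GPZ3} rather than Proposition~\ref{pp:lau}, but the effect is the same), and then compare with the given decomposition via the uniqueness of the direct-sum decomposition by supporting space in Theorem~\ref{thm:SpaceDec}. The compatibility point you flag — that $Q|_V$-orthogonality of numerators and denominators lying inside $V$ is the same as $Q$-orthogonality — is the only technicality, and you resolve it correctly; the paper leaves it implicit.
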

\begin {proof} Since $f=\sum f_U+f_0$, clearly we have
$${\rm Dep}(f)\subset\sum _{U\in \mathcal U} {\rm Dep}(f_U)+{\rm Dep}(f_0).
$$

It remains to show that ${\rm Dep}(f_U)\subset {\rm Dep}(f)$ for all $U\in \mathcal U$  and ${\rm Dep}(f_0)\subset {\rm Dep}(f)$.

By Lemma~\mref{lem:depsupp}, there are linear forms $\ell_1,\ldots,\ell_p$ and a homomorphic germ $h$, all with dependent spaces in ${\rm Dep} (f)$ such that
$$f=\frac{h}{\ell_1\cdots\ell_p}=\frac{h(L_1,\ldots,L_n)}{\ell_1(L_1,\ldots,L_n)\cdots \ell_p(L_1,\ldots,L_n)},$$
for a basis $L_1,\ldots,L_n$ of ${\rm Dep}(f)$.
Then we can take the Laurent expansion of $f$ in ${\rm Dep}(f)$ by~\cite[Theorem~2.11]{GPZ3}.  Thus for all the polar germs in this Laurent expansion of $f$, their linear poles and holomorphic numerators have dependence space in ${\rm Dep}(f)$.
This gives another decomposition
$$ f=g_0+ \sum_{V \subset {\rm Dep}(f)} g_V$$
according to Eq.~\meqref{eq:ugrad}.
Comparing with the decomposition of $f$ in Eq.~(\mref{eq:decf}) as  a sum of a holomorphic germ and polar germs, we have
$$f=f_0+\sum_{U\subset\mathcal U} f_U=g_0 +\sum_{V\subset {\rm Dep}(f) } g_V.
$$
Using the uniqueness of  the decomposition in Eq.~\meqref {eq:ugrad}, we infer  that   for any subspace $U\subset\mathcal U$ (resp. $U=0$), there is a space $V\subset { \rm Dep}(f)$ (resp. $V=0$) such that $f_U=g_V$.
This implies that ${\rm Dep}(f_U)$ is contained in ${\rm Dep}(f)$.
Thus the proof is completed.
\end{proof}

\begin {lemma}\label{lem:DepfU} For a nonzero rational linear combination $f=\sum_{i\in I} \alpha_i S_i\in \calm_\Q(\C^k)$ of simplex fractions $S_i, i\in I,$ with the same supporting space $U$, we have
${\rm Dep}(f)=U.$
\mlabel {lem:fracdep}
\end{lemma}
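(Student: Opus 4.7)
The plan is to prove the two inclusions $\mathrm{Dep}(f)\subseteq U$ and $U\subseteq \mathrm{Dep}(f)$ separately. For the first, each simplex fraction $S_i=1/(L_{i,1}^{s_{i,1}}\cdots L_{i,k_i}^{s_{i,k_i}})$ with supporting space $U$ has its denominator's linear forms $L_{i,1},\ldots,L_{i,k_i}$ forming a basis of $U$, so $\mathrm{Dep}(S_i)\subseteq U$. Since $\mathrm{Dep}(g_1+g_2)\subseteq \mathrm{Dep}(g_1)+\mathrm{Dep}(g_2)$ (any joint presentation of $g_1$ and $g_2$ through linear forms in the right-hand side also presents $g_1+g_2$), applying this inductively to $f=\sum_{i\in I}\alpha_i S_i$ yields $\mathrm{Dep}(f)\subseteq U$.

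For the reverse inclusion, I observe that each $S_i$ is a polar germ with supporting space $U$, so $f \in \calm_{\Q,U}$. I then argue by contradiction: suppose $V := \mathrm{Dep}(f)$ is a proper subspace of $U$. By Lemma~\ref{lem:depsupp}, I can write $f=h/(\ell_1\cdots\ell_p)$ with $h$ and all $\ell_j$ depending only on a basis of $V$. Viewing $f$ as a meromorphic germ whose numerator and denominator live entirely in $V$, I apply Proposition~\ref{pp:lau} in the finite-dimensional lattice subspace whose dual is $V$ to obtain a Laurent expansion $f=\sum_{j\in J}T_j+h'$ in which the supporting simplicial cones of the $T_j$ are spanned by vectors of $V$. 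In particular, every supporting space appearing in this expansion is a subspace of $V$; since $V\subsetneq U$, none of them equals $U$.

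To finish, I would invoke the uniqueness of the direct sum decomposition~\eqref{eq:ugrad} from Theorem~\ref{thm:SpaceDec}. On the one hand, the Laurent expansion above writes $f$ as a sum of elements of $\calm_{\Q,V'}$ for $V'\subsetneq U$ together with a holomorphic piece, so the $\calm_{\Q,U}$-component of $f$ vanishes. On the other hand, $f\in \calm_{\Q,U}$ forces this component to equal $f$ itself. Uniqueness then yields $f=0$, contradicting the hypothesis $f\neq 0$, and hence $V=U$. The main delicate point is justifying that Proposition~\ref{pp:lau}, applied through the presentation $f=h/(\ell_1\cdots\ell_p)$ in which all data lives in $V$, indeed produces polar germs whose supporting cones are contained in $V$; this requires a brief verification of functoriality of the Laurent expansion under the natural embedding of the restricted lattice subspace into $\R^\infty$ (with inner product inherited from $Q$).
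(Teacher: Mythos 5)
Your proof is correct and follows essentially the same route as the paper's: the easy inclusion $\mathrm{Dep}(f)\subseteq U$, then a contradiction argument that rewrites $f$ via Lemma~\ref{lem:depsupp} as a fraction whose data lies entirely in $\mathrm{Dep}(f)$, takes a Laurent expansion there so that every supporting space is a proper subspace of $U$, and invokes the uniqueness of the grading~\eqref{eq:ugrad} to force $f=0$. The only cosmetic difference is that the paper compresses the middle step into a single appeal to Theorem~\ref{thm:poldep} (whose proof is exactly the Lemma~\ref{lem:depsupp}--plus--Laurent-expansion argument you spell out inline), and the paper handles your flagged ``delicate point'' by citing \cite[Theorem~2.11]{GPZ3} for the Laurent expansion inside the subspace $\mathrm{Dep}(f)$, which indeed yields polar germs whose supporting cones lie in $\mathrm{Dep}(f)$ because $Q|_{\mathrm{Dep}(f)}$ is the restriction of $Q$.
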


\begin {proof}  Clearly, ${\rm Dep}(f)\subseteq U.$
Suppose ${\rm Dep}(f)\subsetneq U$.
Theorem~\mref{thm:poldep} gives
$$f=g_0+\sum_{V \subset {\rm Dep}(f)} g_V$$
where  the terms in $g_V$ have supporting space $V$. Hence
$$0=f-\sum_{i\in I}\alpha_i\, S_i=g_0+\sum_{V\subset {\rm Dep}(f) } g_V -\sum_{i\in I}\alpha_i\, S_i.
$$
From ${\rm Dep}(f)\subsetneq U$, we have $V\subsetneq U$ in the above sum. Thus the above sum is the decomposition of $0$ according to the supporting spaces in Eq.~\meqref{eq:ugrad}, in which $\sum_{i\in I} \alpha_i\, S_i$ is the component with supporting space $U$. Thus
$\sum_{i\in I} \alpha_i\, S_i=0$, meaning $f=0$. This is a contradiction.
\end{proof}

This leads to the following statement on sums of polar germs.
\begin{prop} If $f=\sum f_i$ is a nonzero sum of polar germs $f_i$ with the same supporting space $U$, then
$U$ is a subset of $ {\rm Dep}(f).$
\mlabel {prop:polardep}
\end{prop}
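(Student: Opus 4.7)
The plan is to generalise the argument in the proof of Lemma~\ref{lem:fracdep} from simplex fractions to arbitrary polar germs, exploiting the uniqueness of the supporting-space decomposition in Eq.~\eqref{eq:ugrad}. The argument will be by contradiction.

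First I would suppose $U \not\subseteq \mathrm{Dep}(f)$ and aim to force $f=0$. By Lemma~\ref{lem:depsupp}, $f$ can be written as $f = \tilde h/(\tilde\ell_1 \cdots \tilde\ell_p)$ where $\tilde h$ and the linear forms $\tilde\ell_j$ depend only on a chosen basis $L_1, \ldots, L_n$ of $\mathrm{Dep}(f)$. Applying Proposition~\ref{pp:lau} to this presentation --- exactly as in the proof of Theorem~\ref{thm:poldep} --- yields a Laurent expansion of $f$ whose properly positioned simplicial cones live inside $\mathrm{Dep}(f)$, and hence produces a decomposition
\[
f \;=\; g_0 + \sum_{V \subseteq \mathrm{Dep}(f)} g_V
\]
in the sense of Eq.~\eqref{eq:ugrad}, where every polar term $g_V$ has supporting space $V \subseteq \mathrm{Dep}(f)$. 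Since $U \not\subseteq \mathrm{Dep}(f)$, none of these $V$'s can equal $U$, so the $U$-component of $f$ in Eq.~\eqref{eq:ugrad} vanishes.

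On the other hand, the hypothesis writes $f = \sum_i f_i$ as a sum of polar germs each with supporting space $U$, so $f$ lies in $\calm_{\Q,U}$. By uniqueness of the direct sum decomposition in Eq.~\eqref{eq:ugrad}, the $U$-component of $f$ must equal $f$ itself. Combining both observations forces $f = 0$, contradicting the nonvanishing hypothesis. This gives $U \subseteq \mathrm{Dep}(f)$.

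The only non-routine point is verifying that the Laurent expansion of $f$ built from the presentation provided by Lemma~\ref{lem:depsupp} only involves supporting spaces contained in $\mathrm{Dep}(f)$; but this is precisely the observation already used inside the proof of Theorem~\ref{thm:poldep}, since after Lemma~\ref{lem:depsupp} all linear forms appearing in numerator and denominator are drawn from $\mathrm{Dep}(f)^*$, so Proposition~\ref{pp:lau} produces simplicial cones in $\mathrm{Dep}(f)$. Everything else is a direct transcription of the simplex-fraction argument in Lemma~\ref{lem:fracdep}, replacing ``$\sum \alpha_i S_i$ is the $U$-component'' by the more general statement that $\sum_i f_i \in \calm_{\Q,U}$ is its own $U$-component.
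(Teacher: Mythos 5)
Your argument is correct, but it takes a different route from the paper's. The paper proves Proposition~\ref{prop:polardep} by reducing to the simplex-fraction case: writing $f_i=h_iS_i$, it evaluates the holomorphic numerators $h_i$ at a generic point where none of them vanishes, obtains a specialisation $g=\sum_i \alpha_i S_i$ with ${\rm Dep}(g)\subseteq {\rm Dep}(f)$, and then invokes Lemma~\ref{lem:fracdep} to get ${\rm Dep}(g)=U$. You instead bypass the specialisation entirely and re-run, for a general sum of polar germs, the mechanism behind Lemma~\ref{lem:fracdep} and Theorem~\ref{thm:poldep}: Lemma~\ref{lem:depsupp} plus a Laurent expansion performed inside ${\rm Dep}(f)$ produce a decomposition of $f$ whose polar components all have supporting spaces contained in ${\rm Dep}(f)$, while the hypothesis places $f$ in $\calm_{\Q,U}$, so the uniqueness of the direct sum \eqref{eq:ugrad} forces $f=0$ if $U\not\subseteq{\rm Dep}(f)$. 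Both proofs are sound; what each buys is slightly different. The paper's route is shorter given Lemma~\ref{lem:fracdep}, but it rests on a genericity step (one must also ensure the specialised germ $g$ is nonzero before the lemma applies, a point the paper leaves implicit). Your route avoids any evaluation/genericity argument and leans only on the uniqueness of the supporting-space decomposition; in fact it makes Lemma~\ref{lem:fracdep} a special case rather than an ingredient, at the cost of repeating the Laurent-expansion argument already used in Theorem~\ref{thm:poldep}. Your one flagged point--that the expansion built from the presentation of Lemma~\ref{lem:depsupp} only involves supporting spaces inside ${\rm Dep}(f)$--is indeed exactly the observation used in the proof of Theorem~\ref{thm:poldep} (there justified via the Laurent expansion taken within ${\rm Dep}(f)$), so no gap arises, and there is no circularity since Proposition~\ref{prop:polardep} is not used in those earlier results.
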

\begin{proof}
This follows from Lemma \mref{lem:DepfU} after evaluation of the numerators of the polar germs $f_i$ at appropriate arguments in the spirit of the proof of   \cite[Theorem 3.7]{GPZ3}.  Indeed, let us write the polar germs $ f_i=h_i\,  S_i $ where
\vspace{-.2cm}	
$$S_i=\frac {1}{L_{  1}^{s_1}\cdots L_{n}^{s_{n_i}}}, \quad s_j\in \Z_{\geq 0},
\vspace{-.2cm}	
$$
are simplex fractions with the same supporting space  and $h_i$ are holomorphic germs in some common set of variables $\ell_{n+ 1}, \ldots, \ell_{ k}$  which complete the independent linear forms $L_{ 1}, \ldots, L_{ n}$ arising in the $S_i$'s to an orthonormal basis of $\R^k$. Since $f\neq 0$ we can assume without loss of generality that none of the holomorphic germs $h_i$  is  identically zero. Hence, there is   some tuple  $(\ell_{n+ 1}^0, \ldots, \ell_{ k}^0)$ such that $\alpha_i \coloneqq h_i(\ell_{n+ 1}^0, \ldots, \ell_{k}^0)\neq 0$ for all $i$. We write $f=f(L_1, \ldots, L_n, \ell_{n+1}, \ldots, \ell_k) $ and take the specialisation $g(L_1,\ldots,L_n)\coloneqq f(L_1,\ldots,L_n,\ell_{n+1}^0, \ldots, \ell_k^0)$. Then ${\rm Dep}(f)\supset {\rm Dep}(g)$. Applying Lemma \ref {lem:fracdep} to $g= \sum_i h_i(\ell_{n+1}^0, \ldots, \ell_k^0)\, S_i$ with $\alpha_i=h_i(\ell_{n+1}^0, \ldots, \ell_k^0)$, we obtain ${\rm Dep}(g)={\rm Supp}(g)$. This completes the proof.
\end{proof}
The following result shows that without loss of generality, we can assume that a sum $f$ of polar germs with the same supporting space  can be written as a sum of polar germs whose numerators are holomorphic germs with  dependence space in ${\rm Dep}(f)$.

\begin {prop}
\mlabel {pp:RewPG}
Let $ f=\sum  h_i\,S_i $ be a nonzero sum of polar germs  with the same supporting space, where $S_i$ is a simplex fraction and $h_i$ is a holomorphic germ. Then $f$ can be written as a sum $f=\sum \tilde h_i\,S_i$ of polar germs where the  $\tilde h_i$'s are now  holomorphic germs with dependence spaces in ${\rm Dep}(f)$.
\end{prop}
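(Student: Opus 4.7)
The strategy is to remove from each $h_i$ the dependence on variables that lie outside $\mathrm{Dep}(f)$ but inside the orthogonal complement of the common supporting space $U$, by specialising those variables to $0$. Since $f$ is insensitive to them, the resulting expression still equals $f$.

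First, by Proposition~\mref{prop:polardep}, $U\subseteq \mathrm{Dep}(f)$. Using the $Q$-orthogonal decomposition on a large enough finite-dimensional ambient space, set $W\coloneqq \mathrm{Dep}(f)\cap U^\perp$; then $\mathrm{Dep}(f)=U\oplus W$, since any $v\in\mathrm{Dep}(f)$ decomposes as $v=u+w$ with $u\in U\subseteq\mathrm{Dep}(f)$ and $w\in U^\perp$, forcing $w=v-u\in W$. Choose a complement $W'$ in $U^\perp$ with $U^\perp=W\oplus W'$ and pick bases $(L_1,\dots,L_n)$ of $U$, $(w_1,\dots,w_r)$ of $W$, and $(m_1,\dots,m_s)$ of $W'$. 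Because the polar-germ condition forces $\mathrm{Dep}(h_i)\subseteq U^\perp$ for each $i$, we may write $h_i=h_i(w,m)$ and $S_i=S_i(L)$, so $f=f(L,w,m)=\sum_i h_i(w,m)\,S_i(L)$.

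Next, since $\mathrm{Dep}(f)=U\oplus W$, the germ $f$ is independent of the $m$-variables; in particular $f(L,w,m)=f(L,w,0)$ for every $m$ in a neighbourhood of $0$. Define
\[
\tilde h_i(w)\coloneqq h_i(w,0),
\]
which is well-defined as $h_i$ is holomorphic at $0$, and compute
\[
\sum_i \tilde h_i(w)\,S_i(L)=\sum_i h_i(w,0)\,S_i(L)=f(L,w,0)=f(L,w,m),
\]
so the new expression $\sum_i \tilde h_i\,S_i$ equals $f$. By construction $\mathrm{Dep}(\tilde h_i)\subseteq W\subseteq \mathrm{Dep}(f)$, and since $W\subseteq U^\perp$, the orthogonality $\tilde h_i\perpq S_i$ is preserved, so each $\tilde h_i\,S_i$ is again a polar germ with supporting space contained in $U$ (terms for which $\tilde h_i\equiv 0$ are simply dropped).

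The only mildly delicate step is the second one: showing that $f$ genuinely has no dependence on the $W'$-coordinates. This follows from the definition of $\mathrm{Dep}(f)$ as the smallest subspace on which $f$ depends: any expression of $f$ as a function of linear forms spanning $\mathrm{Dep}(f)$ automatically ignores a complementary direction, and since $W'\subseteq \mathrm{Dep}(f)^\perp$ with respect to the chosen coordinates, the partial evaluation at $m=0$ leaves $f$ unchanged. Everything else reduces to bookkeeping of bases and the holomorphicity of specialisations at $0$.
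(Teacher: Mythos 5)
Your proof is correct and takes essentially the same route as the paper's: invoke Proposition~\mref{prop:polardep} to get $U\subseteq {\rm Dep}(f)$, choose a $Q$-orthogonal basis adapted to $U\subseteq {\rm Dep}(f)\subseteq (\C^k)^*$ so that the numerators $h_i$ depend only on variables orthogonal to $U$, and specialise the variables outside ${\rm Dep}(f)$ to zero. Your explicit decomposition ${\rm Dep}(f)=U\oplus W$ with $W={\rm Dep}(f)\cap U^{\perp}$ is just a slightly more detailed phrasing of the paper's basis extension, so no further comparison is needed.
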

\begin {proof}By  Proposition \mref{prop:polardep}, the common supporting space  $U$ lies in $ {\rm Dep}(f)$.  Let $L_1, \ldots , L_n$ be a basis of $U$ which we extend to a basis $L_1, \ldots, L_n, \ell _{n+1}, \ldots, \ell _{m}$ of ${\rm Dep }(f)$ and then further to a basis $L_1, \ldots, L_n, \ell _{n+1}, \ldots, \ell _k$ of $(\C ^k)^*$ with $Q(L_i, \ell _j)=0, 1\leq i\leq n, n+1\leq j\leq k$. Thus, $S_i$ is a simplex fraction in the variables $L_1, \ldots, L_n$ and
$$f=f(L_1,\ldots,L_n,\ell_{n+1},\ldots,\ell_k)=\sum h_i(\ell_{n+1}, \ldots , \ell _{k})\, S_i(L_1, \ldots, L_n).
$$
By the definition of dependence space, $f$ does not depends on $\ell _{m+1}, \ldots, \ell _{k}$, so
$$f=f(L_1,\ldots,L_n,\ell_{n+1},\ldots,\ell_m,0,\ldots,0)=\sum h_i(\ell_{n+1}, \ldots , \ell _{m}, 0, \ldots, 0)\, S_i(L_1, \ldots, L_k)
$$
as announced.
\end{proof}

\section{Locality transformation groups on meromorphic germs}
\mlabel{sec:local}
In this section, we study meromorphic germs with linear poles in the context of locality algebras. We then introduce the locality Galois group defined as a group of automorphisms of meromorphic germs in this locality framework.

\subsection{Locality algebras of meromorphic germs}
\label{subsec:localgerm}
We give general background on locality algebras and then focus on locality subalgebras of meromorphic germs.
\subsubsection{Locality algebras}\label{subsec:localg}
We recall notations on locality structures from~\mcite{CGPZ1}.

\begin{defn}
	A {\bf \loc set} is a  couple $(X, \top)$ where $X$ is a set and
	$$ \top\coloneqq X\times_\top X \subseteq X\times X$$
	is a binary {\em symmetric} relation, called a {\bf locality relation}, on $X$. For $x_1, x_2\in X$,
	denote $x_1\top x_2$ if $(x_1,x_2)\in \top$.
\end{defn}

For a subset $U\subset X$, the {\bf  {polar} subset} of $U$ is
\begin{equation*}
U^\top\coloneqq \{x\in X\,|\, (x,U)\subseteq \top \}.  			
\end{equation*}

For locality sets $(X,\top_X)$ and $(Y,\top_Y)$, a map $f:X\to Y$ is called a {\bf locality map} if
\begin{equation} \mlabel{eq:locmap}
x_1\top_X x_2 \Longrightarrow f(x_1)\top_Y f(x_2), \quad \forall x_1, x_2\in X.
\end{equation}

We give some examples that will be further explored in the sequel.
\begin{ex}
	\begin{enumerate}
\item For any nonempty set $X$, being distinct: $x_1\top x_2$ if $x_1\neq x_2,$ defines a locality relation on $X$;
		\mlabel{it:locex1}
\item The $Q$-orthogonality relation $\perp^Q\subset  \calm_\QQ\times \calm_\QQ$   of Definition \ref{defn:depspace}, turns  $\calm_\QQ$ into a locality set.
\mlabel{it:locex2}
\item Let $(X, \top)=(\ZZ_{>0},\top)$ be the locality set in \meqref{it:locex1} and $(\calm_\QQ,\perp^Q)$ the locality set in \meqref{it:locex2}. With the notation in Example~\mref{ex:chen}, the map
$$ f: X\to \calm_\QQ, \quad
n\mapsto \frakf\pfpair{1}{n}\coloneqq \frac{1}{z_n},\quad  n>1,$$
is a locality map.
\mlabel{it:locex3}
\end{enumerate}
\end{ex}

Other algebraic structures can be generalised to the locality setting.
\begin{defn} \mlabel{defn:lsg}
	\begin{enumerate}
		\item
A {\bf \loc vector space} is a vector space $V$ equipped with a \loc relation $\top$ which is compatible with the linear structure on $V$ in the sense that, for any  subset $X$ of $V$, $X^\top$ is a linear subspace of $V$.
\item A (nonunitary) {\bf \loc  algebra} over $K$ is a \loc vector space $(A,\top)$ over $K$ together with a map
$$ m_A: A\times_\top A \to A,
(u,v)\mapsto  u\cdot v=m_A(x,y) \quad \text{for all } (u,v)\in A\times_\top A$$
satisfying the following variations of the associativity and distributivity.
\begin{enumerate}
	\item[(a)] For $u, v, w\in A$ with $u\top v, u\top w, v\top w$, we have
		\begin{equation}
(u\cdot v) \top w,\quad  u\top (v\cdot w), \quad (u\cdot v) \cdot w = u\cdot (v\cdot w). 	
	\mlabel{eq:asso}
\end{equation}
\item[(b)] For $u, v, w\in A$ with $u\top w, v\top w$ (and hence $(u+v)\top w, w\top (u+v)$), we have
$$ (u+v)\cdot w = u\cdot w + v\cdot w, \quad   w\cdot (u+v) = w\cdot u+w\cdot v, $$
$$ (ku)\cdot w =k(u\cdot w),\quad  u\cdot (kw)=k(u\cdot w), \ k\in K.$$
\end{enumerate}
\item A {\bf unitary \loc algebra} is a \loc algebra $(A,\top, m_A)$ with a unit $1_A$ such that, for each $u\in A$, we have  $1_A\top u$ and
$$1_A\cdot u=u\cdot 1_A=u.$$
We shall omit explicitly  mentioning the unit $1_A$ unless doing so generates ambiguity.
\item Let $(A,\top_A)$ be a locality algebra. A subspace $B$ of $A$ is called a {\bf (resp. unitary) locality subalgebra of $A$} if, with the restricted relation
\vspace{-.2cm}	
$$\top_B\coloneqq \top_A \cap (B\times B)
\vspace{-.2cm}	
$$
of $\top_A$ to $B$, the pair $(B,\top_B)$ is a (resp. unitary) locality algebra.
\item Let $(A,\top_A)$ be a commutative locality algebra and $C$ a unitary locality subalgebra of $A$. A subspace $B$ of $A$ is called a {\bf (resp. unitary) locality $C$-subalgebra of $A$} if $B$ is a (resp. unitary) locality subalgebra of $A$ that contains $C$.
\end{enumerate}
\end{defn}

Given two locality algebras $(A_i, \top_i), i=1,2$,
a (resp. {\bf unitary}) {\bf locality algebra homomorphism} is a linear map   $\varphi:A_1\longrightarrow A_2$  such that
$a  \top_1 b$ implies $\varphi(a)\top_2\varphi(b)$ and
	$\varphi(a \cdot b)=\varphi(a )\cdot \varphi(b)$
(resp. and $\varphi(1_{A_1})=1_{A_2}$).

\begin{ex}\mlabel{ex:mero}
\cite[Corollary 3.23]{CGPZ1} With the relation $\perp^Q$ of Definition \ref{defn:depspace}, the pair $(\calm_\Q, \perp^Q)$ is a unitary locality algebra and
	the projection
\vspace{-.1cm}	
	\begin{equation}\label{eq:piqplus}\pi_+^Q: \calm_{\Q}=\calm_{\Q+}\oplus \calm^Q_{\Q-} \to  \calm _{\Q+}
\vspace{-.1cm}	
	\end{equation}
along $\calm ^Q_{\Q-}$ is a unitary \qorth algebra homomorphism.
\end{ex}

\begin {defn} \mlabel{de:auto} For a unitary locality algebra $(A, \top)$, a unitary locality endomorphism  of $A$ is a {\bf locality  automorphism} if it is invertible,  preserves the unit, and the inverse map is a  locality  algebra homomorphism.
Let  ${\rm Aut}^\top(A)$ denote the set of   locality automorphisms of $A$.
\end{defn}

We note that  ${\rm Aut}^\top(A)$ forms a group for the composition, called the {\bf locality automorphism group} of $A$.
There are counter examples that a bijective \qorth homomorphism needs not be a \qorth automorphism.

\subsubsection{Locality subalgebras of meromorphic germs}
In the sequel, we consider locality subalgebras $(\cala, \perp^Q)$  of the locality algebra $(\calm_\Q,\perp^Q)$ in Example~\mref{ex:mero}. For a subset $U$ of $\calm_\Q$, let
\begin{equation} \mlabel{eq:unit}
\mti{U}\coloneqq U\cup \{1\},
\vspace{-.2cm}	
\end{equation}
with $1$ being the constant function.

We first   give the structure of locality subalgebras of $\calm_{\Q}$ generated by a set, with rational coefficients or $\calm _{\Q+}$ coefficients. As we shall see, a careful analysis using tools such as supporting and dependent spaces is needed when extending the notion of subalgebra    to the locality setting.

Given a subset $\calu$ of $\calm_\Q$, let
\vspace{-.2cm}	
$$\Pi^Q (\calu)\coloneqq \bigg\{ \prod _i s_{i} \, \bigg|\, s_{i} \in \calu, \forall i, s_{i}\perpq  s_{j},\, \forall i\not =j \ \bigg\}
\vspace{-.2cm}	
$$
be the set of meromorphic germs \qorth generated by $\calu$. With the notation of Eq.~\meqref{eq:unit},  we have
\vspace{-.2cm}	
$$ \Pi^Q(\mti{\calu})=\Pi^Q(\calu) \cup \{1\}.
\vspace{-.2cm}	
$$

\begin{prop}\mlabel{prop:genalg} Given a set $\cals$ of simplex fractions,   the subspace of $\Q \mti{\calf}$
\vspace{-.2cm}	
  \[ \QsubS\coloneqq \bigg\{ \sum_{i} c_i\, S_i  \, \bigg|\, c_i\in \Q, S_i\in \Pi^Q(\mti{\cals}) \bigg\}
  \vspace{-.2cm}	
  \]
spanned by $\Pi^Q(\mti{\cals})$ is a unitary \qorth subalgebra  of $\Q \calf $.
\end{prop}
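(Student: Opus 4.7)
The plan is to verify directly the four conditions for $\QsubS$ to be a unitary $Q$-orthogonal subalgebra: subspace inclusion into $\Q\mti{\calf}$, inherited locality-vector-space structure, stability of the unit, and closure under the locality product. The first three are essentially formal. First I would observe that a product of pairwise $Q$-orthogonal simplex fractions is again a simplex fraction, since orthogonality of their supporting spaces keeps the combined tuple of denominator linear forms linearly independent; this gives $\Pi^Q(\mti{\cals})\subseteq\mti{\calf}$ and hence $\QsubS\subseteq\Q\mti{\calf}\subseteq\calm_\Q$. The locality-vector-space structure on $\QsubS$ is just the restriction of the one on $(\calm_\Q,\perp^Q)$ from Example~\mref{ex:mero}: for $X\subseteq\QsubS$, the polar set $X^{\perp^Q}\cap\QsubS$ is the intersection of two linear subspaces. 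Finally, $1\in\Pi^Q(\mti{\cals})\subseteq\QsubS$ by the very definition of $\mti{\cals}$.

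The real content is closure under the locality product. Given $f,g\in\QsubS$ with $f\perp^Q g$, I fix representations $f=\sum_i c_i P_i$ and $g=\sum_j d_j Q_j$ with $P_i,Q_j\in\Pi^Q(\mti{\cals})$, $c_i,d_j\in\Q$. Since each $P_i$ is already a simplex fraction by the first step, $\mrm{Dep}(P_i)=\supsp(P_i)$, and likewise for $Q_j$. The key move is to regroup by supporting space: set $\tilde f_U\coloneqq\sum_{i\,:\,\supsp(P_i)=U} c_i P_i$ and $\tilde g_V\coloneqq\sum_{j\,:\,\supsp(Q_j)=V} d_j Q_j$, each of which remains in $\QsubS$ because $\QsubS$ is a subspace. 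Lemma~\mref{lem:fracdep} gives $\mrm{Dep}(\tilde f_U)=U$ whenever $\tilde f_U\neq 0$, and Theorem~\mref{thm:poldep} then yields $\mrm{Dep}(f)=\sum_{U\,:\,\tilde f_U\neq 0}U$, with the analogous identity for $g$.

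The hypothesis $f\perp^Q g$ thus translates into $U\perp V$ for every pair of supporting spaces with both $\tilde f_U$ and $\tilde g_V$ nonzero. For such $(U,V)$, any $P_i$ contributing to $\tilde f_U$ and $Q_j$ contributing to $\tilde g_V$ factor as $P_i=\prod_\alpha u_\alpha$ and $Q_j=\prod_\beta v_\beta$ with $u_\alpha,v_\beta\in\mti{\cals}$ pairwise $Q$-orthogonal within each factorisation. Since $\mrm{Dep}(u_\alpha)=\supsp(u_\alpha)\subseteq U$ and $\mrm{Dep}(v_\beta)\subseteq V$, the orthogonality $U\perp V$ forces $u_\alpha\perp^Q v_\beta$ for all $\alpha,\beta$. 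Concatenating the two factorisations then shows $P_i Q_j\in\Pi^Q(\mti{\cals})$. Expanding $fg=\sum_{U,V}\tilde f_U\tilde g_V$ and noting that a summand vanishes whenever $\tilde f_U=0$ or $\tilde g_V=0$, I conclude $fg\in\QsubS$.

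The main obstacle is precisely what forces the supporting-space regrouping: in the naive expansion $fg=\sum_{i,j}c_i d_j P_i Q_j$ the hypothesis $f\perp^Q g$ does not in general imply $P_i\perp^Q Q_j$ term by term, because cancellations among the $P_i$ sharing a common supporting space $U$ may remove $U$ from $\mrm{Dep}(f)$ altogether, leaving no reason for $\supsp(P_i)\perp\supsp(Q_j)$. Regrouping isolates the nonzero blocks $\tilde f_U$, on which Lemma~\mref{lem:fracdep} restores $\mrm{Dep}(\tilde f_U)=U\subseteq\mrm{Dep}(f)$, and this is what transfers the orthogonality hypothesis down to the individual factor level where the $\Pi^Q(\mti{\cals})$-structure can be read off.
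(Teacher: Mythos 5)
Your proposal is correct and follows essentially the same route as the paper: regroup the two factors by supporting space using the grading of Eq.~\eqref{eq:ugrad}, invoke Lemma~\ref{lem:fracdep} to identify each nonzero block $\tilde f_U$ with its dependence space $U$, use Theorem~\ref{thm:poldep} to deduce $U\perp V$ from $f\perp^Q g$, and then conclude block by block. The only difference is that you spell out, more fully than the paper does, why $U\perp V$ forces $P_iQ_j\in\Pi^Q(\mti{\cals})$ — namely by tracking the individual factors $u_\alpha, v_\beta\in\mti{\cals}$ and noting their supporting spaces sit inside $U$ and $V$ respectively — whereas the paper asserts this final implication without elaboration; that extra detail is a welcome clarification rather than a deviation.
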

Thus $\QsubS$ is the unitary locality subalgebra of $\Q\mti{\calf}$ generated by $\cals$.
\begin {proof} Since $c_0=1$ serves as the unit, we just need  to prove that for $f, g$ in $\Pi^Q (\cals)$ with $f\perpq  g$,  $fg$ lies in $ \Q \Pi^Q( \cals)$. According to the grading in Eq.~\meqref{eq:ugrad}, we write
\vspace{-.1cm}	
$$f=\sum _U f_U, \quad  g=\sum_V g _V,
\vspace{-.1cm}	
$$
where $f_U$ is the sum of simplex fractions with the same supporting  space $U$ and $g_V$ is the sum of simplex fractions with the same supporting space $V$. By Theorem~\ref{thm:poldep}, we have
\vspace{-.1cm}	
$${\rm Dep}(f)=\sum_U {\rm Dep}(f_U), \quad {\rm Dep}(g)=\sum_V {\rm Dep}(g_V)
\vspace{-.1cm}	
$$
from which we infer that, for any $U$ and $V$ appearing in the decompositions of $f$ and $g$,  \[f\perpq  g\Longrightarrow f_U\perpq g_V.\]
By Lemma \mref{lem:fracdep}, each $f_U \not =0$   (resp. $g_V\not =0$),  being a sum of simple fractions with the same supporting space $U$ (resp. $V$), gives $U={\rm Dep}(f_U)$ (resp. $V={\rm Dep}(g_V)$). Thus we have
$$U\perpq  V. $$
Hence the products $f_Ug_V$   arising in the decomposition $f\, g=\sum_{U, V}f_Ug_V$ lie  in $\Q\Pi^Q(\cals)$. Consequently the product $f\, g$ also lies  in $\Q\Pi^Q(\cals)$.
\end{proof}

\begin{prop}\mlabel{prop:genalg2} Given a set $\cals$ of simplex fractions, the set
\vspace{-.1cm}	
  \[\calm _{\Q+}^Q\big(\subS \big)\coloneqq \bigg\{ \sum_i h_i S_i,\bigg|\, h_i\in \calm _{\Q +}, S_i\in \subS,   h_i\perpq S_i \bigg\}
  \vspace{-.1cm}	
  \]
is a unitary \qorth subalgebra of $\calm _\Q$.
\end{prop}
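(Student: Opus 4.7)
The set $\calm_{\Q+}^Q(\Pi^Q(\mti{\cals}))$ visibly contains $1=1\cdot 1$ and is closed under addition and scalar multiplication, so the only nontrivial task is to verify closure under the partial multiplication $f\cdot g$ whenever $f\perpq g$. Given such $f$ and $g$, my plan is to bring each of them into a normal form where the holomorphic and simplex-fraction factors have controlled dependence spaces, so that the orthogonalities needed to place $fg$ in the algebra become transparent.

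First, I would invoke the decomposition~\meqref{eq:ugrad} to write
$$f = f_0 + \sum_{U\in\calu} f_U, \qquad g = g_0 + \sum_{V\in\calv} g_V,$$
where $f_0, g_0$ are holomorphic and each $f_U$, $g_V$ is a nonzero sum of polar germs with common supporting space $U$ (resp.\ $V$). Theorem~\mref{thm:poldep} then yields ${\rm Dep}(f) = {\rm Dep}(f_0) + \sum_U {\rm Dep}(f_U)$ and the analogous identity for $g$, so from $f\perpq g$ one extracts the pairwise $Q$-orthogonality of every piece of $f$ against every piece of $g$. Next, I would apply Proposition~\mref{pp:RewPG} to rewrite each $f_U = \sum_i \tilde h^U_i\,S^U_i$ so that ${\rm Dep}(\tilde h^U_i)\subseteq {\rm Dep}(f_U)\cap U^\perp$ (the orthogonality to $U$ is visible from the $Q$-orthogonal basis extension used in that proof), while the $S^U_i$ are unchanged and remain in $\Pi^Q(\mti{\cals})$ with supp$(S^U_i)=U$. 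Treat $g_V$ analogously. Expanding,
$$fg \;=\; f_0 g_0 + \sum_V f_0 g_V + \sum_U f_U g_0 + \sum_{U,V}\sum_{i,j}\bigl(\tilde h^U_i\,\tilde h'^V_j\bigr)\bigl(S^U_i\,S'^V_j\bigr).$$
In each cross term, supp$(S^U_i)=U\perp V={}$supp$(S'^V_j)$ so $S^U_i\perpq S'^V_j$; since both factors are themselves $\perpq$-products of elements of $\mti{\cals}$, their product lies in $\Pi^Q(\mti{\cals})$. The holomorphic coefficient $\tilde h^U_i\tilde h'^V_j$ has dependence space inside ${\rm Dep}(\tilde h^U_i)+{\rm Dep}(\tilde h'^V_j)$, which by construction and by the orthogonalities just extracted is orthogonal to both $U$ and $V$, hence to supp$(S^U_i S'^V_j)=U+V$. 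The mixed pieces $f_0 g_V$ and $f_U g_0$ are handled by the same argument with one factor trivial, and $f_0 g_0$ lies in $\calm_{\Q+}$.

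The anticipated obstacle is that $f\perpq g$ does \emph{not} directly give term-by-term orthogonality in an arbitrary defining presentation $f=\sum h_i S_i$: the individual polar germs can carry dependences on variables that cancel in the sum, so one cannot naively distribute the product and read off the required $\perpq$ relations factor by factor. The role of the supporting-space decomposition together with Proposition~\mref{pp:RewPG} is precisely to replace the arbitrary presentation by one in which the dependence spaces of all holomorphic and polar factors are pinned down by ${\rm Dep}(f)$ and ${\rm Dep}(g)$, so that the orthogonalities needed for the product to have the required form follow immediately from $f\perpq g$.
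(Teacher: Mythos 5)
Your proposal is correct and follows essentially the same route as the paper's proof: decompose $f$ and $g$ by supporting space via Eq.~\meqref{eq:ugrad}, use Theorem~\mref{thm:poldep} to propagate $f\perpq g$ to each pair of homogeneous pieces, renormalise the holomorphic coefficients via Proposition~\mref{pp:RewPG} so their dependence spaces are controlled, and conclude term by term. The one step you elide is the explicit appeal to Proposition~\mref{prop:polardep}, which is needed to pass from $f_U\perpq g_V$ to $U\perpq V$ (i.e.\ to know $U\subseteq {\rm Dep}(f_U)$ and $V\subseteq {\rm Dep}(g_V)$); the paper invokes it directly, while you implicitly rely on it through Proposition~\mref{pp:RewPG}.
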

Thus $\calm _{\Q+}^Q\big(\subS\big)$ is the unitary locality $\calm_{\Q+}$-subalgebra of $\calm_\Q$ generated by  $\cals$.

\begin {proof} Clearly,  $\calm _{\Q+}^Q(\subS)$ is a vector space. As in the proof of Proposition~\mref{prop:genalg}, we only need to verify that, for $a, b \in \calm _{\Q+}^Q(\subS )$ with $a\perpq  b$, the locality product $ab$ is well defined and lies in $\calm _{\Q+}^Q(\subS )$. To complete this, in accordance with the grading in Eq.~\meqref{eq:ugrad}, write
\vspace{-.1cm}	
$$a=\sum_U \sum _i a_{U_i}S_{U_i}=\sum _Ua_U, \quad b=\sum_V \sum _j b_{V_j}T_{V_j}=\sum _V b_V,
\vspace{-.1cm}	
$$
where $a_{U_i}, b_{V_j}\in \calm _{\Q+}$, $S_{U_i}, T_{V_j}\in \Pi ^Q(\cals)$ and $a_{U_i}\perpq  S_{U_i}$, $b_{V_j}\perpq  T_{V_j}$, ${\rm Supp }(S_{U_i})=U$, ${\rm Supp}(T_{V_j})=V$ (with the convention that $\supsp(h)=0$ for $h\in \calm_{\Q+}$).

Theorem~\mref{thm:poldep} gives ${\rm Dep }(a_U)\subset {\rm Dep}(a)$ and ${\rm Dep}(b_V)\subset {\rm Dep}(b)$. Since $a\perpq  b$ means ${\rm Dep}(a)\perpq  {\rm Dep}(b)$, we have ${\rm Dep}(a_U)\perpq  {\rm Dep}(b_V)$, that is $a_U\perpq  b_V$.

Now let $a_U\not =0$ and $b_V\not =0$.
\begin{enumerate}
  \item By Proposition~\mref{prop:polardep}, we have $U\subset {\rm Dep} (a)$ and $V\subset {\rm Dep}(b)$. So $U\perpq  V$ and $S_{U_i}\perpq  T_{V_j}$;
  \item From ${\rm Dep }(a_U)\subset {\rm Dep}(a)$ and $V\subset {\rm Dep}(b)$, we obtain $a_U \perpq  T_{V_j}$. Similarly, $b_V \perpq  S_{U_i}$;
  \item By Proposition~\mref{pp:RewPG}, there exist holomorphic germs $\tilde a_{U_i}$ and $\tilde b_{V_j}$ with
\vspace{-.2cm}	
  $${\rm Dep}(\tilde{a}_{U_i})\subset {\rm Dep}(a_U), \ \tilde a_{U_i} \perpq  S_{U_i},\ {\rm Dep}(\tilde{b}_{V_j})\subset {\rm Dep}(b_U),\ \tilde b_{V_j} \perpq  T_{V_j},
  \vspace{-.2cm}	
  $$
  such that
\vspace{-.2cm}	
  $$ a_U=\sum _i\tilde a_{U_i}S_{U_i},\quad b_V=\sum_j \tilde b_{V_j}T_{V_j}.
\vspace{-.2cm}	
  $$
From $a_U\perpq  T_{V_j}$ and $\depsp(\tilde{a}_{U_i})\subset \depsp(a_U)$, we have $\tilde a_{U_i} \perpq  T_{V_j}$. Likewise,  $\tilde b_{V_j} \perpq  S_{U_i}$.
\end{enumerate}
In summary, $\tilde a_{U_i}, \tilde b_{V_j}, S_{U_i}, T_{V_j}$ are mutually $Q$-orthogonal. Thus
$a_Ub_V=\sum\limits _{i,j} \tilde a_{U_i}\tilde b_{V_j}S_{U_i}T_{V_j},
$
is defined and gives an element in $\calm _{\Q+}^Q(\subS) )$. Therefore,
$ab=\sum\limits_{U,V}a_Ub_V$
is defined and lies in $\calm _{\Q+}^Q(\subS)$.
\end{proof}

{\it  In the subsequent examples, we implicitly fix an orthonormal basis $\mathcal{E}$ with respect to an inner product $Q$ in $\R^\infty$, only referring to these choices in the notation when necessary.}

\begin{ex}  The set ${\calf}^{\rm Ch}=\calf^{{\rm Ch},Q,\mathcal{E}}$ of Chen fractions in Example~\mref{ex:chen} generates the unitary \qorth subalgebra $\calm _\Q^{\rm Ch}\coloneqq \calm^Q_{\Q +}\left(\Pi ^Q({\mti{\calf} }^{\rm Ch})\right)$.
\mlabel{ex:chen1}
\end{ex}

\begin{ex}\label{ex:feynman} For a finite subset $J$ of $\Z_{>0}$,  we set $z_J\coloneqq \sum_{i\in J} z_i$.
As considered by Speer~\mcite{Sp2} (see \S~\mref{ss:ge}), a {\bf Feynman fraction} is a simplex fraction
\vspace{-.1cm}	
\begin{equation}\label{eq:Feynmanfrac}\frac{1}{\prod_{J\in \mathcal{J}}z_J^{s_J}}, s_J>0,
\vspace{-.1cm}	
\end{equation}
for a finite collection $\mathcal{J}$ of finite subsets of $\N$. With similar notations of Example \ref{ex:chen1}, the set  of Feynman fractions and the \qorth subalgebra it generates are denoted by
\vspace{-.1cm}	
$${\calf }^{\rm Fe}
\coloneqq \calf^{{\rm Fe},Q,\mathcal{E}},
\quad \calm _\Q ^{\rm Fe}\coloneqq \calm_{\Q+}^Q\left(\Pi ^Q({\mti{\calf} }^{\rm Fe})\right).
\vspace{-.2cm}	
$$
\end{ex}

\subsection {Automorphism groups of simplex \qorth algebras } \label{sussec:AutoLoc}

For a locality subalgebra $\cala$ of ${(\calm_\QQ,\perp^Q)}$, let ${\rm Aut}^Q(\cala)$ denote the group of locality automorphisms of $\cala$, following Definition~\mref{de:auto}.
\vspace{-.2cm}	

\begin{prop}
\label {pp:homdef}
Let $\cals$ be a set of simplex fractions and let  $\QsubS$
be the unitary \qorth subalgebra of $\Q \mti{\calf}^Q$ generated by $\cals$.
Let ${\rm Aut}_{\rm Res}^Q(\QsubS)$ be the set of unitary  \qorth algebra homomorphisms $\varphi:\QsubS \to \QsubS$ with the property that,
for any fraction $S $ in $\Pi^Q(\cals)$,
\vspace{-.1cm}	
\begin{equation}\mlabel{eq:AutQB}
 \varphi\left(S\right)=
S+ \sum_{i}  {a_i}\, S_i,
\vspace{-.2cm}	
\end{equation}
where
\vspace{-.2cm}	
$$a_i\in \Q, \,S_i \in \Pi ^Q(\mathcal S),  \, \pord(S_i)< \pord(S),\  {\rm Supp}(S_i) \subsetneq {\rm Supp} \left(S\right).
\vspace{-.1cm}	
$$
Then ${\rm Aut}_{\rm Res}^Q(\QsubS)$ is a subgroup of ${\rm Aut}^Q(\QsubS)$.
\end{prop}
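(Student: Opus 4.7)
I would verify the three subgroup axioms (identity, closure under composition, closure under inversion), with the inversion step simultaneously confirming that the maps in $\mathrm{Aut}_{\rm Res}^Q(\QsubS)$ really are automorphisms. The identity trivially satisfies Eq.~\eqref{eq:AutQB} with empty correction sum. For closure under composition, given $\varphi_1,\varphi_2\in \mathrm{Aut}_{\rm Res}^Q(\QsubS)$ and $S\in \Pi^Q(\cals)$, write $\varphi_2(S)=S+\sum_i a_i S_i$ with $\pord(S_i)<\pord(S)$ and $\supsp(S_i)\subsetneq \supsp(S)$, and similarly expand $\varphi_1(S)=S+\sum_j b_j T_j$ and $\varphi_1(S_i)=S_i+\sum_k c_{ik}R_{ik}$. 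Then
$$\varphi_1(\varphi_2(S))=S+\sum_j b_j T_j+\sum_i a_i S_i+\sum_{i,k} a_i c_{ik}R_{ik},$$
and the transitivity of strict p-order inequality and of strict containment of supporting spaces guarantees that every correction term has p-order strictly less than $\pord(S)$ and supporting space strictly contained in $\supsp(S)$, so $\varphi_1\circ\varphi_2\in \mathrm{Aut}_{\rm Res}^Q(\QsubS)$.

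For the existence of inverses, I would build $\psi$ on $\Pi^Q(\cals)$ by induction on $\pord$. For fractions of minimal p-order, Eq.~\eqref{eq:AutQB} makes the admissible sum empty, so $\varphi(S)=S$ and $\psi(S):=S$ is forced. For general $S$ with $\varphi(S)=S+\sum_i a_i S_i$, each $\psi(S_i)$ is available by the inductive hypothesis, and I define
$$\psi(S):=S-\sum_i a_i\,\psi(S_i).$$
Unrolling the recursion shows that $\psi(S)$ has the shape demanded by Eq.~\eqref{eq:AutQB}, and direct substitution gives $\psi\circ\varphi(S)=S$ on generators, with the symmetric argument for $\varphi\circ\psi$.

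The main obstacle is to upgrade $\psi$ from a set-theoretic assignment on $\Pi^Q(\cals)$ to a unitary locality algebra endomorphism of $\QsubS$. For this I would invoke Eq.~\eqref{eq:AutQB} together with Theorem~\ref{thm:poldep} and Lemma~\ref{lem:DepfU}: the leading supporting space of $\varphi(S)$ coincides with that of $S$ while every correction term contributes a strictly smaller supporting space, so one reads off $\depsp(\varphi(S))=\supsp(S)=\depsp(S)$ for every $S\in \Pi^Q(\cals)$, and the same equality holds for $\psi$ by the inductive shape of $\psi(S)$. Hence both maps preserve the locality relation $\perpq$ on generators. Combined with Proposition~\ref{prop:genalg}, which realises $\QsubS$ as the $\Q$-span of $\Pi^Q(\mti{\cals})$ closed under locality-orthogonal products, this lets $\psi$ extend multiplicatively to a unitary locality algebra homomorphism on $\QsubS$; the identities $\psi\circ\varphi=\mathrm{id}=\varphi\circ\psi$ on generators then propagate to all of $\QsubS$ by $\Q$-linearity, confirming $\psi\in \mathrm{Aut}_{\rm Res}^Q(\QsubS)$.
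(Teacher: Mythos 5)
Your identity and composition steps are fine, and your observation that the whole burden is to show each $\varphi$ is actually invertible inside the locality category is the right one. The genuine gap is in your construction of $\psi$. You define $\psi$ by a recursion on the \emph{set} $\Pi^Q(\cals)$ and then assert it extends linearly and ``multiplicatively'' to $\QsubS$, invoking Proposition~\ref{prop:genalg}. But Proposition~\ref{prop:genalg} only says that $\Pi^Q(\mti{\cals})$ \emph{spans} $\QsubS$; it gives no freeness and hence no universal property for extending a map off the generators. For a general set $\cals$ of simplex fractions, $\Pi^Q(\mti{\cals})$ is neither linearly independent nor algebraically independent: partial-fraction identities such as $\tfrac{1}{z_1z_2}=\tfrac{1}{z_1(z_1+z_2)}+\tfrac{1}{z_2(z_1+z_2)}$ are relations among simplex fractions all of which may lie in $\cals$. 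Consequently (i) the expansion $\varphi(S)=S+\sum_i a_iS_i$ in Eq.~\eqref{eq:AutQB} is not unique, so your recursive value $\psi(S):=S-\sum_i a_i\psi(S_i)$ depends on a choice, and (ii) even after making choices, nothing guarantees that $\psi$ respects the linear relations among the elements of $\Pi^Q(\mti{\cals})$, so the ``linear extension'' and a fortiori the ``multiplicative extension'' are not well defined. This is exactly why, later in the paper, the polynomial-algebra property of $\Q\Pi^Q(\mti{\cals})$ is an extra \emph{hypothesis} (proved only for Chen and Speer fractions); it cannot be used here, where $\cals$ is arbitrary.

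The paper's proof avoids any construction of the inverse. It shows directly that $\varphi$ is injective and surjective by working with the canonical direct-sum decomposition of $\calm_\Q$ by supporting spaces, Eq.~\eqref{eq:ugrad} (injectivity: the component of $\varphi(f)$ on a maximal supporting space of $f$ is $f_{U_0}$ itself; surjectivity: a minimal-supporting-space counterexample to being in the image leads to a contradiction). Since the decomposition is canonical, no well-definedness issue arises. The inverse map $\varphi^{-1}$ then \emph{exists} as a linear map for free, and the remaining work is to show it again has the form of Eq.~\eqref{eq:AutQB} (another minimal-support argument) and is a locality map (via Theorem~\ref{thm:poldep} and Lemma~\ref{lem:DepfU}, decomposing arbitrary $f\perpq g$ by supporting spaces, not just checking generators), after which multiplicativity of $\varphi^{-1}$ follows formally from that of $\varphi$. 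To repair your argument you would either have to follow this route, or first fix a linear basis inside $\Pi^Q(\mti{\cals})$ and control how rewriting in that basis interacts with p-orders and supporting spaces (essentially Lemma~\ref{lem:SepSpace}), which amounts to redoing the paper's bookkeeping anyway.
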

\begin{rk}
As a consequence of Theorem \mref{thm:phiiso} yet to come, we have
\vspace{-.1cm}	
 \begin{equation} \label{eq:GalRes}{\rm Aut}_{\rm Res}^Q(\QsubS)  = \Big\{\left.\varphi\in {\rm Aut}^Q(\QsubS) \,\right|\, \varphi \, \text{preserves the p-residue and d-residue}   \Big\}
\vspace{-.1cm}	
 \end{equation}
which justifies the notation with subscript ``Res".
\end{rk}
\begin{proof}
Denote $R\coloneqq R_S\coloneqq \QsubS$.
	We first prove that $\varphi $ is one-to-one.
	Let
\vspace{-.1cm}	
$$f=\sum_i a_i S_i\in R, a_i\in \Q, S_i\in \Pi^Q(\cals)\cup \{1\},
\vspace{-.1cm}	
$$
be nonzero.
	If $f$  is a constant in $\QQ$, then $\varphi(f)=f\neq 0$. If $f$ is not a constant, we group the terms of $f$ according to  the gradation in Eq.~\meqref{eq:ugrad}:
\vspace{-.2cm}	
$$f=c_0+\sum_{U\in \calu} f_U,
\vspace{-.2cm}	
$$
where  $\calu$ is a finite nonempty set of nonzero subspaces of $\RR^\infty$ and $0\neq f_U\in \Q\Pi^Q(\mti{\cals})$, for any $U$ in $\mathcal U$, is a sum  of fractions with  supporting space $U$. Applying $\varphi$ yields
\vspace{-.1cm}	
$$ \varphi(f)=c_0 + \sum_{U\in \calu}\Big( f_U+ \sum_{V\subsetneq U} g_{V}\Big),
\vspace{-.2cm}	
$$
where $g_{V}$ is   a sum (possibly zero) of fractions with supporting space $V$.
For a maximal element $U_0$ in $\calu$, $f_{U_0}$ is the only contribution in the above sum arising in $\varphi(f)$ to the component with supporting space $U_0$ in the decomposition in Eq.~\meqref{eq:ugrad}. So $\varphi(f)=0$ implies $f_{U_0}=0$. This is a contradiction. It follows that $\varphi(f)\neq 0$, which ends the proof of the injectivity.

To prove the surjectivity of $\varphi$, by the linearity of $\varphi$, we only need to show that every element $f$ of $\Pi^Q(\mti{\cals})$ lies in the range Im $\varphi$  of $\varphi$. Suppose this is not the case and let  $U_0\neq 0$ be minimal among the supporting spaces of elements in $ \Pi^Q(\mti{\cals})\backslash \text{Im} \varphi$. Let $f_0$ be one of the simplex fractions in $\Pi^Q(\mti{\cals})\backslash \text{Im} \varphi $  with supporting space $U_0$.
Then by Eq.~\meqref{eq:AutQB} we have
\vspace{-.1cm}	
$$\varphi(f_0) = f_0+ \sum_{V\subsetneq  U_0} f_V,
\vspace{-.1cm}	
$$
where $f_V\in  \Q(\Pi^Q(\mti{\cals}))$ is the sum of simplex fractions with supporting space $V$. The space $U_0$ being minimal, each $f_V$ lies in the image of $\varphi$. Therefore,
$ f_0=\varphi(f_0)-\sum\limits_{V\subsetneq U_0} f_V$
also lies in the image of $\varphi$. This is a contradiction, showing that $\varphi$ is surjective.

We finally prove that $\varphi ^{-1}$ is a \qorth algebra homomorphism. We first show that $\varphi^{-1}$ has the property in Eq.~\meqref{eq:AutQB}, that is, for any $S\in \Pi^Q(\mti{\cals})$, we have
\vspace{-.2cm}	
$$\varphi^{-1} (S)=S + \sum_{i} h_i S_i,
\vspace{-.2cm}	
$$
where each $S_i\in \Pi^Q(\mti{\cals})$ has smaller supporting space and p-order than those of $S$. Assume that this were not the case, and let $S$ have a  minimal supporting space $U_0$ among the counterexamples. Then
\vspace{-.2cm}	
$$ \varphi (S) = S + \sum_{j} b_jT_j,
\vspace{-.2cm}	
$$
where each simplex fraction $T_j\in \Pi^Q(\mti{\cals})$ has it supporting space and p-order smaller than those of $S$. Applying $\varphi^{-1}$ gives
\vspace{-.2cm}	
\begin{equation}	\mlabel{eq:min}
	 S = \varphi^{-1}(S) +\sum_j b_j \varphi^{-1}(T_j).
\vspace{-.2cm}	
\end{equation}
The minimality of the supporting space of $S$ yields
\vspace{-.1cm}	
$$ \varphi^{-1}(T_j)=T_j + \sum_{jk} d_{jk} T_{jk},
\vspace{-.2cm}	
$$
where each $T_{jk}\in \Pi^Q(\mti{\cals})$ has its supporting space and p-order smaller than those of $T_j$ and hence of $S$. Therefore, Eq.~\meqref{eq:min} gives
\vspace{-.2cm}	
$$ \varphi^{-1}(S)=S- \sum_j \varphi^{-1}(T_j)
=S- \sum_j b_j\left (T_j + \sum_{jk} d_{jk} T_{jk} \right),
\vspace{-.2cm}	
$$
which shows that $\varphi^{-1}(S)$ has the form in Eq.~\meqref{eq:AutQB}. This gives the desired contradiction.

To check that $\varphi^{-1}$ is a \qorth map, we consider two linear combinations $f, g$ in $R=\Q(\Pi^Q(\mti{\cals}))$ and group the terms
\vspace{-.1cm}	
$$f= c +\sum_{U\in \calu} f_U \quad {\rm and}
\quad g=d +\sum_{V\in \mathcal V} g_V,
\vspace{-.1cm}	
$$
with $c , d$   in $\Q$ and nonzero sums $f_U, g_V \in \Q(\Pi ^Q(\mti{\cals}))$ of fractions with supporting spaces $U\in \calu$ and $V\in \calv$ respectively.  We proceed to show that $f\perpq  g$ implies $\varphi^{-1}(f)\perpq \varphi^{-1}(g)$.

Let $U$ be an element in $ \calu$ and $V$ an element in $ \mathcal V$.  By Theorem~\mref{thm:poldep}, ${\rm Dep} \,f_U \subset  {\rm Dep} \,f$ and $ {\rm Dep} \,g_V\subset  {\rm Dep} \,g$    so that  $  {\rm Dep}\, f\perpq  {\rm Dep}\, g$ implies ${\rm Dep}\, f_U \perpq  {\rm Dep}\, g_V$   and hence
 $f_U\perpq  g_V.$
By Lemma~\mref{lem:fracdep},
for the linear combination $f_U$ (resp. $g_V$) of simplex fractions with the same supporting space $U$ (resp. $V$), we have $U=\depsp(f_U)$ (resp. $V=\depsp(g_V)$). Thus $f_U\perpq  g_V$ implies $U\perpq  V$.

Since $\varphi^{-1}$ has the property in Eq.~\meqref{eq:AutQB}, we have ${\rm Dep}(\varphi^{-1}(f_U))\subseteq {\rm Dep}(f_U)$ and
${\rm Dep}(\varphi^{-1}(g_V)\subseteq {\rm Dep}(g_V)$. Then
$\varphi^{-1}(f_U)\perpq  \varphi^{-1}(g_V)$.
Therefore, by the linearity of $\varphi^{-1}$, we obtain $\varphi^{-1}(f)\perpq  \varphi^{-1}(g)$, as needed.

Finally for $f,g\in R$ with $f \perp^Q g$, we have $\varphi^{-1}(f)\perp^Q \varphi^{-1}(g)$. Hence
$$\varphi(\varphi^{-1}(f)\, \varphi^{-1}(g))
=\varphi(\varphi^{-1}(f)) \,\varphi(\varphi^{-1}(g)) = fg.$$
Therefore, applying $\varphi^{-1}$, we obtain
$\varphi^{-1}(f)\, \varphi^{-1}(g) = \varphi^{-1}(fg), $
showing that $\varphi^{-1}$ is a locality homomorphism.
\end{proof}

\subsection{\Qorth Galois groups}
\mlabel {subsec:LocGal}
In this part we consider a locality subalgebra $\cala$ of $(\calm_{\Q}, \perp^Q)$ containing $\calm_{\Q+}$. 

\begin{prop}\mlabel{coro:Depphi} For any  \qorth morphism $\varphi: \cala \to \cala$ with $\varphi|_{\calm _{\Q+}}=\Id$, we have
$${\rm Dep} (\varphi(f))\subset {\rm Dep}(f),\quad   \forall f\in \cala.
$$
\end{prop}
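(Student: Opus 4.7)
The plan is to characterize the dependence subspace dually, by probing $\varphi(f)$ with linear forms that are fixed by $\varphi$ and that detect the orthogonal complement of ${\rm Dep}(f)$ through the locality relation.

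First, set $V \coloneqq {\rm Dep}(f)$ and let $V^{\perp}$ denote its $Q$-orthogonal complement inside $\call_\Q$. For any nonzero linear form $\ell \in V^{\perp}$, I would observe three facts. First, since $\ell$ is a polynomial it belongs to $\calm_{\Q+} \subset \cala$, so by hypothesis $\varphi(\ell) = \ell$. Second, ${\rm Dep}(\ell) = \Q\cdot \ell \subset V^{\perp}$ is $Q$-orthogonal to $V = {\rm Dep}(f)$, so $\ell \perp^Q f$ inside $\cala$. Third, because $\varphi$ preserves the locality relation, applying it yields $\ell = \varphi(\ell) \perp^Q \varphi(f)$, and hence ${\rm Dep}(\ell) \perp {\rm Dep}(\varphi(f))$.

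Next I would let $\ell$ range over all nonzero linear forms in $V^{\perp}$. Since the union of the one-dimensional subspaces $\Q\cdot \ell$ covers $V^{\perp}$, the pointwise orthogonality ${\rm Dep}(\varphi(f)) \perp {\rm Dep}(\ell)$ accumulates to ${\rm Dep}(\varphi(f)) \perp V^{\perp}$. Nondegeneracy of $Q$ then gives
\[
{\rm Dep}(\varphi(f)) \subset (V^{\perp})^{\perp} = V = {\rm Dep}(f),
\]
which is the desired inclusion.

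I do not anticipate a serious obstacle; the argument really rests on the dual characterisation of ${\rm Dep}(f)$, namely that it is cut out by those linear forms that are $Q$-orthogonal to $f$. The only point that needs to be handled with care is verifying that the probing linear forms live in $\cala$ and are fixed by $\varphi$, which is delivered exactly by the combination $\calm_{\Q+} \subset \cala$ and $\varphi|_{\calm_{\Q+}} = \Id$.
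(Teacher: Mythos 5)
Your argument is correct and is essentially the same as the paper's proof: both probe $\varphi(f)$ with linear forms $\ell$ that are $Q$-orthogonal to ${\rm Dep}(f)$, observe that such $\ell$ lie in $\calm_{\Q+}$ and are therefore fixed by $\varphi$, invoke locality preservation to get $\ell \perp^Q \varphi(f)$, and then pass from the resulting inclusion of polar sets to ${\rm Dep}(\varphi(f)) \subset {\rm Dep}(f)$ by nondegeneracy of $Q$. The only cosmetic difference is that the paper phrases the final step as an inclusion of polar subsets $({\rm Dep}(f))^\top \subseteq ({\rm Dep}(\varphi(f)))^\top$ rather than via the double-orthogonal $(V^\perp)^\perp = V$, but these are the same fact.
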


\begin {proof} For any $\ell$ in $\call _\Q$ viewed as an element of $(\R^k)^*$ for some $k\geq 1$, if $\ell \perpq  {\rm Dep} (f)$, then $\ell \perpq  f$, which implies that
$\varphi(\ell)\perpq  \varphi(f)
$
since $\varphi$ is a \qorth map.
Since $\ell$ is in $\calm_{\Q+}$, we have $\varphi(\ell)=\ell$.
Thus
$\ell \perpq  {\rm Dep }(\varphi(x))$. So ${\rm Dep}(f)^\top \subseteq {\rm Dep}(\varphi(x))^\top$ which
yields the statement.
\end{proof}

In the sequel, we fix a set $\cals\subseteq \calf$ of simplex fractions and let
\vspace{-.1cm}	
\begin{equation} \mlabel{eq:ab}
\calb\coloneqq \calb(\cals)\coloneqq  \Q(\Pi^Q(\mti{\cals})), \quad \cala\coloneqq \cala(\cals)\coloneqq  \calm_{\Q+}^Q\left(\Pi^Q(\mti{\cals})\right)
\vspace{-.1cm}	
\end{equation}
be the unitary locality subalgebra and $\calm_{\Q+}^Q$-subalgebra of $\calm_\Q$ generated by $\cals$, defined in Propositions~\mref{prop:genalg} and~\mref{prop:genalg2} respectively.

\begin{defn} Define a subset of $\Aut^Q(\cala)$ by
	\vspace{-.1cm}	
	\[{ \rm  Gal}^Q(\cala /\calm_{\Q+})\coloneqq \left\{\varphi\in { \rm  Aut}^Q(\cala)\,
	\left |\, \begin{array}{l} \varphi|_{\calm _{\Q+}}=\Id \\ 
		\varphi \, \text{preserves the p-residue, d-residue}\\
			 \text{and the locality subalgebra}\, \calb \end{array}  \right . \right\}
	\vspace{-.1cm}	
	\]
It will be called the {\bf locality Galois group} of $\cala$ over $\calm_{\Q+}$, thanks to Theorem~\mref{thm:phiiso}. 
	\mlabel{defn:Galoisgroup}
\end{defn}
\vspace{-.3cm}	
We next give a locality tensor product property of $\calm_\Q$.
\begin{prop} \mlabel{pp:tensor} Let $\cals\subseteq \calf$ and, as in Eq.~\meqref{eq:ab}, define
 \vspace{-.1cm}	
	\begin{equation} \mlabel{eq:ab2} \notag
		\calb\coloneqq \calb(\cals)\coloneqq  \Q(\Pi^Q(\mti{\cals})), \quad \cala\coloneqq \cala(\cals)\coloneqq  \calm_{\Q+}\left(\Pi^Q(\mti{\cals})\right).
\vspace{-.1cm}	
\end{equation}
	\begin{enumerate}
		\item \mlabel{it:locten}
		For each $\QQ$-subspace $U$ of $\R^\infty$, let
		$\cala_U\coloneqq \cala\cap \calm_{\QQ,U}$ and let $\calb_U$ denote the linear span of simplex fractions in $\calb$ with supporting space $U$. Let $\calm_{\QQ+}^U$ denote the space of holomorphic germs whose dependent space is contained in
		$$U^{\perpq }\coloneqq \left\{y\in \call(\C^\infty)\,\left|\, y\perpq  u, \forall u\in U \right. \right \}.$$
		Then we have the (inner) tensor product
		$$ \cala_U= \calm_{\QQ+}^U \otimes \calb_U,$$
that is,  $\calm_{\QQ+}^U$ and $\calb_U$ are linearly disjoint.		
\item \mlabel{it:linext}
		Let $(V,\top)$ be a locality vector space. Any pair of locality linear maps $\varphi:\calb\to V$ and $ \psi:\calm_{\Q+}\to V$ uniquely extends to a locality linear map
		$$\varphi\ot^Q \psi: \cala \to V.$$
	\end{enumerate}
\end{prop}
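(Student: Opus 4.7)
The plan is to prove (i) first and then derive (ii) by combining (i) with the supporting-space decomposition in Eq.~\eqref{eq:ugrad}. For (i), the key object is the multiplication map
\[
\mu: \calm_{\Q+}^U \otimes_\Q \calb_U \longrightarrow \cala_U, \qquad h \otimes T \longmapsto h\cdot T,
\]
which I aim to show is a $\Q$-linear isomorphism. It is well-defined: for $h \in \calm_{\Q+}^U$ one has $\mathrm{Dep}(h) \subseteq U^{\perpq}$, while for $T \in \calb_U$ one has $\mathrm{Dep}(T) \subseteq \supsp(T) = U$ by Lemma~\ref{lem:fracdep}, so $h \perpq T$ and $hT \in \cala$.

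For surjectivity, take $f \in \cala_U$ and write $f = \sum_i h_i S_i$ with $h_i \in \calm_{\Q+}$, $S_i \in \Pi^Q(\mti{\cals})$ and $h_i \perpq S_i$, as per the definition of $\cala$. Using the decomposition in Eq.~\eqref{eq:ugrad} to isolate the $U$-component, I may restrict to the terms with $\supsp(S_i) = U$; for these, $S_i \in \calb_U$ by construction, and $h_i \perpq S_i$ together with $\supsp(S_i) = U$ yields $\mathrm{Dep}(h_i) \subseteq U^{\perpq}$, i.e., $h_i \in \calm_{\Q+}^U$. For injectivity, suppose $\sum_{i=1}^n h_i T_i = 0$ with $\{T_i\}$ $\Q$-linearly independent in $\calb_U$. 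Choose a basis $L_1, \ldots, L_n$ of $U$ in which each $T_i = 1/M_i$ has $M_i$ a monomial in the $L$'s, and extend by vectors $\ell_{n+1}, \ldots, \ell_k$ in $U^{\perpq}$ so that each $h_i$ depends only on the $\ell$'s. Multiplying through by a common denominator $M$ of the $M_i$'s transforms the identity into
\[
\sum_i h_i(\ell)\, P_i(L) = 0,
\]
where $P_i \coloneqq M/M_i$ are $\Q$-linearly independent polynomials in $L$ (since multiplication by $M$ is a $\Q$-linear injection). Fixing $\ell$ in a neighborhood of $0$, the $\C$-linear independence of the $P_i$, which follows from their $\Q$-linear independence, forces $h_i(\ell) = 0$ pointwise, hence $h_i \equiv 0$.

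For (ii), intersecting Eq.~\eqref{eq:ugrad} with $\cala$ yields $\cala = \calm_{\Q+} \oplus \bigoplus_{U \neq 0} \cala_U$, and by (i) each summand $\cala_U$ is canonically identified with $\calm_{\Q+}^U \otimes_\Q \calb_U$. I then define $\varphi \otimes^Q \psi$ summand-wise: by $\psi$ on $\calm_{\Q+}$, and on $\cala_U$ by sending $h \otimes T$ to the locality pairing of $\psi(h)$ with $\varphi(T)$ in $V$, which is legitimate because $h \perpq T$ implies $\psi(h)\, \top\, \varphi(T)$ via the assumed locality of $\varphi$ and $\psi$. Uniqueness follows because any locality-linear extension is already forced on $\calm_{\Q+}$ and on $\calb$, and the linear disjointness from (i) pins down its value on each $\cala_U$; locality of the resulting map is inherited from that of $\varphi$, $\psi$ together with the fact that $\perpq$-relations in $\cala$ decompose along supporting spaces by Theorem~\ref{thm:poldep}.

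The main obstacle is the injectivity step in (i): one has to justify carefully that the $P_i$'s inherit $\Q$-linear independence from the $T_i$'s and that this upgrades to $\C$-linear independence, so that the specialization in $\ell$ forces the pointwise vanishing of each $h_i$. A secondary subtlety in (ii) is interpreting $\psi(h)\cdot\varphi(T)$ in $V$ through whatever locality pairing $V$ carries, and verifying that the extension respects $\perpq$ on all of $\cala$, not merely on individual $\cala_U$-components.
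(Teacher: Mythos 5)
Your proof follows essentially the same two-step strategy as the paper: for (i), surjectivity from the defining form of elements of $\cala_U$, and linear disjointness by a specialization argument; for (ii), reduction to each graded piece $\cala_U$ via Eq.~\eqref{eq:ugrad}. The one genuine variant is in the injectivity step of (i): you clear denominators and invoke $\C$-linear independence of the polynomials $P_i=M/M_i$, whereas the paper instead chooses a single point $\{w_\ell^0\}$ where all the $h_i$ are simultaneously nonzero and specializes there. Both are sound; your route avoids the nonvanishing-point argument at the cost of a polynomial-independence observation. One local slip: the claim that one can choose a \emph{single} basis $L_1,\ldots,L_n$ of $U$ making every $M_i$ a monomial is false in general (take $M_1=z_1z_2$ and $M_2=z_1(z_1+z_2)$, both supported on $\mathrm{span}(e_1,e_2)$); what you actually use is only that each $M_i$ is a polynomial in any basis of $U$, so that $M=\prod_i M_i$ is a common multiple and $P_i=M/M_i$ is a polynomial --- the rest of the argument is unaffected, but the monomial assertion should be dropped. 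Finally, your phrase ``the locality pairing of $\psi(h)$ with $\varphi(T)$ in $V$'' tacitly assumes $V$ carries a multiplication, exactly as does the paper's formula $\sum_i\psi(h_i)\varphi(S_i)$; since the Proposition is stated for a bare locality \emph{vector space} $V$, this is an imprecision you inherit from the statement itself rather than introduce, and it is resolved in the only application (Theorem~\ref{thm:phiiso}) where $V$ is a locality algebra.
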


\begin{proof}
	\meqref{it:locten} By Proposition~\mref{prop:genalg2},
\vspace{-.2cm}	
\begin{equation} \mlabel{eq:expand}
	f= \sum_i h_i S_i\,,
\vspace{-.2cm}
\end{equation}
where $S_i\in \subS$ with $\supp(S_i)=U$, $h_i$ is holomorphic with dependent space contained in $U^{\perpq }$ and hence $h_i\perpq S_i$. So $\cala_U= \calm_{\QQ+}^U  \calb_U$ as a product of subsets.

To prove the disjointness, we more generally consider a linear combination
\vspace{-.1cm}	
\begin{equation} \mlabel{eq:exp2}
	\sum_i h_i S_i=0,
\vspace{-.3cm}
\end{equation}
where $\supp(S_i)=U$, $h_i$ is holomorphic with dependent space contained in $U^{\perpq }$. Suppose that $\{S_i\}_i$ is linearly independent, but $h_i\neq 0$ for all $i$ in the sum. Denote $V=\sum_i \supp(h_i)$ which is a finite-dimensional subspace of $U^{\perpq }$. Then $h_i$ is defined on $V$. Thus we can choose disjoint sets of variables $\{z_k\}$ of $U$ and $\{w_\ell\}$ of $V$ respectively. From $h_i\neq 0$, there is $\{w_{\ell}^0\}$ such that $h_i(\{w_{\ell}^0\})\neq 0$ for all $i$. Then Eq.~\meqref{eq:exp2} gives
\vspace{-.2cm}	
$$ \sum_i h_i(\{w_\ell^0\})\, S_i =0,
\vspace{-.2cm}	
$$
showing that $\{S_i\}_i$ is linearly dependent. This gives the desired contradiction.

\noindent
\meqref{it:linext}
By Proposition~\mref{prop:genalg2}, $\cala$ is linearly spanned by homogeneous elements with respect to the grading by supporting space in the grading Eq.~\meqref{eq:ugrad}. Thus $\cala$ has the restricted grading
$$\cala=\bigoplus_{U \subset \R^\infty } \cala_U.
 $$
Then we just need to show that $\varphi$ and $\psi$ uniquely define a locality linear map
	$$ (\varphi\ot^Q \psi)_U: \cala_U\to V
	\vspace{-.3cm}
	$$
	for each subspace $U$ of $\R^\infty$.
	
By Item~\meqref{it:locten}, for any linear map $\varphi:\calb_U\to V$ and $\psi:\calm_{\QQ+}^U:\to V$, there is a unique linear map
	\begin{equation}
		\mlabel{eq:exten2}
(\varphi\ot^Q \psi)_U:\cala_U\to V,\quad f\mapsto \sum_i \psi(h_i) \varphi(S_i)
\vspace{-.3cm}
	\end{equation}
for any element $f=\sum_i h_iS_i$ in $\cala_U$, expressed in the form in Eq.~\meqref{eq:expand}.
Indeed, $(\varphi\ot^Q\psi)_U$ is simply the tensor product of the restriction of $\psi$ to $\calm_{\QQ+}^U$ and the restriction of $\varphi$ to $\calb_U$. Taking the sum over all subspaces $U$ of $\call(\R^\infty)$ including $U=0$, we have an extension $\varphi\ot^Q\psi$ of $\varphi$ and $\psi$ to $\cala$.
\end{proof}

The remaining part of the section is devoted to the proof of the following theorem which extends an element of   ${\rm Aut}_{\rm Res}^Q(\calb )$ to an element of  $\Gal^Q(\cala/\calm_{\Q+})$.

\begin{thm}  \mlabel{thm:phiiso}
Let $\cals\subseteq \calf$ and let $\cala$ and $\calb$ be as defined in Eq.~\meqref{eq:ab}.
\begin{enumerate}	
\item
Any element $\varphi\in {\rm Aut}_{{\rm Res}}^Q(\calb)$  (see Proposition~\mref{pp:homdef}) uniquely extends to
 an  element of  $\Gal^Q(\cala/\calm_{\Q+})$ defined by
\vspace{-.2cm}	
	\begin{equation}\tilde \varphi\left(\sum_i h_iS_i\right)\coloneqq \sum_i h_i \varphi(S_i)
\mlabel{eq:varphiextended}
\vspace{-.2cm}	
	\end{equation}
for
\vspace{-.2cm}	
\begin{equation}
f=\sum_i h_i S_i\in \cala, h_i\in \calm_{\Q+}, S_i\in \Pi^Q(\mti{\cals})
\mlabel{eq:fexpn}
\vspace{-.2cm}	
\end{equation}
as in Proposition~\mref{prop:genalg2}.
\mlabel{it:phi1}
 \item
\mlabel{it:phi2}
The subset $\Gal^Q(\cala /\calm_{\Q+})\subseteq \Aut^Q(\cala)$ is a subgroup. Restricting to $\calb$ gives rise to a group isomorphism
\vspace{-.2cm}	
$$\Gal^Q(\cala /\calm_{\Q+}) \cong {\rm Aut}^Q_{{\rm Res}}(\calb).
\vspace{-.2cm}	
$$
\end{enumerate}
\end{thm}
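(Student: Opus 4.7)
Part (\ref{it:phi1}): For $\varphi \in \Aut^Q_{\rm Res}(\calb)$, my plan is to define $\tilde\varphi$ by applying Proposition~\ref{pp:tensor}(\ref{it:linext}) with $V=\cala$ to the locality linear maps $\iota\circ\varphi\colon\calb\to\cala$ and $\iota\colon\calm_{\Q+}\to\cala$ (where $\iota$ denotes inclusion), producing the unique locality linear extension $\tilde\varphi \coloneqq \varphi\otimes^Q\Id$ realising (\ref{eq:varphiextended}) and automatically satisfying $\tilde\varphi|_{\calm_{\Q+}}=\Id$ and $\tilde\varphi(\calb)\subseteq\calb$. To upgrade this to multiplicativity on a pair $f\perpq g$ expanded as $\sum h_iS_i$ and $\sum h'_jS'_j$, I would transport the pairwise $Q$-orthogonality of $\{h_i\},\{S_i\},\{h'_j\},\{S'_j\}$ (as analysed in the proof of Proposition~\ref{prop:genalg2}) to the images via Proposition~\ref{coro:Depphi}, which gives $\depsp(\varphi(S))\subseteq\depsp(S)$, and then invoke the multiplicativity of $\varphi$ on $\calb$. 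Bijectivity would follow by applying the same extension procedure to $\varphi^{-1}\in\Aut^Q_{\rm Res}(\calb)$ (which exists since $\Aut^Q_{\rm Res}(\calb)$ is a group by Proposition~\ref{pp:homdef}) and verifying on generators $hS$ that $\widetilde{\varphi^{-1}}$ is a two-sided inverse of $\tilde\varphi$. Residue preservation would follow by writing $\tilde\varphi(f)-f=\sum_{i,k}a_{ik}\,h_i\,S_{ik}$ using (\ref{eq:AutQB}); since each $S_{ik}$ has both strictly smaller $\pord$ and strictly smaller $\dim\supsp$ than $S_i$ (and $h_i\perpq S_{ik}$), the correction polar germs lie strictly below the top filtration levels defining $\mathrm{pRes}$ and $\mathrm{dRes}$.

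Part (\ref{it:phi2}): I would first observe that $\Gal^Q(\cala/\calm_{\Q+})$ is closed under composition directly, and under inversion because $\mathrm{pRes}\circ\varphi=\mathrm{pRes}$ implies $\mathrm{pRes}\circ\varphi^{-1}=\mathrm{pRes}$ (and likewise for $\mathrm{dRes}$ and for the $\calm_{\Q+}$-fixing property), so it is a subgroup. I would then define the restriction map $\rho\colon\Gal^Q(\cala/\calm_{\Q+})\to\Aut^Q(\calb)$; it is visibly a group homomorphism, and is injective by the uniqueness clause in Proposition~\ref{pp:tensor}(\ref{it:linext}) applied with $\calm_{\Q+}$-data fixed to the identity. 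Surjectivity onto $\Aut^Q_{\rm Res}(\calb)$ would then follow from Part (\ref{it:phi1}), provided we show that $\rho(\varphi)\in\Aut^Q_{\rm Res}(\calb)$, i.e.\ that $\varphi|_\calb$ satisfies (\ref{eq:AutQB}). For $S\in\Pi^Q(\mti{\cals})$, write $\varphi(S)=\sum_j c_jT_j$; Propositions~\ref{coro:Depphi} and~\ref{prop:polardep} would force $\supsp(T_j)\subseteq\supsp(S)$ for every $T_j$ with nonzero contribution. Preservation of $\mathrm{dRes}$ would then yield $\sum_{\supsp(T_j)=\supsp(S)}c_jT_j=S$, and preservation of $\mathrm{pRes}$ would yield $\max_j\pord(T_j)=\pord(S)$ together with $\sum_{\pord(T_j)=\pord(S)}c_jT_j=S$.

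The main obstacle, and the hardest step of the proof, will be combining these two residue identities. Subtracting them, I would split the resulting vanishing sum into a component supported in the stratum $(\supsp(S),<\pord(S))$ and a component in $(\subsetneq\supsp(S),\pord(S))$; by Lemma~\ref{lem:SepSpace}, which separates a vanishing sum of polar germs by both supporting space and p-order, each of these two components must vanish independently. Substituting back into $\varphi(S)=\sum_j c_jT_j$, the only surviving corrections beyond $S$ itself lie in the doubly-below stratum $(\subsetneq\supsp(S),<\pord(S))$, yielding $\varphi(S)=S+\sum_i a_iS_i$ with each $S_i$ in $\Pi^Q(\mti{\cals})$ satisfying both strict inequalities required by (\ref{eq:AutQB}). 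The subtlety is precisely that neither residue preservation by itself forces the joint condition: the mixed single-property correction terms can only be eliminated by simultaneous use of both preservations together with the sharp bigraded separation furnished by Lemma~\ref{lem:SepSpace}. Once this is in hand, Part~(\ref{it:phi1}) supplies the inverse of $\rho$ and the desired isomorphism $\Gal^Q(\cala/\calm_{\Q+})\cong\Aut^Q_{\rm Res}(\calb)$ follows.
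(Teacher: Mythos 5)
Your proposal is essentially correct and follows the same architecture as the paper's proof: Part~(\ref{it:phi1}) extends $\varphi$ via the tensor-product property of Proposition~\ref{pp:tensor}, checks residue preservation from the special form~\eqref{eq:AutQB}, checks locality multiplicativity, and obtains bijectivity from functoriality of the extension (equivalently, extending $\varphi^{-1}$); Part~(\ref{it:phi2}) establishes injectivity of the restriction and shows $g|_\calb\in{\rm Aut}^Q_{\rm Res}(\calb)$ from the p-residue and d-residue conditions. Two small remarks. First, in Part~(\ref{it:phi1}) you cite Proposition~\ref{coro:Depphi} to get $\depsp(\varphi(S))\subseteq\depsp(S)$, but that proposition concerns morphisms of $\cala$ that restrict to the identity on $\calm_{\Q+}$; it does not apply to $\varphi:\calb\to\calb$, since $\calb$ does not contain $\calm_{\Q+}$. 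Fortunately the needed fact, $\depsp(\varphi(S))=\supsp(S)$ for $S\in\Pi^Q(\cals)$, follows directly from the defining form~\eqref{eq:AutQB} together with Lemma~\ref{lem:fracdep} and Theorem~\ref{thm:poldep} (this is in effect what the paper does by invoking ``the special form of $\varphi$ on $\calb$''); so the slip is purely a citation issue. Second, the ``main obstacle'' you single out --- that neither residue preservation alone yields the joint condition on $\pord$ and $\supsp$, and that one must separate the vanishing sum by the bigraded decomposition of Lemma~\ref{lem:SepSpace} --- is a genuine subtlety that the paper treats rather tersely, presenting the two residue arguments consecutively and leaving the combination to the reader; your explicit treatment via subtracting the two identities and invoking the bigraded separation is a fair and welcome clarification of that step, and your overall conclusion matches the paper's.
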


\begin{proof}
\meqref{it:phi1}
Applying Proposition~\mref{pp:tensor} with $\psi$   the identity map, for any linear map $\varphi:\calb_U\to \calb_U$, there is a unique linear map
\vspace{-.2cm}	
\begin{equation}
 \mlabel{eq:exten3}
\tilde{\varphi}:\calm_{\QQ,U}\to \calm_{\QQ,U}, \quad  f\mapsto \sum_i h_i \varphi(S_i)
\vspace{-.2cm}	
\end{equation}
for any element $f=\sum_i h_iS_i$ in $\calm_{\QQ,U}$, expressed in the form in Eq.~\meqref{eq:expand}.

We next show that the $\tilde{\varphi}$ obtained this way has the form in Eq.~\meqref{eq:varphiextended}.
Let $f=\sum_i h_i S_i$ as in Eq.~\meqref{eq:fexpn}. By grouping the terms according to the supporting spaces of $S_i$ as in Eq.~\meqref{eq:ugrad}, we have
$$ f=\sum_{U\in\calu} f_U\ \text{ with }\ f_U=  \sum_{j} a_{Uj} S_{Uj},$$
where $\calu$ is a set of subspaces $U$ of $\R^\infty$ for which $f_U\neq 0$, and for each $U\in \calu$, we have
$$S_{Uj}\in \Pi^Q(\cals), \quad \depsp(S_{Uj})=\supsp(S_{Uj})=U, \quad \depsp(a_{Uj})\perpq  \depsp(S_{Uj})$$
and each term $a_{Uj}S_{Uj}$ is one of the terms in $f=\sum_i h_i S_i$. Thus $f_U$ is in $\calm_{\QQ,U}$ and  we can apply Eq.~\meqref{eq:exten3} and obtain
\vspace{-.2cm}	
$$ \tilde{\varphi}(f_U) = \sum_i a_{Uj}\varphi(S_{Uj}),
\vspace{-.1cm}	
$$
which takes the form in Eq.~\meqref{eq:varphiextended}.
Hence so is $\tilde{\varphi}(f)$. This is what we want.

The fact that $\tilde \varphi$ preserves the p-residue in Eq.~\meqref{eq:plt} and the d-residue in Eq.~\meqref{eq:dlt} follows from the definition and the special form of $\varphi$ on $\calb$.

For $f=\sum_{i} h_iS_i\in \cala=\calm_{\Q+}(\Pi^Q(\cals))$ with $h_i\in \calm_{\Q +}, S_i\in \cals$, the p-residue $\pres(f)$ of $f$ is of the form
$\sum_i^\prime h_i (0) S_i$ where the sum is over simplex fractions $S_i$ in $f$ with the highest order. By the definition of $\tilde{\varphi}$, the sum $\sum_i^\prime h_i S_i$ is still the part of $\tilde{\varphi}(f)$ with the highest order. Therefore,
$\pres(\tilde{\varphi}(f))=\pres(f)$.

The same argument, applied to the dimensions of supporting spaces of the polar germs, shows that $\tilde{\varphi}$ preserves the d-residues.

We next check that $\widetilde{\varphi}$ is a \qorth $\calm_{\Q+}$-algebra homomorphism. For $a, b \in \cala$ with $a\perpq  b$, as in the proof of Proposition \mref {prop:genalg}, we can write them as
\vspace{-.1cm}	
$$a=\sum _U\sum _ih_{U_i}S_{U_i}, \quad
b=\sum _V\sum _j g_{V_j}T_{V_j}
\vspace{-.2cm}	
$$
such that
\vspace{-.2cm}	
$$h_{U_i} \not =0, g_{V_j} \not =0, \ \{h_{U_i}, S_{U_i}\} \perp  \{g_{V_j}, T_{V_j}\}.
\vspace{-.1cm}	
$$
By the special form of $\varphi$, we have
\vspace{-.1cm}	
$${\rm Dep }(\varphi (S_{U_i}))=U,\quad {\rm Dep }(\varphi (T_{Vi}))=V.$$
Then it follows from the definition of $\tilde{\varphi}$ that
$\tilde \varphi (a)\perpq  \tilde \varphi (b).
$
Moreover (treating $h_0$ as $h_{U}S_{U}$ for $U=0$ and the same for $g_0$),
\vspace{-.1cm}	
{\small
$$\tilde \varphi (ab)=
\tilde{\varphi}\left ( \sum _{U,V}h_{U_i}g_{V_j}S_{U_i} T_{V_j}\right)
= \sum_{U,V} h_{U_i}g_{V_j} \varphi(S_{U_i}T_{V_j})
= \sum_{U,V} h_{U_i}g_{V_j} \varphi(S_{U_i})\varphi(T_{V_j})
= \tilde \varphi (a)\tilde \varphi (b).
\vspace{-.1cm}	
$$
}

By construction, the extension $\varphi \mapsto \tilde \varphi$ is functorial:
$$\widetilde {\varphi \psi}=\tilde \varphi \tilde \psi,
\quad \widetilde \id _\calb =\id _\cala.
$$
So for any $\varphi \in {\rm Aut}_{{\rm Res}}^Q(\calb)$, $\tilde \varphi $ is a linear bijection.
The functorial property also shows that $\tilde{\varphi}\,\widetilde{\varphi^{-1}}=\id_\cala$. So $\tilde \varphi ^{-1}=\widetilde{\varphi^{-1}}$. Thus $\widetilde{\varphi}^{-1}$ is also a \qorth $\calm_{\Q+}$-algebra homomorphism. Thus $\tilde{\varphi}$ is in $\Gal^Q(\cala /\calm_{\Q+})$ for all $\varphi\in {\rm Aut}_{{\rm Res}}^Q(\calb)$.
By
$\widetilde{\varphi\,\psi}=\tilde{\varphi}\tilde{\psi}, \quad \tilde{\varphi}^{-1}=\widetilde{\varphi^{-1}},$
the image of the map
\vspace{-.1cm}	
\begin{equation}
\Psi: {\rm Aut}^Q_{{\rm Res}}(\calb)\to  \Gal^Q(\cala /\calm_{\Q+}): \varphi \mapsto \tilde \varphi,
\mlabel{eq:psi}
\vspace{-.1cm}	
\end{equation}
is a subgroup of $\Aut^Q(\cala)$.

\meqref{it:phi2}
The map $\Psi$ defined in Eq.~\meqref{eq:psi} is clearly injective and its  image is in $\Gal^Q(\cala /\calm_{\Q+})$.
Now for any $g \in \Gal^Q(\cala /\calm_{\Q+})$, since it preserves $\calb$, for any $S\in \Pi ^Q (\cals)$,
\vspace{-.2cm}	
 $$g(S)=\sum_i a_iS_i,
\vspace{-.2cm}	
$$
with $a_i\in \Q$, $S_i\in \Pi ^Q(\cals)$.

Note that the p-residue of $g(S)$ equals to the p-residue of $S$ which is just $S$. So we can write
{\small
$$g(S)=\sum a_iS_i=\sum _{{\pord (S_i)}={\pord}(S)}a_iS_i+\sum _{{\pord (S_i)}<{\pord}(S)}a_iS_i
$$
}
and
{\small  $$\sum _{{\pord (S_i)}={\pord}(S)}a_iS_i=S.
$$}
So
{\small $$g(S)=S+\sum _{{\pord (S_i)}<{\pord}(S)}a_iS_i.
$$}

Also, the d-residue of $S$ is also $S$. Write
$$ g(S)=S+\sum_U S_U=S+\sum_{U} \sum_{i} a_{U_i}S_{U_i},$$
with $\dim U<\dim \supsp(S)$, $0\neq S_U=\sum_{i} a_{U_i}S_{U_i}\in \Q(\Pi^Q(\cals))$ and $\supsp(S_{U_i})=U.$
By Lemma~\mref{lem:fracdep}, Proposition~\mref{coro:Depphi} and  Theorem~\mref{thm:poldep}, we obtain
$$U=\depsp(S_U)\subseteq \depsp(g(S))=\depsp(S)=\supsp(S).$$
Hence $\supsp(S_{U_i})\subsetneq \supsp(S)$.
Thus $g|_\calb$ is in ${\rm Aut}^Q_{{\rm Res}}(\calb)$
 and we have
$\Psi (g|_\calb)=g,$
giving us the desired isomorphism.
\end{proof}
\vspace{-.4cm}	

\section{Locality polynomial algebras generated by Chen and Speer fractions}
\mlabel{sec:locpoly}
In this section, we show that the \qorth subalgebras of $\calf$ generated by Chen fractions of Example~\ref{ex:chen} and by the more general class of \thing fractions over $\Q$ are both \qorth polynomial algebras, defined to be the following locality version of polynomial algebras.

\begin{defn}\label{defn:localg}
	Let $(A,\top)$ be a locality $K$-algebra. Let $X$ be a subset of $A$.
	\begin{enumerate}	
		\item A {\bf locality monomial} built from $X$ is a product $x_1\cdots x_r$ where $x_i\top x_j$ for $i\neq j$.
		\item The set $X$ is called {\bf locality algebraically independent} if distinct  locality monomials built from $X$ are linearly independent.
		\item The set $X$ is called a {\bf locality generating set} of $(A,\top)$ if the only locality subalgebra of $(A,\top)$ containing $X$ is $A$ itself.
		\item
		The locality algebra $(A,\top)$ is called a {\bf locality polynomial algebra} generated by $X$ if $X$ is locality algebraically independent, and is a locality generating set of $(A,\top)$. Then we denote
		$ A\cong \bfk_\sloc[X].$
\end{enumerate}
\end{defn}

\subsection {Locality shuffle algebras as locality polynomial algebras}
\mlabel{sec:localityLydonwords}
\mlabel{ss:shuf}
Let us first recall how shuffle algebras  can be viewed as  polynomially generated algebras. For a commutative ring $K$ and a set $X$, let $A=\bfk X$ be the linear space with a basis $X$, $W(X)$ be the set of words with letters in $X$ including the empty word.
The shuffle algebra on $X$ is the space
$$ \Sh(\bfk X): = \bfk W(X),$$
equipped with the shuffle product $\ssha$. More precisely,
$$
(\alpha_1\vec{\alpha}') \ssha (\beta_1 \vec{\beta}')
\coloneqq  \alpha_1 (\vec{\alpha}' \ssha (\beta_1 \vec{\beta}'))
+ \beta_1 ((\alpha_1\vec{\alpha}')\ssha \vec{\beta}'),
\quad \alpha_1,\beta_1\in X, \vec{\alpha}',\vec{\beta}'\in W(X),
$$
with the initial condition $1\ssha \vec{\alpha}=\vec{\alpha} = \vec{\alpha} \ssha 1.$

For a well-ordered set $(X,\leq)$, equip  $W(X)$ with the lexicographic order $\lexle$. A word $w$ in $W(X)$ is called a {\bf Lyndon word} if $w$ is the smallest among all its rotations. Equivalently, $w$ is Lyndon if $w$ is lexicographically smaller than all of its suffixes: $w=uv$ with $u, v\neq 1$ implies $w\lexle v$.

\begin{theorem} \mcite{CFL,Ra} 	\mlabel{thm:lyn}
	Let $(X,\leq)$ be a well-ordered set.
	\begin{enumerate}
		\item  $($Chen-Fox-Lyndon$)$ Any word $w $ in $W(X)$ has a unique factorisation
		\begin{equation}\mlabel{eq:stfact}
			w=w_1^{i_1}\cdots w_k^{i_k}
		\end{equation}
		where $w_1\lexg \cdots \lexg w_k$ are Lyndon words and $i_1,\ldots, i_k\geq 1$.
		\mlabel{it:lyn1}
		\item 		\mlabel{it:lyn2}	
		$($Radford$)$ Let $\bfk$ be a $\QQ$-algebra. The set $Lyn(X)$ of Lyndon words on $X$ is an algebraically independent generating set of $\Sh(\bfk X)$. Thus $\Sh(\bfk X)\cong \bfk[\Lyn(X)]$ is a polynomial algebra. In fact, with the factorisation $w=w_1^{i_1}\cdots w_k^{i_k}$ in Eq.~\meqref{eq:stfact}, we have
		\begin{equation} \mlabel{eq:lynrec}
			w= \frac{1}{i_1! \cdots i_k!} w_1^{\ssha i_1} \ssha \cdots \ssha w_k^{\ssha i_k} + \text{smaller terms}.
		\end{equation}
	\end{enumerate}
\end{theorem}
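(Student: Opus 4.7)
The plan is to follow the classical proof strategies of Chen-Fox-Lyndon for \meqref{it:lyn1} and of Radford for \meqref{it:lyn2}. The first part is purely combinatorial; the second reduces to a triangularity statement which yields both generation and algebraic independence as a direct consequence.

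For \meqref{it:lyn1}, I would first record the standard equivalent characterisations: $w\in W(X)$ is Lyndon if and only if $w \lexle v$ for every non-empty proper suffix $v$, equivalently $w$ is strictly smaller than each of its non-trivial rotations. From this one extracts the concatenation lemma: if $\ell_1 \lexle \ell_2$ are Lyndon, then $\ell_1 \ell_2$ is Lyndon. For existence, I would induct on $|w|$: let $v$ be the lexicographically smallest non-empty suffix of $w$, which is automatically Lyndon since any proper suffix of $v$ is itself a suffix of $w$; writing $w=uv$ and applying induction to $u$ produces a factorisation $w=\ell_1\cdots \ell_n$ into Lyndon words, and the concatenation lemma forces $\ell_1 \lexge \cdots \lexge \ell_n$, since two adjacent blocks with $\ell_i \lexle \ell_{i+1}$ could be merged into a single Lyndon block. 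Grouping equal adjacent factors then yields Eq.~\meqref{eq:stfact}. For uniqueness, one argues that the last Lyndon block $w_k$ must be the lexicographically smallest non-empty Lyndon suffix of $w$, and then recurses on the prefix.

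For \meqref{it:lyn2}, the substantive content is the triangularity Eq.~\meqref{eq:lynrec}. I would order $W(X)$ by comparing CFL factorisations as weakly decreasing tuples of Lyndon words, under the induced lexicographic order on tuples, so that $w$ is the minimum in its CFL-class. By induction on $|w|$, expand $w_1^{\ssha i_1}\ssha\cdots\ssha w_k^{\ssha i_k}$ using the recursive definition of $\ssha$: each term in the expansion is an interleaving of the letters of the $w_j$'s preserving their internal orders. The interleavings that place the copies of each $w_j$ as a consecutive left-to-right block produce exactly $i_1!\cdots i_k!$ copies of the concatenation $w_1^{i_1}\cdots w_k^{i_k}=w$. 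Every other interleaving produces a word $u$ whose CFL factorisation is strictly larger in the chosen order; this is the combinatorial core, and relies on the fact that breaking a Lyndon block of $w$ and rearranging letters across blocks can only produce either a non-Lyndon fragment that is absorbed into a larger Lyndon block, or a strictly lexicographically larger Lyndon block, in either case increasing the CFL tuple.

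From Eq.~\meqref{eq:lynrec}, the transition matrix from iterated shuffle monomials $\{w_1^{\ssha i_1}\ssha\cdots\ssha w_k^{\ssha i_k}\mid w_1\lexg\cdots\lexg w_k\ \text{Lyndon},\ i_j\geq 1\}$ to the word basis of $\Sh(\bfk X)$ is triangular with nonzero diagonal entries $i_1!\cdots i_k!$, hence invertible over any $\QQ$-algebra $\bfk$. This simultaneously proves that $\Lyn(X)$ is algebraically independent under $\ssha$ and generates $\Sh(\bfk X)$, so $\Sh(\bfk X)\cong \bfk[\Lyn(X)]$. The principal obstacle is Eq.~\meqref{eq:lynrec} itself: one must fix an ordering of words whose restriction to the CFL class of $w$ has $w$ as its unique minimum, and then argue that every non-concatenation interleaving strictly increases the CFL tuple. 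The factor $i_1!\cdots i_k!$ comes from the $i_j!$ permutations of the identical copies of $w_j$ inside $w_j^{\ssha i_j}$ that yield the same concatenation $w$; once this multiplicity count and the strict inequality for all other interleavings are in place, the rest is formal linear algebra over $\bfk$.
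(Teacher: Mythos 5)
The paper does not prove this theorem; it quotes it from the classical sources \cite{CFL,Ra}, so your proposal has to be measured against the standard arguments. Your treatment of part (i) is essentially that standard argument (smallest suffix is Lyndon, merge adjacent increasing factors, last factor equals the smallest nonempty suffix) and is fine as a sketch, modulo the usual care in the merging and uniqueness steps.

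In part (ii), however, the step you yourself identify as the combinatorial core is stated with the wrong direction, and the order you propose does not do what you claim. You assert that every interleaving other than the block-concatenations has CFL tuple strictly \emph{larger} than that of $w$, so that $w$ is the \emph{minimum}. Already for $X=\{a<b\}$ and $w=ba$, with CFL factors $b\lexgr a$, one has $b\ssha a=ba+ab$, and the only other interleaving $ab$ is itself Lyndon with CFL tuple $(ab)$, which is strictly \emph{smaller} than $(b,a)$ in the lexicographic order on tuples (since $ab\lexle b$), and $ab$ is also smaller than $ba$ as a word; similarly $ab\ssha ab=2\,abab+4\,aabb$ with $aabb$ smaller than $abab$ in both senses. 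The correct classical statement, which is exactly what Eq.~\eqref{eq:lynrec} encodes, is that for the \emph{decreasing} factorisation $w=w_1^{i_1}\cdots w_k^{i_k}$ the concatenation $w$ is the lexicographically \emph{largest} word occurring in $w_1^{\ssha i_1}\ssha\cdots\ssha w_k^{\ssha i_k}$; it occurs with coefficient $i_1!\cdots i_k!$, and all other words are strictly smaller (they are letter-permutations of $w$, so one works within a fixed multidegree, where finitely many words occur and any strict triangularity suffices for your transition-matrix argument). Consequently your heuristic justification (``breaking a Lyndon block \ldots\ yields a strictly lexicographically larger Lyndon block, increasing the CFL tuple'') fails on the same example and cannot be the basis of the proof. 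The repair is the standard lemma: among all interleavings of the multiset of Lyndon factors, the lexicographic maximum is the decreasing concatenation $w$, attained by exactly $i_1!\cdots i_k!$ interleavings (the proof uses that a Lyndon word is strictly smaller than its proper nonempty suffixes); the multiplicity count also needs this uniqueness statement, not just the observation that block-respecting interleavings give $w$. With ``smaller terms'' in place of ``larger'', the rest of your linear algebra goes through verbatim, and this is also the form the paper actually uses later in the proof of Theorem~\ref{thm:lynloc}, where it matters that the correction terms are permutations of the letters of $w$ and strictly smaller.
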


From a set $U$ we build a set
$$ \overline{\ug}\coloneqq \{x_0\}\sqcup \{ x_u\ |\ u\in \ug\}.$$
 For a well-ordered set $(U,\leq)$, we then define a well-order $\leq $ on $\overline{U}$ by imposing
\begin{equation} \mlabel{eq:uorder}
x_0<x_u, \ \text{ and } x_u\leq x_v \Leftrightarrow u\leq v, \
 \forall u, v\in U.
\end{equation}

Denote
$$W_1(\overline{\ug})\coloneqq \{1\}\cup \prod _{u\in U}  W(\overline{\ug})x_u \text{ and }    \Sh_1(\bfk \overline{\ug})\coloneqq \bfk W_1(\overline{\ug}) = \bfk \bigoplus \left(\bigoplus_{u\in \ug} \Sh(\bfk \overline{\ug})x_u\right).
$$

Note that $\Sh_1(\bfk\overline{U})$ is closed under the shuffle product, and the factorisation of $w$ in $  W_1(\overline{\ug})$ given in Eq.~\meqref{eq:stfact} has its factors in $W_1(\overline{U})$. Thus we obtain (see also~\cite[\S~3.3.1]{Zh})
\begin{prop}
Let $\Lyn_1(\overline{U})\coloneqq \Lyn(\overline{U})\backslash\{x_0\}$, that is, the set of Lyndon words in $\overline{U}$ that do not end with $x_0$. Then $\Sh_1(\bfk\overline{U})$ is a subalgebra of $\Sh(\bfk \overline{U})$ and is a polynomial algebra generated by $\Lyn_1(\overline{U})$.
\mlabel{pp:1lyn}
\end{prop}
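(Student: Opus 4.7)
The plan is to reduce the statement to Radford's theorem (Theorem~\mref{thm:lyn}, item~\mref{it:lyn2}) applied to $\Sh(\bfk\overline{U})$ by exploiting the fact that the property of ``not ending in $x_0$'' is preserved by all the relevant operations.

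I would first verify that $\Sh_1(\bfk\overline{U})$ is a shuffle subalgebra. The key observation is that every term appearing in the shuffle product $u\ssha v$ of two words ends with either the last letter of $u$ or the last letter of $v$; hence if neither $u$ nor $v$ ends with $x_0$, the same holds for every word in $u\ssha v$, so $\Sh_1(\bfk\overline{U})$ is closed under $\ssha$.

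Next, I would show that the Chen--Fox--Lyndon factorisation in Eq.~\meqref{eq:stfact} of any $w\in W_1(\overline{U})$ uses only Lyndon words from $\Lyn_1(\overline{U})$. For this I first recall the standard fact that any Lyndon word of length $\geq 2$ has first letter strictly less than its last letter; together with $x_0$ being the minimum in the ordering of Eq.~\meqref{eq:uorder}, this forces every Lyndon word of length $\geq 2$ to not end in $x_0$. Hence the only Lyndon word ending in $x_0$ is the single letter $x_0$ itself. Since $x_0$ is the minimum letter, it can appear in the strictly decreasing factorisation $w=w_1^{i_1}\cdots w_k^{i_k}$ only as $w_k$, as otherwise we would need a Lyndon word strictly smaller than $x_0$, which does not exist. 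But $w\in W_1(\overline{U})$ does not end in $x_0$, so $w_k\neq x_0$, and therefore every factor $w_j$ lies in $\Lyn_1(\overline{U})$.

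Then I would apply Radford's recursion~\meqref{eq:lynrec}. By the first step, the shuffle monomial $w_1^{\ssha i_1}\ssha\cdots\ssha w_k^{\ssha i_k}$ lies in $\Sh_1(\bfk\overline{U})$, hence so do the ``smaller terms'' appearing in the expansion of $w$ in that formula. Inducting on the lexicographic order (with the Lyndon words themselves as base case), every $w\in W_1(\overline{U})$ is expressible as a polynomial in $\Lyn_1(\overline{U})$ under the shuffle product. Algebraic independence of $\Lyn_1(\overline{U})$ is immediate, since it is a subset of $\Lyn(\overline{U})$ which is algebraically independent in $\Sh(\bfk\overline{U})$ by Radford. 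Combining, $\Sh_1(\bfk\overline{U}) \cong \bfk[\Lyn_1(\overline{U})]$.

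The main obstacle is the bookkeeping of the second step: carefully separating the length-one case (in which $x_0$ itself is Lyndon) from the length $\geq 2$ case (in which no Lyndon word can end in $x_0$), in order to conclude that the Chen--Fox--Lyndon factorisation of any word in $W_1(\overline{U})$ never uses $x_0$ as a factor.
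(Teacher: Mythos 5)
Your proposal is correct and follows essentially the same route the paper takes: observe that $\Sh_1(\bfk\overline{U})$ is closed under $\ssha$ (since a shuffle of $u$ and $v$ ends in the last letter of $u$ or of $v$), observe that the Chen--Fox--Lyndon factorisation of $w\in W_1(\overline{U})$ has all its factors in $\Lyn_1(\overline{U})$, and then invoke the Radford machinery. The paper states these two observations without justification, whereas you supply the details (the only Lyndon word ending in $x_0$ is $x_0$ itself, and $x_0$ as the minimal letter can appear in the factorisation only as the last factor, which is excluded since $w$ does not end in $x_0$); these are exactly the details the paper is implicitly relying on.
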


We now extend these constructions to the locality setting.

For a locality set $(X, \top)$,
let $W_\sloc(X)$ denote the subset of $W(X)$ consisting of locality words, namely the words $w=w_1\cdots w_k$ in which $w_i\top w_j, 1\leq i\neq j\leq k$, plus the empty word.
Let
$$\Sh_\sloc(\bfk X)\coloneqq \bfk W_\sloc(X).$$

For $w=w_1\cdots w_k$ and $v=v_1\cdots v_\ell$ in $ W_\sloc(X)$, define
\begin{equation} \mlabel{eq:shufloc}
	w \top v \Longleftrightarrow w_i\top v_j, \forall 1\leq i\leq k, 1\leq j\leq \ell.
\end{equation}
Thus for $w, v$ in $ W_\sloc(X)$ with $w\top v$, the word $wv$   also lies in $W_\sloc(X)$. Since a shuffle of $w$ and $v$ is obtained from $wv$ by permuting the factors and hence   still lies in $W_\sloc(X)$, the shuffle product
$w\ssha v$ lies in $\Sh_\sloc(\bfk X)$. It follows that $\Sh_\sloc(\bfk X)$ is a locality algebra.

For a well-ordered set $(X,\leq)$ equipped with a locality relation $\top$,
the Lyndon words in $W_\sloc(X)$  are called {\bf \loc Lyndon words}. The following statement enhances Theorem \mref{thm:lyn} to a locality setting.

\begin{theorem} Let $(X,\leq)$ be a well-ordered set equipped with a locality relation $\top$.
	\begin{enumerate}
		\item  \text{(Locality Chen-Fox-Lyndon Theorem)} Any word $w$ in $W_\sloc(X)$ has a unique factorisation
		\begin{equation}\mlabel{eq:stfactloc}
			w=w_1^{i_1}\cdots w_k^{i_k}
		\end{equation}
		where $w_1\lexg \cdots \lexg w_k$ are locality Lyndon words and $i_1,\ldots, i_k\geq 1$.
		\mlabel{it:lynloc1}
		\item \text{(Locality Radford Theorem)} Let $\bfk$ be a $\QQ$-algebra. The set $Lyn_\sloc(X)$ of locality Lyndon words on $X$ is a locality algebraically independent generating set $($in the sense of Definition~\ref{defn:localg}$)$ of the locality algebra $\Sh_\sloc(\bfk X)$. Thus
		$\Sh_\sloc(\bfk X)\cong \bfk_\sloc[\Lyn_\sloc(X)]$ is a locality polynomial algebra.
		\mlabel{it:lynloc2}	
	\end{enumerate}
	\mlabel{thm:lynloc}
\end{theorem}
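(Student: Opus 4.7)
The plan is to reduce both parts to their classical counterparts in Theorem~\mref{thm:lyn} via a single elementary observation: \emph{any consecutive substring of a locality word is a locality word}. Indeed, if $w=y_1\cdots y_n\in W_\sloc(X)$, its letters are pairwise $\top$-related, and any substring $y_{i_1}\cdots y_{i_r}$ inherits this property.

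For part~\meqref{it:lynloc1}, let $w\in W_\sloc(X)$ and apply the classical Chen--Fox--Lyndon theorem to obtain its unique factorisation $w=w_1^{i_1}\cdots w_k^{i_k}$ with $w_1\lexgr\cdots\lexgr w_k$ Lyndon. Each factor $w_j$ is a consecutive substring of $w$, hence lies in $W_\sloc(X)$ by the observation above, so $w_j\in\Lyn_\sloc(X)$. Uniqueness is inherited directly from the classical case, since any locality factorisation is a fortiori a classical factorisation.

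For part~\meqref{it:lynloc2}, I first show that $\Lyn_\sloc(X)$ locality-generates $\Sh_\sloc(\bfk X)$. Given any $w\in W_\sloc(X)$ with factorisation as above, I note that the Lyndon factors $w_1,\ldots,w_k$ are mutually $\top$-related in the sense of Eq.~\meqref{eq:shufloc}, since together their letters are exactly the letters of $w$ and these are pairwise $\top$-related. Thus the shuffle $w_1^{\ssha i_1}\ssha\cdots\ssha w_k^{\ssha i_k}$ appearing in Eq.~\meqref{eq:lynrec} is a genuine locality product in $\Sh_\sloc(\bfk X)$, and since shuffling introduces no new letters, every ``smaller term'' in Eq.~\meqref{eq:lynrec} again lies in $W_\sloc(X)$. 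An induction on length together with the lexicographic order, exactly as in the classical proof, then expresses $w$ as a locality polynomial in $\Lyn_\sloc(X)$. For locality algebraic independence, I observe that any locality monomial in $\Lyn_\sloc(X)$ is in particular a classical shuffle monomial in $\Lyn(X)$, and distinct locality monomials remain distinct as classical monomials; Radford's classical theorem then gives their linear independence in $\Sh(\bfk X)$, which restricts to $\Sh_\sloc(\bfk X)$.

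The main obstacle is not conceptual but a matter of careful bookkeeping inside the locality framework: one must verify at each step of the classical Radford induction that the shuffle products involved are indeed locality products and that the recursively produced ``smaller terms'' lie in $W_\sloc(X)$ so that the induction hypothesis applies. This is precisely what the closure of $W_\sloc(X)$ under substrings (and hence under shuffles of mutually $\top$-related locality words) guarantees, and it is the one place where passing from the classical to the locality setting requires genuine attention.
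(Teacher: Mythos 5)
Your proposal is correct and follows essentially the same route as the paper: both parts are reduced to the classical Chen--Fox--Lyndon and Radford theorems by noting that the Lyndon factors of a locality word are themselves locality words, that the ``smaller terms'' in Eq.~\eqref{eq:lynrec} are rearrangements of the letters of $w$ and hence stay in $W_\sloc(X)$, and that locality algebraic independence is weaker than the classical algebraic independence of $\Lyn(X)$. Your explicit remark that the Lyndon factors are mutually $\top$-related (so the shuffle monomial is a genuine locality monomial) is a point the paper leaves implicit, but it is the same argument.
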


\begin{proof}
	\meqref{it:lynloc1}
In the factorisation $w=w_1^{i_1}\cdots w_k^{i_k}$ of a locality word $w$ into Lyndon words in Theorem~\mref{thm:lyn}\meqref{it:lyn1}, each $w_i$ is still a locality word, giving us the existence of the factorisation in Eq.~\meqref{eq:stfactloc}. The uniqueness of the factorisation follows from the uniqueness of the factorisation in Eq.~\meqref{eq:stfact}.

\smallskip
	
\noindent
\meqref{it:lynloc2}
If $w$ is local, then by Item~\meqref{it:lynloc1} and Theorem~\mref{thm:lyn}\meqref{it:lyn2},
$$w=w_1^{i_1}\ssha w_2^{i_2} \ssha \cdots \ssha w_k^{i_k}+\text{smaller terms},$$
that is, $w$ can be generated by locality Lyndon words modulo smaller terms.
The smaller terms are obtained from $w$ by permuting the letters in $w$, and hence are again local.
Thus as in the nonlocality case, an induction can be applied to show that $\Sh_\sloc(\bfk X)$ is spanned by $\Lyn_\sloc(X)$.
Locality algebraic independence of the set $\Lyn_\sloc(X)$ is automatic since it is a subset of the algebraically independent set $\Lyn(X)$ and locality algebraic independence is weaker than algebraic independence. Therefore, locality Lyndon words are locality polynomial generators of $\Sh_\sloc(\bfk X)$.
\end{proof}

Now let $(U,\leq)$ be a well-ordered set equipped with an irreflexive locality relation $\perp$, i.e., $u\not \perp u$ for any $u$ in $U$.

Then $(\overline{\ug}, \leq)$ is a well-ordered set by Eq.~\meqref{eq:uorder}, and is equipped with a locality relation $\top$ defined by
\begin{equation}
		x_0\top x_u, \forall u\in U\cup\{x_0\} \text{ and }
		x_u\top x_v \text{ whenever } u\perp v, \quad \forall u, v\in U.
\mlabel{eq:ordloc}
\end{equation}

Denote
$$ W_{1,\sloc}(\overline{U})\coloneqq W_1(\overline{U})\cap W_\sloc(\overline{U}), \quad  \Sh_{1,\sloc}(\bfk\overline{U})\coloneqq \bfk W_{1,\sloc}(\overline{U}).$$

\begin{coro}
\mlabel{co:lynlocu}
Let $(U,\leq )$ be a well-ordered set equipped with an irreflexive locality relation $\perp$.
\begin{enumerate}
	\item  Any word $w$ in $W_\sloc(\overline{U})$ admits a unique factorisation
	\begin{equation}\mlabel{eq:stfactloc2}
	w=w_1\cdots w_k x_0^r,
	\end{equation}
	where $w_1\lexg \cdots \lexg w_k \lexg x_0$ are locality Lyndon words in $W_\sloc(\overline{U})$, and $k, r\geq 0$.
	\mlabel{it:lynlocu1}
	\item 	
\mlabel{it:lynlocu2}	
Given a   $\QQ$-algebra $\bfk$, the locality algebra $\Sh_\sloc(\bfk \overline{U})$ is a locality polynomial algebra generated by the set $\Lyn_\sloc(\overline{U})$ of locality Lyndon words on $\overline{U}:$
$\Sh_\sloc(\bfk \overline{U})\cong \bfk_\sloc[\Lyn_\sloc(\overline{U})].$
\item \mlabel{it:lynlocu3}
The subspace $\Sh_{1,\sloc}(\bfk \overline{U})$ of $\Sh_\sloc(\bfk \overline{U})$ is a locality subalgebra. It is also a locality polynomial algebra generated by $\Lyn_{1,\sloc}(\overline{U})\coloneqq \Lyn_1(\overline{U})\cap \Lyn_\sloc(\overline{U})$.
\end{enumerate}
\end{coro}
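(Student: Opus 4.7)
The strategy is to deduce all three items from Theorem~\ref{thm:lynloc} applied to the alphabet $X=\overline{U}$ equipped with the well-order of Eq.~(\ref{eq:uorder}) and the locality relation of Eq.~(\ref{eq:ordloc}), exploiting the distinguished role of the minimum letter $x_0$.

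For part~(i), given $w \in W_\sloc(\overline{U})$, Theorem~\ref{thm:lynloc}(i) produces a unique factorisation $w = u_1^{j_1}\cdots u_m^{j_m}$ into strictly $\lexg$-decreasing locality Lyndon words. Because $x_0$ is the smallest letter of $\overline{U}$, the block $u_m^{j_m}$ is the only one that can equal $x_0^{r}$; setting $r = j_m$ when $u_m = x_0$ and $r=0$ otherwise, and re-expanding the remaining blocks into a list $w_1, \ldots, w_k$ of locality Lyndon words all $\lexg x_0$, one obtains the factorisation of Eq.~(\ref{eq:stfactloc2}), with uniqueness transferred from Theorem~\ref{thm:lynloc}(i). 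Part~(ii) is then nothing more than Theorem~\ref{thm:lynloc}(ii) instantiated at $X = \overline{U}$.

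For part~(iii), I first check that $\Sh_{1,\sloc}(\bfk\overline{U})$ is a locality subalgebra of $\Sh_\sloc(\bfk\overline{U})$. If $u, v \in W_{1,\sloc}(\overline{U})$ satisfy $u\top v$, every summand of $u\ssha v$ is a permutation of the letters of $uv$, hence (by Eq.~(\ref{eq:shufloc})) a locality word, and moreover ends with the last letter of either $u$ or $v$, which by assumption is different from $x_0$; so $u\ssha v \in \Sh_{1,\sloc}(\bfk\overline{U})$. The crux of the remaining assertion is the following observation, which I expect to be the principal technical point: \emph{the only locality Lyndon word of $\overline{U}$ ending with $x_0$ is the letter $x_0$ itself}. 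Indeed, if $w = \alpha x_0$ were Lyndon with $|w| \geq 2$, then $x_0$ is a proper suffix so the Lyndon property forces $w \lexle x_0$; but $w$ has length $\geq 2$ and starts with some letter $\geq x_0$, so $w \lexgr x_0$, a contradiction.

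Combined with part~(i), this shows that for $w \in W_{1,\sloc}(\overline{U})$ the trailing $x_0^{r}$ is absent ($r=0$) and each $w_i$ lies in $\Lyn_{1,\sloc}(\overline{U})$. Feeding this into the Radford recursion (\ref{eq:lynrec}) from the proof of Theorem~\ref{thm:lynloc}(ii)—and noting that every ``smaller term'' produced in that recursion is a permutation of the letters of $w$, hence still in $W_{1,\sloc}(\overline{U})$—one concludes by induction on the standard order that $\Lyn_{1,\sloc}(\overline{U})$ locality-generates $\Sh_{1,\sloc}(\bfk\overline{U})$. Locality algebraic independence of $\Lyn_{1,\sloc}(\overline{U})$ is immediate since it is a subset of the locality algebraically independent set $\Lyn_\sloc(\overline{U})$ provided by Theorem~\ref{thm:lynloc}(ii). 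Apart from the lemma on Lyndon words not terminating in $x_0$, every step is bookkeeping from Theorem~\ref{thm:lynloc} and Proposition~\ref{pp:1lyn}.
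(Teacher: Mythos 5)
Your overall route is the paper's: everything is pulled back from Theorem~\ref{thm:lynloc}, and your treatment of (iii) --- closure of $\Sh_{1,\sloc}(\bfk\overline{U})$ under shuffles of local pairs together with the observation that the only locality Lyndon word ending in $x_0$ is $x_0$ itself --- is exactly the locality analogue of Proposition~\ref{pp:1lyn} that the paper invokes, written out in more detail. However, there is a genuine gap in your proof of (i). The factorisation claimed in Eq.~\eqref{eq:stfactloc2} requires $w_1\lexg\cdots\lexg w_k\lexg x_0$, i.e.\ the Lyndon factors other than $x_0$ are pairwise distinct, each occurring to the first power. Your argument takes the factorisation $w=u_1^{j_1}\cdots u_m^{j_m}$ from Theorem~\ref{thm:lynloc}(i) and ``re-expands the remaining blocks into a list'': if some exponent $j_i\geq 2$ with $u_i\neq x_0$, that list repeats $u_i$ and the strict decrease fails, so the asserted factorisation is not obtained. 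This is precisely where the irreflexivity of $\perp$ must enter, and you never invoke it: any locality Lyndon word $\lexg x_0$ contains a letter $x_u$ with $u\in U$, and since $u\not\perp u$, a locality word in $W_\sloc(\overline{U})$ can contain each such letter at most once; hence every block other than the trailing $x_0^r$ has exponent $1$ (this is exactly the paper's argument). The fix is one line, but without it part (i) --- and the input you feed from it into part (iii) --- is not established.

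A smaller imprecision in (iii): you justify that the ``smaller terms'' of the Radford recursion lie in $W_{1,\sloc}(\overline{U})$ because they are permutations of the letters of $w$. Permuting letters preserves locality, but not the property of not ending in $x_0$: a permutation of a word of $W_1(\overline{U})$ with internal occurrences of $x_0$ may well end in $x_0$. The correct reason --- which you already used when proving closure under the product --- is that the smaller terms are shuffles of the complete Lyndon factors $w_1,\ldots,w_k\in W_{1,\sloc}(\overline{U})$, and every shuffle summand ends with the last letter of one of these factors, which is not $x_0$.
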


\begin{proof}
\meqref{it:lynlocu1}
Since $x_0$ is the smallest locality Lyndon word in $W_\sloc(\overline{U})$, it must appear at the end of factorisation in Eq.~\meqref{eq:stfactloc}, giving us $w=w_1^{i_1}\cdots w_k^{i_k}x_0^r$ with $w_1\lexg w_2 \lexg \cdots \lexg w_k\lexg x_0$. Further, a locality Lyndon word $w_i\lexg x_0$ must have a factor $x_u$ for some $u\in U$. So the irreflexivitiy of the locality and the locality of $w$ implies that $i_1=\cdots = i_k=1$ and $w_i\top w_j$ for $1\leq i\neq j\leq k$. The uniqueness of this factorisation follows from the uniqueness of the factorisation in Theorem~\mref{thm:lynloc}.\meqref{it:lynloc2}.

\smallskip

\noindent
\meqref{it:lynlocu2} This follows from Theorem~\mref{thm:lynloc}.\meqref{it:lynloc1}.

\smallskip

\noindent
\meqref{it:lynlocu3} The proof goes as for Proposition~\mref{pp:1lyn}.
\end{proof}

\subsection{Applications  to ordered fractions}
\mlabel{ss:fraceval}
For the rest of this section, we fix an orthonormal basis $\cale$ of $\RR^\infty$ with respect to the inner product $Q$. For the sake of simplicity, this choice, which we make once for all, is suppressed in most of the notations.

Let $U$ be a countable set and let
\begin{equation}\label{eq:Lmap}L: U\to \call_\QQ, \quad u\mapsto L_u, u\in U,\end{equation}
be a map with values in the space $\call_\QQ$ of linear forms defined as in Eq.~\eqref{eq:Klinearforms} with $K=\QQ$,
which  defines a family of linear forms parameterised by $U$.
For $u_i$ in $ U$, $s_i\geq 1, 1\leq i\leq k$, define the {\bf ordered fraction} (with respect to $L$)
$$ \frakf^\gc \pfpair{s_1,\ldots,s_k}{{u_1},\ldots,{u_k}}\coloneqq \frac{1}{L_{u_1}^{s_1}(L_{u_1}+L_{u_2})^{s_2}\cdots (L_{u_1}+\cdots+L_{u_k})^{s_k}}.$$

Define the set of {\bf \general fractions} (with respect to $L$)
\begin{equation} \mlabel{eq:pafr}
	\calf^{\gc}\coloneqq \bigg\{ \frakf^{\gc}\pfpair{s_1,\ldots,s_k}{{u_1},\ldots,{u_k}}\, \bigg | s_i\geq 1, u_i\in U, 1\leq i\leq k, k\geq 0\bigg\} \subseteq \calf.
\end{equation}
Note that  $\QQ\calf^{\gc}$ is the $\QQ$-subspace spanned by ${\rm im}(L)$.

\begin{ex}
When $L=L_{\wch}: \Z_{>0}\to \call_\Q$ is given by $L(u)=z_u$, then
	 \begin{equation}\label{eq:weqakChen}\frakf^{\gc}\pfpair{s_1,\ldots,s_k}{{u_1},\ldots,{u_k}}=\frakf^{\wch}\pfpair{s_1,\ldots,s_k}{{u_1},\ldots,{u_k}}\end{equation}
	from Example~\mref{ex:chen1} and $\calf^L=\calf^{\wch}$ is the set of \weak Chen fractions (called MZV fractions in~\mcite{GX} for applications to multiple zeta values).

Note that in contrast with a weak Chen fraction, a Chen fraction requires $u_i\neq u_j$ for $i\neq j$.
On the other hand, the fraction $\frac{1}{(z_1+z_2) (z_2+z_3)}$ is not an ordered fraction with respect to this $L$.
\end{ex}

\begin{prop} \mlabel{pp:phihom}
Let $U$ be countable and let $L:U\to \call_\Q$ be a map. Then the map
\begin{equation}
	\Phi=\Phi^L: (\Sh_1(\QQ \overline{U}),\ssha) \to \QQ\mti{\calf}^{\gc}, \quad
	x_0^{s_1-1}x_{u_1}\cdots x_0^{s_k-1}x_{u_k} \mapsto
	\frakf^{L} \pfpair{s_1,\ldots,s_k}{u_1,\ldots,u_k},
	\mlabel{eq:wordfract}
\end{equation}
is an algebra homomorphism. Here the multiplication on $\QQ\mti{\calf}^{\gc}$ is the natural one in $\calm_\Q$.
\end{prop}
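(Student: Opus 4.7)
The plan is to prove the identity $\Phi(w_1 \ssha w_2) = \Phi(w_1)\,\Phi(w_2)$ by induction on the total length $n = |w_1|+|w_2|$ of the words $w_1, w_2 \in W_1(\overline{U})$. The base case, in which one of the two words is the empty word, is immediate since $\Phi(1)=1$ and $w \ssha 1 = w$. For the inductive step, with both words nonempty, write $w_1 = \alpha_1 v_1$ and $w_2 = \alpha_2 v_2$ for letters $\alpha_1, \alpha_2 \in \overline{U}$. The recursive shuffle definition gives
\[
w_1 \ssha w_2 = \alpha_1 (v_1 \ssha w_2) + \alpha_2 (w_1 \ssha v_2),
\]
and linearity of $\Phi$ together with the induction hypothesis applied to $v_1 \ssha w_2$ and $w_1 \ssha v_2$ reduces the claim to an identity that combines a \emph{prepending rule} for $\Phi$ with a partial-fraction identity.

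The first ingredient is a prepending rule expressing $\Phi(\alpha w)$ in terms of $\Phi(w)$. Prepending $x_0$ to a word $w$ increments the first exponent $s_1$, so $\Phi(x_0 w) = \tfrac{1}{M_1(w)}\,\Phi(w)$ where $M_1(w) = L_{u_1}$ is the first partial-sum factor in the denominator of $\Phi(w)$. Prepending a letter $x_u$ with $u \in U$ creates a new leading block in the ordered fraction and shifts every partial sum $M_i = L_{u_1}+\cdots+L_{u_i}$ by $L_u$, so $\Phi(x_u w) = \tfrac{1}{L_u}\,\Phi(w)\bigl[M_i \rightsquigarrow L_u + M_i\bigr]$. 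The second ingredient is the elementary partial-fraction identity
\[
\frac{1}{M\,N} \;=\; \frac{1}{M(M+N)} \;+\; \frac{1}{N(M+N)},
\]
together with its iterated higher-order analogues for $\tfrac{1}{M^s N^t}$. These are exactly the algebraic shadow of the decomposition of a Cartesian product of two simplices into the $\binom{n_1+n_2}{n_1}$ simplices indexed by shuffles.

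A cleaner and more conceptual route is via a simplex/iterated-integral representation of $\frakf^L\pfpair{s_1,\ldots,s_k}{u_1,\ldots,u_k}$: after using $\tfrac{1}{M^s} = \tfrac{1}{(s-1)!}\int_0^\infty T^{s-1} e^{-M T}\,dT$ and the substitution $\tau_j = \sum_{i\geq j} T_i$, the ordered fraction becomes an integral over a standard $n$-simplex $\tau_1 > \cdots > \tau_n > 0$ whose integrand is read off from the corresponding word, with each $x_u$ contributing an exponential $e^{-L_u \tau}$ and each $x_0$ contributing a neutral $d\tau$. The product of two such integrals is an integral over the product of two simplices, which decomposes (up to measure zero) as a disjoint union of simplices indexed by the shuffles of the two words; the classical shuffle identity for iterated integrals then directly produces the stated multiplicativity of $\Phi$.

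I expect the main obstacle to be the bookkeeping of the partial-sum shifts produced by prepending letters $x_u$ (as opposed to $x_0$), since these shift \emph{every} denominator by $L_u$ and hence do not combine as cleanly with $\Phi(w)$ as the $x_0$-prepending does. Aligning the two inductively produced terms $\Phi(\alpha_1(v_1\ssha w_2))$ and $\Phi(\alpha_2(w_1 \ssha v_2))$ with the two halves of the partial-fraction identity requires carefully indexing which partial sums belong to $w_1$, which to $w_2$, and how they interleave under the shuffle; the iterated-integral viewpoint is what makes this bookkeeping systematic, since the shuffle decomposition of the product of simplices is precisely the geometric origin of the required combinatorial identity.
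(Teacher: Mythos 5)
Your proposal takes a genuinely different route from the paper: the paper does not prove the shuffle identity from scratch, but reduces the general pair $(U,L)$ to the universal case $L_{\wch}\colon u\mapsto z_u$ of weak Chen (MZV) fractions via a bijection $U\to\Z_{>0}$ and the substitution homomorphism $z_i\mapsto L_i$, and then quotes Guo--Xie \mcite{GX} for that case. Trying to prove the core identity directly is legitimate, but as written both of your routes have gaps. For the inductive route, the induction does not close: your prepending rules are not operations on the target algebra. Indeed $\Phi(x_0\sigma)=\frac{1}{L_{u_1(\sigma)}}\Phi(\sigma)$ depends on the first $U$-letter $u_1(\sigma)$ of the word $\sigma$, and $\Phi(x_u\sigma)$ shifts \emph{every} partial sum of $\sigma$ by $L_u$; neither is determined by the element $\Phi(\sigma)\in\calm_\Q$ (recall $\Phi$ is not injective, as the paper notes right after the proposition). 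So the induction hypothesis $\Phi(v_1\ssha w_2)=\Phi(v_1)\Phi(w_2)$, an identity in the target, does not let you evaluate $\Phi\bigl(\alpha_1(v_1\ssha w_2)\bigr)$; you must track the individual shuffle words, and matching those sums against the higher-order partial-fraction identity is exactly the unproved combinatorial core. What you call bookkeeping is the whole content, and to make such an induction work one needs a strictly stronger statement (for instance a shift-parametrised family of maps $\Phi_\Lambda$), which you do not formulate.

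The iterated-integral route is essentially the standard argument (and essentially what the cited Guo--Xie result rests on), but two points need repair. First, the dictionary is off: with $\tau_1>\cdots>\tau_n>0$ and an exponential $e^{-L_u\tau}$ attached to each $U$-letter, the integral you describe computes the fraction of the \emph{block-reversed} word $x_{u_1}x_0^{s_1-1}x_{u_2}x_0^{s_2-1}\cdots$, not of $x_0^{s_1-1}x_{u_1}\cdots x_0^{s_k-1}x_{u_k}$; applied literally to a word beginning with $x_0$ your recipe gives a divergent integral. This is fixable, since word reversal is an automorphism of the shuffle algebra, but it must be said. Second, the representation $\frac{1}{M^s}=\frac{1}{(s-1)!}\int_0^\infty T^{s-1}e^{-MT}\,dT$ and the ensuing simplex decomposition require a region where all partial sums $L_{u_1}+\cdots+L_{u_i}$ (and the mixed partial sums occurring in the shuffle terms) have positive real part; for an arbitrary map $L\colon U\to\call_\Q$ no such region need exist (take $L_{u_2}=-L_{u_1}$), and indeed the fractions themselves may be ill defined. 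The clean fix is to prove the identity in the universal case where the $L_u$ are independent coordinates and then specialise by $z_i\mapsto L_i$ --- which is precisely the reduction $\eta\circ\Phi^{\wch}\circ\theta$ constituting the paper's proof, and which your write-up omits.
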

\begin{proof}
When the map $L$ is given by
$$L_\wch: \Z_{>0} \to \call_\Q, \quad u\mapsto z_u,$$ then  $\calf^L=\calf^{\wch}$ with $\calf^{\wch}$ defined in Eq.~\eqref{eq:weqakChen} is the set of MZV fractions, in which  case the conclusion, for $\Phi=\Phi^{\wch}: \Sh_1(\Q \overline{\Z_{>0}})\to  \QQ \mti{\calf}^{\wch} $ follows from ~\cite[Eqs.~(7),(8) and Theorem~2.1]{GX}. For a general $U$ and $L$, on the grounds of  the countability of $U$, we can fix a bijection $\theta: U \to \Z_{>0}$ and thus an algebra isomorphism
$$\theta: \Sh_1(\Q \overline{U})\to  \Sh_1(\Q \overline{\Z_{>0}}).$$
Also note that the change of variables
$ z_i \mapsto L_i, i\in \ZZ_{>0},$
gives rise to an algebra homomorphism
{\small
\begin{eqnarray*}
&\eta:\QQ \mti{\calf}^{\wch} \to \QQ \mti{\calf}^L, & \\
&\frakf^{\wch} \pfpair{s_1,\ldots,s_k}{u_1,\ldots,u_k}
\longmapsto \frakf^\gc \pfpair{s_1,\ldots,s_k}{u_1,\ldots,u_k} =\frac{1}{L_{u_1}^{s_1}(L_{u_1}+L_{u_2})^{s_2}\cdots (L_{u_1}+\cdots+L_{u_k})^{s_k}}.&
\end{eqnarray*}
}
Then $\Phi^\gc$ is just the composition $\eta\circ \Phi^{\wch}\circ \theta$.
\end{proof}

\begin{remark}
The algebra homomorphism $\Phi$ is not injective. For example, $\frakf\pfpair{2}{u_1}=\frac{1}{L_{u_1}^2}$ and $\frakf\pfpair{1,1}{u_1,u_1}=\frac 12 \frac{1}{L_{u_1}^2}$.
Consequently, although the shuffle algebra $\Sh_1(\QQ \overline{U})$ is a polynomial algebra on the Lyndon words, the same cannot be said of $\QQ\widehat{\calf}^{\gc}$. As we will see below, this defect can be remedied under a  locality condition.
\end{remark}

Let $(U,\top)$ be a countable locality set and let $L:U\to \call_\Q$ be as above. Consider the subset
\begin{equation}
\calf^{\gc}_\sloc\coloneqq  \left\{\left . \frakf^\gc\pfpair{s_1,\ldots,s_k}{u_1,\ldots,u_k}\ \right|\ s_i\geq 1, u_i\in \ug, u_i\top u_j \in U, 1\leq i\not =j\leq k, k\geq 0\right\} \subseteq \calf^L
	\mlabel{eq:pfsp}
\end{equation}
and the $\QQ$-subspace $\QQ \calf^{\gc}_\sloc$ of $\QQ\calf^L$.

\begin {thm} \mlabel{thm:conefrac}
Let $(U,\leq,\top)$ be a countable well-ordered set with an irreflexive locality relation $\perp$. Suppose that the map  $L: (U,\top)\to (\call_\QQ,\perp^Q)$ defined in Eq. \eqref{eq:Lmap} is a locality map in the sense of Eq.~\meqref{eq:locmap}$:$
 for $x,y\in U$, if $x\top y$, then $L_x\perp ^Q L_y.$ Then
\begin{enumerate}
\item
the set $\calf^{\gc}_\sloc$ is linearly independent;
\mlabel{it:indep}
\item
\mlabel{it:shfriso}
the algebra homomorphism $\Phi$ in Eq.~\meqref{eq:wordfract} restricts to an isomorphism of locality algebras
\vspace{-.2cm}	
\begin{equation}
\mlabel{eq:shfr}
\Phi_\sloc: \Sh_{1,\sloc}(\overline{U}) \cong \Q \calf ^{\gc}_\sloc,
\quad w=x_0^{s_1-1}x_{u_1}\cdots x_0^{s_k-1}x_{u_k}\mapsto \frakf^\gc\pfpair{s_1,\ldots,s_k}{u_1,\ldots,u_k}.
\vspace{-.2cm}	
\end{equation}
\item \mlabel{it:chenpoly}
The locality algebra $\QQ \calf ^{\gc}_\sloc$ is a locality polynomial algebra.
\end{enumerate}
\end{thm}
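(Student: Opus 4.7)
The plan is to establish (i)--(iii) in order, with (iii) being an immediate transport of structure along (ii) from Corollary~\mref{co:lynlocu}\meqref{it:lynlocu3}, and (ii) being essentially the algebraic packaging of (i). For part (i), the strategy is to invoke Proposition~\mref{pp:proper}, which reduces linear independence to checking that the family of supporting cones of locality ordered fractions is projectively properly positioned. Given a locality ordered fraction $\frakf^\gc\pfpair{s_1,\ldots,s_k}{u_1,\ldots,u_k}$, the locality $u_i\top u_j$ combined with $L$ being a locality map yields that the $L_{u_i}$'s are pairwise $Q$-orthogonal; the partial sums $L_{u_1}+\cdots+L_{u_i}$ are then linearly independent, so the supporting cone is simplicial. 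For two distinct fractions, a case analysis on the intersection of their underlying index sets, together with the orthogonal structure of the $L_u$'s within each fraction, should show that the cones meet only along a common face generated by a shared prefix of indices and that their union contains no line. Non-proportionality of denominators follows because the orthogonality lets one recover $(u_1,\ldots,u_k)$ and the exponents from the multiset of partial-sum factors.

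For part (ii), Proposition~\mref{pp:phihom} provides that $\Phi$ is an algebra homomorphism; its restriction $\Phi_\sloc$ lands in $\Q\calf^\gc_\sloc$ by construction, is surjective by definition, and is injective by (i). For the locality compatibility, the plan is to apply Proposition~\mref{prop:polardep} (or Lemma~\mref{lem:DepfU}) to identify the dependence space of $\Phi_\sloc(w)$ with the supporting space $\mathrm{span}(L_{u_1},\ldots,L_{u_k})$; word locality $w\top v$ then forces $L_{u_i}\perp^Q L_{v_j}$ for all letters involved, hence $\Phi_\sloc(w)\perp^Q \Phi_\sloc(v)$. The reverse direction, needed for $\Phi_\sloc^{-1}$ to be a locality map, follows from the same dependence-space identification extended by linearity across basis elements.

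Part (iii) is then an immediate consequence: by Corollary~\mref{co:lynlocu}\meqref{it:lynlocu3}, $\Sh_{1,\sloc}(\Q\overline{U})$ is a locality polynomial algebra generated by $\Lyn_{1,\sloc}(\overline{U})$, and the locality algebra isomorphism $\Phi_\sloc$ transports this structure to $\Q\calf^\gc_\sloc$, with polynomial generators being the images of the locality Lyndon words. The main obstacle lies squarely in part (i): the combinatorial/geometric verification that the supporting cones are projectively properly positioned. The $Q$-orthogonality of the $L_{u_i}$'s within each individual locality ordered fraction is the crucial input that keeps the cones in a tractable positive-orthant-like configuration and rules out pathological overlaps that would otherwise create unexpected linear dependencies between fractions sharing some but not all of their variables.
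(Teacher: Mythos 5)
Your proposal follows the paper's proof closely: part (i) via Proposition~\mref{pp:proper} after observing that the supporting cones are simplicial and projectively properly positioned; part (ii) by restricting $\Phi$ from Proposition~\mref{pp:phihom}, using (i) for the bijection together with the identification of the dependence space of $\Phi_\sloc(w)$ with $\mathrm{span}(L_{u_1},\ldots,L_{u_k})$ for the locality compatibility; and part (iii) by transporting Corollary~\mref{co:lynlocu}\,\meqref{it:lynlocu3} along $\Phi_\sloc$.

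One small slip in your sketch of proper positioning for (i): the intersection of two supporting cones is indeed a common face, but that face need not be generated by a shared prefix of the index sequences. For example the cones $\langle L_{u_1},\,L_{u_1}+L_{u_2}\rangle$ and $\langle L_{u_2},\,L_{u_1}+L_{u_2}\rangle$, arising from the sequences $(u_1,u_2)$ and $(u_2,u_1)$ which share no prefix, meet along the ray through $L_{u_1}+L_{u_2}$. This does not change your overall strategy --- the paper itself merely asserts the projective proper positioning without spelling out the combinatorics --- but any attempt to expand your sketch into a complete argument would have to accommodate such non-prefix common faces.
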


\begin{proof}
\meqref{it:indep}
The first assertion follows from the facts that the supporting cones of all ordered  fractions are projectively properly positioned and that  by \cite[Proposition 3.6]{GPZ3} recalled in  Proposition~\mref{pp:proper}, a projectively properly positioned family of simplicial  fractions is linearly independent.

\smallskip
	
\noindent
\meqref{it:shfriso} By the assumption on $(U,\leq,\top)$, a word $w=x_0^{s_1-1}x_{u_1}\cdots x_0^{s_k-1}x_{u_k}$ in $ \Sh(\QQ \overline{U})$ is local if and only if $u_i\top u_j, i\not= j$. So $w$ is local if and only if
$\Phi_\sloc(w)=\frakf^\gc \pfpair{s_1,\ldots,s_k}{u_1,\ldots,u_k}$ lies in $\calf^\gc_\sloc$. Thus by Item~\meqref{it:indep}, $\Phi_\sloc$ sends a linear basis of $\Sh_{1,\sloc}(\QQ \overline{U})$ to a linear basis of $\QQ \calf _\sloc ^L$ and   is therefore a linear isomorphism.
	
The linear map also preserves the locality. Indeed, for $$w_1=x_0^{s_1-1}x_{u_1}\cdots x_0^{s_k-1}x_{u_k},\  w_2=x_0^{t_1-1}x_{v_1}\cdots x_0^{t_\ell-1}x_{v_\ell}\in W_1(\overline{U}),
\vspace{-.2cm}	
$$
\vspace{-.2cm}	
we have
$$ w_1\top w_2 \Leftrightarrow \{u_1,\ldots,u_k\}\top \{v_1,\ldots,v_\ell\}\Leftrightarrow \{L_{u_1},\ldots,L_{u_k}\}\top \{L_{v_1},\ldots,L_{v_\ell}\} $$
$$\Leftrightarrow \supsp(\Phi(w_1))\perp^Q \supsp(\Phi(w_2)) \Leftrightarrow \Phi(w_1)\perp^Q \Phi(w_2).$$
Therefore $\Phi$ restricts to a locality linear bijection
$\Phi_\sloc: \Sh_{1,\sloc}(\Q\overline{U}) \to \Q\calf^\gc_\sloc$.
Finally, the multiplicativity for $\Phi$ in Eq.~\meqref{eq:wordfract} restricts to one for $\Phi_\sloc$. Hence $\Phi_\sloc$ is a locality algebra isomorphism.

\smallskip
	
\noindent
\meqref{it:chenpoly} This last assertion follows from Item~\meqref{it:shfriso} and Corollary~\mref{co:lynlocu}.\meqref{it:lynlocu3}.
\end{proof}

We consider two special instances of maps  $L:U\to \call_\Q$.

\begin{ex} \mlabel{ex:uexam}
\begin{enumerate}
\item \mlabel{it:uex1}
Let $U$ be $\ZZ_{>0}$ equipped with the natural order and the locality relation
$n\top m \Leftrightarrow n\not=m.$
Define
$$L:\ZZ_{>0}\to \call_\QQ , i\mapsto z_i, i\in \ZZ_{>0}.$$
The corresponding set $\calf^\gc_\sloc$ of \general fractions is the set $\calf^{\rm Ch}$ of Chen fractions in Example~\mref{ex:chen}.
\item \mlabel{it:uex2}
Let $U$ be the set $\fpower(\ZZ_{>0})$ of nonempty finite subsets of $\ZZ_{>0}$.
The order is the lexicographic order: for elements
$$I\coloneqq \{i_1>i_2>\cdots>i_r\}, \quad J\coloneqq \{j_1>j_2>\cdots>j_s\}$$
in $\fpower(\ZZ_{>0})$, define $I\geq J$ if either the first nonzero element in the sequence
$$i_1-j_1, i_2-j_2,\ldots, i_{\min\{r,s\}}-j_{\min\{r,s\}}$$
is positive, or the above sequence of numbers are all zero and $r>s$. The locality relation in $\fpower(\ZZ_{>0})$ is:
$$I\top J \Leftrightarrow I\cap J=\emptyset.
$$
Define
\begin{equation} \label{eq:lmap}
L:\fpower(\ZZ_{>0})\to \call _\QQ, \ \ I\mapsto z_I\coloneqq \sum_{i\in I}z_i, \quad \forall I\in \fpower(\ZZ_{>0}).
\end{equation}
The corresponding fractions $\frakf^\gc\pfpair{s_1,\ldots,s_k}{I_1,\ldots,I_k}$ are of the form
\begin{equation}\label{eq:genChen} \frac{1}{z_{I_1}^{s_1}(z_{I_1}+z_{I_2})^{s_2}\cdots
	(z_{I_1}+\cdots+z_{I_k})^{s_k}}, \, \  s_i\in \Z_{>0},k\in \N, I_j\cap I_{j'}=\emptyset   \, {\rm if}\ j\neq j'
\end{equation}
which we will call {\bf \thing  fractions}. Let $\calf^{\tch}$ denote the  set of \thing   fractions and let $$\calm^{\tch}:=\calm_{\QQ+}\left(\Pi^Q(\mti{\calf^{\tch}})\right).$$
\end{enumerate}
\end{ex}

The term Speer fractions is to recognize that such fractions first appeared in Speer's work~\cite{Sp2} in the context  of renormalisation.  Notice that the linear forms in a Speer fraction correspond to faces of a Chen cone. See Section~\mref{sec:app} for details.

\begin{rk}
We note that ${\calm }_\Q^{\rm Fe}\supset{\calm }_\Q^{\rm Sp}$, yet  whether they actually differ is an open question.
\end{rk}

As a direct consequence of Theorems~\mref{thm:conefrac}, we obtain

\begin{coro}
\mlabel {coro:confrac}
The locality algebras $\QQ \calf ^{\rm Ch}_\sloc$ and $\QQ\calf^{\tch}_\sloc$
are locality polynomial algebras on the set of fractions in $\calf^{\rm Ch}$ and $\calf^{\tch}$ respectively corresponding to the locality Lyndon words.
\end{coro}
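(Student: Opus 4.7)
The plan is to observe that Corollary~\ref{coro:confrac} is essentially an instance-checking of Theorem~\ref{thm:conefrac} applied to the two setups of Example~\ref{ex:uexam}, together with the identification of polynomial generators via the isomorphism $\Phi_\sloc$ of Theorem~\ref{thm:conefrac}(ii) and the locality Radford theorem for $\Sh_{1,\sloc}(\Q\overline{U})$ given by Corollary~\ref{co:lynlocu}(iii). So in broad strokes, for each of the two cases I need to verify three hypotheses of Theorem~\ref{thm:conefrac}: countability and well-ordering of $(U,\leq)$, irreflexivity of $\top$, and the locality property of the map $L: (U,\top)\to(\call_\Q,\perp^Q)$. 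Once these are in place, Theorem~\ref{thm:conefrac}(iii) directly yields that $\Q\calf^\gc_\sloc$ is a locality polynomial algebra, and then the polynomial basis is pulled back from $\Lyn_{1,\sloc}(\overline{U})$ through $\Phi_\sloc$.

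For the Chen case of Example~\ref{ex:uexam}(\ref{it:uex1}), take $U=\Z_{>0}$ with the natural order and locality $n\top m\Leftrightarrow n\neq m$, and $L(i)=z_i$. Countability and well-ordering are standard, irreflexivity of $\neq$ is immediate, and the locality property of $L$ follows because $\mathcal{E}=(e_i)_i$ is an orthonormal basis, so $n\neq m$ yields $Q(z_n,z_m)=0$, hence $z_n\perp^Q z_m$. Therefore $\calf^\gc_\sloc=\calf^{\rm Ch}$ and Theorem~\ref{thm:conefrac}(iii) applies.

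For the Speer case of Example~\ref{ex:uexam}(\ref{it:uex2}), take $U=\fpower(\Z_{>0})$ with the stated lexicographic order and $I\top J\Leftrightarrow I\cap J=\emptyset$, with $L(I)=z_I=\sum_{i\in I}z_i$. Countability is clear since $\fpower(\Z_{>0})$ is a countable union of countable sets. Irreflexivity holds because elements of $U$ are nonempty. The locality property of $L$ follows from orthonormality of $\mathcal{E}$: if $I\cap J=\emptyset$ then $\depsp(z_I)=\Span\{e_i\mid i\in I\}$ is orthogonal to $\depsp(z_J)=\Span\{e_j\mid j\in J\}$, giving $z_I\perp^Q z_J$. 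The one point that I would be most careful to pin down is the verification that the lexicographic order on $\fpower(\Z_{>0})$ is a well-order, as this is the main nontrivial hypothesis; the standard argument is that an infinite strictly descending chain would force the sequence of maximal elements to be eventually constant (a descending chain of positive integers stabilises), then the second-largest entries to stabilise, etc., and since each $i_1>i_2>\cdots>i_r$ is strictly decreasing and bounded by the stabilised maximum, the procedure terminates. Once this is established, Theorem~\ref{thm:conefrac}(iii) gives that $\Q\calf^\tch_\sloc$ is a locality polynomial algebra.

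Finally, for the explicit description of the polynomial generators, I combine Theorem~\ref{thm:conefrac}(ii), which produces a locality algebra isomorphism $\Phi_\sloc:\Sh_{1,\sloc}(\Q\overline{U})\cong \Q\calf^\gc_\sloc$, with Corollary~\ref{co:lynlocu}(\ref{it:lynlocu3}), which says $\Sh_{1,\sloc}(\Q\overline{U})$ is the locality polynomial algebra on $\Lyn_{1,\sloc}(\overline{U})$. Transporting the generating set through $\Phi_\sloc$ shows that the locality polynomial generators of $\Q\calf^{\rm Ch}_\sloc$ (respectively $\Q\calf^{\tch}_\sloc$) are precisely the ordered fractions corresponding to locality Lyndon words in the corresponding alphabet, as stated. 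I do not anticipate any obstacle beyond the well-ordering check mentioned above; everything else is a direct instantiation of the general theorem.
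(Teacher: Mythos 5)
Your proof is correct and takes essentially the same approach as the paper: the paper introduces the corollary with the phrase ``As a direct consequence of Theorem~\ref{thm:conefrac}, we obtain,'' implicitly invoking the two setups already laid out in Example~\ref{ex:uexam}, exactly as you do. Your extra verification of the hypotheses of Theorem~\ref{thm:conefrac} (countability, irreflexivity of $\top$, locality of $L$, and especially the well-ordering of the lexicographic order on $\fpower(\Z_{>0})$, which indeed follows because once the leading entries stabilise to some $a_1$ the remaining sets are drawn from the finite collection of subsets of $\{1,\ldots,a_1-1\}$) fills in details the paper leaves implicit in the statement of Example~\ref{ex:uexam}, but does not constitute a different route.
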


\section{Locality characters and analytic renormalisation }
\mlabel{sec:app}
In this last section, as an application of our previous results, we address Problems~\mref{it:sp1}-\ref{it:sp3} raised in the introduction on Speer's analytic renormalisation. We first show that the pole structures of the generalised Feynman amplitudes in Speer's analytic renormalisation are of the form introduced earlier. We then compare our constructions of locality generalised evaluators with those of Speer, and obtain a transitivity group action on locality generalised evaluators.

\subsection {Speer's $s$-families and  Speer fractions}
\mlabel{ss:sfam}
In his work on analytic renormalisation~\mcite{Sp2,Sp3,Sp4} (see also~\mcite{BR,DZ}), Speer determined the possible pole structure  of generalised Feynman amplitudes (or regularised Feynman amplitudes). We show that these linear poles are spanned by Speer fractions described in Eq.~\eqref{eq:genChen}.

We first summarise Speer's work, mostly following~\mcite{Sp3}.
A Feynman graph is called {\bf $2$-connected} if it cannot be disconnected by removing a vertex. A family $\tt{E}$ of subgraphs of a Feynman graph $G$ is called a {\bf singularity family} or simply an {\bf $s$-family}~\cite[Definition~2]{Sp3} if
\begin{enumerate}
	\item every element in $\tt{E}$ is either 2-connected or a single line. Let $\tt{E}'$ denote the subset of 2-connected elements in $\tt{E}$;
	\item $\tt {E}$ is nonoverlapping, that is, for $H_1, H_2\in \tt{E}$, either $H_1\subset H_2$ or $H_2\subset H_1$ or $H_1\cap H_2=\emptyset$;
	\item no union of two or more disjoint elements of $\tt{E}$ is 2-connected;
	\item $\tt {E}$ is maximal with these properties.
\end{enumerate}

For a given Feynman graph $G$, the generalised Feynman amplitude $\calt_G$ \cite[Definition 1]{Sp3}) is built from products of propagators assigned to each edge $\ell$ of $G$,  regularised by means of  a complex number  $\lambda _\ell$. It  enjoys a decomposition  \cite[Formula (2.16)]{Sp3} as a sum  over all s-families   $\tt{E}$  of $G$,  of meromorphic functions $\calt_{\tt{E}}$
$$\calt_G=\sum _{\tt {E}}\calt _{\tt{E}},
	$$
	 resulting from writing the closed cone $\prod_{\ell\in E(G)}\{\alpha_\ell\geq 0\}$ as a union $\cup_{\tt{E}}D(\tt{E})$ of closed cones $D(\tt{E})$ associated with each s-family $\tt{E}$ of subgraphs of $G$.

For an s-family $\tt {E}$ of $G$, and $H$ in ${\tt {E}}$, let
$$\Lambda (H)\coloneqq \sum _{\ell\in L(H)}(\lambda _\ell-1),
$$
where $L(H)$ is the set of edges of $H$. Let $\mu (H)$ be the superficial divergence of $H$.
According to Speer~\cite{Sp3} (see Eq.~(2.21), Theorem 3 and the remark that follows, see also~\mcite{Sp4}, Lemma~1.4 and its proof), the possible poles of
$\calt _{\tt {E}}$ are simple poles given by
$$\Lambda (H)-\frac 12 \mu (H)=0, -1, -2, \ldots.
$$

Since we are only renormalising generalised Feynman amplitudes at $\lambda _\ell=1, \ell\in L(G)$, the possible singularities we need to deal with are of the form 	
$ \bigg(\prod_{H\in \tt{E}'} \Lambda(H)\bigg)^{-1}.$
By a change of variables $z_i=\lambda_{\ell_i}-1$, with an ordering $\ell_1,\ldots,\ell_{|L(G)|}$ of $L(G)$, the $\Lambda(H)$ corresponds to the linear form $z_{I(H)}$ in Eq.~\meqref{eq:lmap}, for $I(H)=\{i\,|\,\ell_i\in L(H)\}$.
So we only need to deal with germs of the form
\vspace{-.2cm}	
$$ \bigg(\prod _{H\in \tt {E}'}z_{I(H)}\bigg)^{-1}\, h
\vspace{-.2cm}	
$$
with $h$  a holomorphic germ.

We now address Problem~\mref{it:sp1}.
\begin {prop}\label{prop:FeynmanamplSpeer} For any s-family $\tt {E}$ of $G$, the fraction
\vspace{-.1cm}	
\begin{equation}\label{eq:fractonsfamily}\bigg( \prod _{H\in \tt {E}'}z_{I(H)}\bigg)^{-1}
\vspace{-.1cm}	
\end{equation}
lies in $\Q\calf^{\tch}$. Thus the germs of the generalised Feynman amplitudes at $z=0$ are in $\calm^{\tch}_\Q$.
\end{prop}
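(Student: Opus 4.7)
The plan is to first reduce the proposition to a purely combinatorial statement. The nonoverlapping property~(2) of the s-family $\tt{E}'$ implies that the collection $\Lambda\coloneqq\{I(H) : H\in \tt{E}'\}$ of subsets of $L(G)$ is \emph{laminar}: any two members are either nested or disjoint. Hence $(\Lambda,\subseteq)$ carries a natural forest structure under inclusion, and the first assertion becomes the purely combinatorial claim that $\prod_{I\in\Lambda} z_I^{-1}$ lies in $\Q\calf^{\tch}$.

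The workhorse will be the classical simplex-decomposition identity
\begin{equation*}
	\prod_{i=1}^r\frac{1}{z_{J_i}}\;=\;\sum_{\sigma\in S_r}\frac{1}{\prod_{m=1}^r z_{J_{\sigma(1)}\cup\cdots\cup J_{\sigma(m)}}},
\end{equation*}
valid for any pairwise disjoint nonempty $J_1,\ldots,J_r\subset\Z_{>0}$; each summand on the right is a Speer fraction in the sense of Eq.~\eqref{eq:genChen}. This identity follows by iterating the two-term partial fraction relation $\frac{1}{z_{J_1}z_{J_2}}=\frac{1}{z_{J_1}(z_{J_1}+z_{J_2})}+\frac{1}{z_{J_2}(z_{J_1}+z_{J_2})}$ (equivalently, from the simplicial subdivision of the positive orthant $[0,\infty)^r$ combined with Laplace-transform representations).

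I would then prove, by strong induction on $|\Lambda|$, the following \emph{strengthened} statement: for every laminar family $\Lambda$ of nonempty subsets of $\Z_{>0}$, the product $\prod_{I\in\Lambda}z_I^{-1}$ is a $\Q$-linear combination of Speer fractions $\bigl(z_{I_1}^{s_1}\cdots z_{I_1\cup\cdots\cup I_k}^{s_k}\bigr)^{-1}$ whose top-level index set $I_1\cup\cdots\cup I_k$ equals $\bigcup_{R\in\mathrm{roots}(\Lambda)} R$. Two cases arise in the inductive step. If $\Lambda$ has multiple roots $R_1,\ldots,R_k$, applying the simplex identity to $\prod_i z_{R_i}^{-1}$ expresses the product as a sum over $\sigma\in S_k$, each summand equal to $\prod_{I\in\Lambda_\sigma}z_I^{-1}$ for a laminar family $\Lambda_\sigma$ of the \emph{same} size as $\Lambda$ but with unique root $\bigcup_i R_i$, obtained by replacing $\{R_1,\ldots,R_k\}$ with the chain $\{R_{\sigma(1)}, R_{\sigma(1)}\cup R_{\sigma(2)},\ldots,\bigcup_i R_i\}$; we are thereby reduced to the single-root case at the same size. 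If $\Lambda$ has a single root $R$, the inductive hypothesis applied to $\Lambda\setminus\{R\}$ (of size $|\Lambda|-1$, whose roots are the forest-children $C_1,\ldots,C_r$ of $R$) yields Speer fractions with top $\bigcup_i C_i\subseteq R$; multiplication by $z_R^{-1}$ then either bumps the exponent of the top factor (when $R=\bigcup_i C_i$) or appends the new block $R\setminus\bigcup_i C_i$ at the end of the chain with exponent $1$ (this block is disjoint from every earlier block by construction). In either situation the outcome is a Speer fraction with top $R$, closing the induction.

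The chief technical subtlety is precisely this strengthening: the naive inductive statement (``the product lies in $\Q\calf^{\tch}$'') is not self-propagating, because multiplying an arbitrary Speer fraction by a new linear factor $z_R^{-1}$ is only itself a Speer fraction when $R$ extends the existing chain on top. Carrying the chain-top invariant through the induction is the real obstacle. Once the strengthened claim is in hand, applying it to $\Lambda=\{I(H):H\in\tt{E}'\}$ proves the first assertion of the proposition. The ``thus'' clause follows since each $\calt_{\tt E}$ is, by Speer's construction, the product of the fraction~\eqref{eq:fractonsfamily} with a holomorphic germ; expanding the fraction in $\Q\calf^{\tch}$ then exhibits $\calt_{\tt E}$ as an element of $\calm_{\Q+}^Q(\Pi^Q(\widehat{\calf^{\tch}}))=\calm^{\tch}_\Q$.
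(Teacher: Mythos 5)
Your proof is correct, and it takes the same structural route as the paper's at the conceptual level while supplying a self-contained argument where the paper delegates to a citation. Both proofs start from the observation that the nonoverlapping condition makes $\Lambda=\{I(H): H\in\tt{E}'\}$ a laminar family, i.e.\ that the Hasse diagram of $\tt{E}'$ is a rooted forest. The paper then ``flattens'' this forest into a $\Q$-linear combination of ladder trees by invoking the flattening morphism of \cite[Definition 2.10 and Theorem 5.11]{CGPZ2}, and reads off the corresponding Speer fractions. You replace that citation with a direct strong induction on $|\Lambda|$, driven by the classical orthant-subdivision identity
$\prod_{i}z_{J_i}^{-1}=\sum_{\sigma}\bigl(\prod_m z_{J_{\sigma(1)}\cup\cdots\cup J_{\sigma(m)}}\bigr)^{-1}$,
which is precisely the fraction-side avatar of flattening a set of incomparable roots into the sum over their linear orderings. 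The chain-top invariant you carry along (the union over the Speer fraction's blocks equals the union of the roots of $\Lambda$) is exactly what is needed for the single-root step to close, and is a genuine technical contribution not visible in the paper's citation-based argument; without it the naive inductive claim does not propagate. Your version buys elementarity and self-containment; the paper's buys brevity and a direct tie-in to the forest/ladder-tree combinatorics developed elsewhere in the authors' program. One minor remark: you pass from the indexed product $\prod_{H\in\tt{E}'} z_{I(H)}^{-1}$ to the set-indexed product $\prod_{I\in\Lambda} z_I^{-1}$; this is harmless here since distinct $2$-connected subgraphs in an s-family have distinct line sets, but it is worth flagging, since your strengthened statement is phrased for (repetition-free) laminar families.
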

\begin {proof} Since there is no overlaps between any two 2-connected subgraphs in an s-family $\tt{E}$, the Hasse diagram of $\tt{E}'$ is a rooted forest. The flattening procedure used in \cite[Theorem 5.11] {CGPZ2} which involves the flattening morphism defined in \cite[Definition 2.10]{CGPZ2}, transforms this rooted forest into a linear combination of ladder trees.  Correspondingly, the fraction in Eq.~\eqref{eq:fractonsfamily} is a linear combination of (ordered) Speer fractions.
\end {proof}
\subsection{Generalised evaluators on \qorth subalgebras of meromorphic germs}
\mlabel{ss:ge}
 Let $\cala$ be a locality subalgebra of the algebra $\calm_\Q$ equipped with the locality relation $\perp^Q$ of Definition~\ref{defn:depspace}.

\begin{defn} A {\bf \qorth generalised evaluator} $\cale $ on the locality algebra $(\cala,\perp^Q)$    is  a linear  form
	$\cale: \cala \to \C,$
	such that
	\begin{enumerate}
		\item $\cale(h)=  h(0)$ for $h \in \mathcal M_{\Q +}$.
		\mlabel{it:geneval1}
		\item  $\cale(f_1\cdot f_2)=  \cale(f_1)\cdot \cale(f_2)$ for $f_1, f_2\in \cala$ with $f_1\perpq  f_2$.
		\mlabel{it:geneval2}
	\end{enumerate}
	We use $E(\cala)=E^Q(\cala)$ to denote the set of locality generalised evaluators on $(\cala,\perp^Q)$.
	\label{defn:geneval}
\end{defn}

\begin{rk}\mlabel{rk:locfil}
Notice that a locality subalgebra $\cala\subset \calm_\Q$ is defined as a direct limit $\cala=\underset{\longrightarrow}{\lim}\cala_k$ with $\cala_k\subset \calm_\Q  (\C^k)$, so that a linear form $\cale:\cala\to \C$ amounts to a family of linear forms $\cale_k: \cala_k\to \C, k\in \N$ such that	 $$\cale_k\vert_{ \cala_{k-1}}=\cale_{k-1}, \quad k\geq 1.$$ 	
\end{rk}

\begin{ex}
	Let ${\rm ev}_0: \calm_{\Q+}\to \Q$ be the evaluation at $0$ defined as ${\rm ev}_0(h)=h(0)$.  Using the map
	$\pi_+^Q$ defined in Eq.~\meqref{eq:piqplus}, we build a \qorth generalised evaluator on $\cala$ as the composition
	\begin{equation}\mlabel{eq:mseva}
	\mseva \coloneqq {\rm ev}_0\circ  \pi_+^Q,
	\end{equation}
	which we call  the {\bf \qorth minimal subtraction map} since it is a generalisation to  higher  dimensions of  the one variable minimal subtraction map obtained by projection onto the holomorphic part of the Laurent series ring $\C[\vep^{-1},\vep]]$ followed by evaluation at zero.
\end{ex}

In his work~\mcite{Sp1,Sp2,Sp3,Sp4}, Speer achieved analytic renormalisation by means  of linear forms he called generalised evaluators. Let us compare these generalised evaluators with those defined in Definition~\mref{defn:geneval}.

Let us first observe that, as in Example~\mref{ex:chen}, a function is given in the variables $z_i$  means  that we have chosen a basis $\{e_i\}$ of the space, and every element in the space is written in the form $\sum z_i e_i$. In the presence of an inner product we choose $\{e_i\}$ to be an orthonormal basis.

Following Speer (and see Example~\mref{ex:feynman}), define $\calm^{\rm Fe}(\C^k)$ to be the spaces of meromorphic germs (at zero) $f$ with the property that
\vspace{-.1cm}	
$$ f:\C^k\to \C \ \text{ such that } \   f(z_1, \ldots, z_k)\cdot   \prod_{I\subseteq [k]}  z_I  \, \text{ is holomorphic at zero}.
\vspace{-.2cm}	
$$
Speer defines a {\bf generalised evaluator}~\mcite{Sp1}  as a family of linear maps
$$\cale\coloneqq \Big\{ \cale_k:\calm^{\rm Fe}(\C^k)\to \C\Big\}_{k\in \N} $$
satisfying the following conditions.
\begin{enumerate}
\item \mlabel{it:fil} (compatibility with the filtration)
$\cale_k\vert_{ \calm^{\rm Fe}(\C^{k-1})}=\cale_{k-1}, k\geq 1;$	
\item  \mlabel{it:ext} (extension of ${\rm ev}_0$)   $\cale$   is the usual evaluation ${\rm ev}_0$ at zero on   holomorphic functions;
\item  \mlabel{it:mult}  (partial multiplicativity)  $     \cale(f_1\cdot f_2)= \cale(f_1 )\cdot   \cale(  f_2)$ if $f_1$ and $f_2$ depend on disjoint sets of variables $z_i$;
\item \mlabel{it:inv} ($\Sigma_k$-invariance) $\cale$ is invariant under permutations of the variables  $\cale_k\circ \sigma^*= \cale_k$ for any $\sigma\in \Sigma_k$, with $\sigma^* f(z_1, \ldots, z_k)\coloneqq  f(z_{\sigma(1)}, \ldots, z_{\sigma(k)})$;
\end{enumerate}
There are also reality and continuity conditions which we do not discuss here. Continuity of generalised evaluators requires enhancing the constructions carried out here to a topological setting, which is the object of a joint work \cite{DPS} of the second author.

In practice, Speer builds such a generalised evaluator
by setting
\vspace{-.2cm}	
\begin{equation}\label{eq:Eiter}\cale_k^{\rm iter}\coloneqq  \frac{1}{k!}\,\sum_{\sigma\in \Sigma_k}{\rm ev}^{{\rm reg}, z_{\sigma(1)}}_0\circ \cdots \circ{\rm ev}^{{\rm reg},z_{\sigma(k)}}_0,
\vspace{-.2cm}	
\end{equation}
where, for $1\leq i\leq k$, ${\rm ev}^{{\rm reg}, z_i}_0(f)$ is defined by ${\rm ev}_0\circ \pi_+^{z_i}(f)$ when viewing  $f$ as a meromorphic function in the variable $z_i$, where $\pi^{z_i}_+$ is defined as in Eq.~\meqref{eq:proj1}.

\begin{ex} \mlabel{ex:sp} We give some examples  in  the case of $k=2$.
\begin{itemize}
	\item[a)] For $f(u,v)= \frac{u}{v} $, $g(u,v)=\left(\frac{u}{v}\right)^2$,   we have $\cale_2^{\rm iter}(f)=\cale_2^{\rm iter}(g)=0$;
	\item[b)] A change of variable $u=z_1-z_2, v=z_1+z_2$ in  $f$ and $g$ gives
\vspace{-.2cm}	
$$\tilde f(z_1, z_2)=\frac{z_1-z_2}{z_1+z_2}, \quad  \tilde g(z_1,z_2)=\left(\frac{z_1-z_2}{z_1+z_2}\right)^2,
\vspace{-.2cm}	
$$
and	we  have
	$\cale_2^{\rm iter}(\tilde f)=0$ whereas $ \cale_2^{\rm iter}(\tilde g)=1$.
\end{itemize}
\end{ex}
We now show that our locality generalised evaluators are the  generalised evaluators \`a  la Speer and  that it provides a useful alternative to the generalised evaluator $\cale^{\rm iter}$ originally used by Speer.

\begin{prop} \mlabel{pp:comeva} Given  any inner product $Q$, a locality generalised evaluator $\cale$ in Definition~\mref{defn:geneval} satisfies  Conditions \meqref{it:fil} -- \meqref{it:inv} defining generalised evaluators \`a la Speer.
\end{prop}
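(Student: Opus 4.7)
My plan is to verify Speer's four conditions (i)--(iv) in turn, working under the standing adaptation that the coordinates $(z_i)_i$ correspond to a basis $(e_i)_i$ of $\R^\infty$ which is orthonormal with respect to $Q$; this is what I read the phrase ``adapting the locality relation appropriately'' in the introduction as meaning. The hierarchy of difficulty is clear: (i) and (ii) are essentially tautological, (iii) is a straightforward translation between disjointness of coordinate sets and $Q$-orthogonality of dependence subspaces, and (iv) is the only real obstacle.

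Condition (i) (compatibility with the filtration) is automatic from Remark \ref{rk:locfil}: since $\cala = \varinjlim \cala_k$, a linear form $\cale:\cala\to\C$ is by definition a compatible family $(\cale_k)_{k\in\N}$ with $\cale_k|_{\cala_{k-1}} = \cale_{k-1}$. Condition (ii) (extension of ${\rm ev}_0$) is literally item (1) of Definition \ref{defn:geneval}, since every holomorphic germ lies in $\calm_{\Q+}$.

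For condition (iii) (partial multiplicativity), suppose $f_1$ depends on $\{z_i\}_{i\in I_1}$ and $f_2$ on $\{z_i\}_{i\in I_2}$ with $I_1\cap I_2 = \emptyset$. Then ${\rm Dep}(f_j) \subseteq \mathrm{span}_\C\{e_i \mid i \in I_j\}$ for $j=1,2$. Because $(e_i)_i$ is $Q$-orthonormal and $I_1\cap I_2 = \emptyset$, the two spans are $Q$-orthogonal, hence ${\rm Dep}(f_1) \perp^Q {\rm Dep}(f_2)$, i.e.\ $f_1 \perp^Q f_2$. Item (2) of Definition \ref{defn:geneval} then yields $\cale(f_1 f_2) = \cale(f_1)\cale(f_2)$.

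The hard part will be condition (iv) ($\Sigma_k$-invariance). Any $\sigma \in \Sigma_k$ acts by permuting the orthonormal basis vectors and hence preserves $Q$, so $\sigma^*: \cala \to \cala$ is a unitary locality algebra automorphism fixing $\calm_{\Q+}$ pointwise (a permuted holomorphic germ has the same constant term). It follows that $\cale\circ \sigma^*$ is again a locality generalised evaluator in the sense of Definition \ref{defn:geneval}, but this alone does not force $\cale\circ\sigma^* = \cale$. To close the gap I would argue that the axioms of Definition \ref{defn:geneval} refer only to the $Q$-structure and to the constant term at the origin, both of which are $\Sigma_k$-invariant once the basis is orthonormal; more concretely, one first checks the prototype $\cale^Q_{\mathrm{MS}} = {\rm ev}_0 \circ \pi_+^Q$, where $\sigma^*$ commutes with the splitting $\calm_\Q = \calm_{\Q+}\oplus\calm_{\Q-}^Q$ (both pieces are intrinsic to $Q$), yielding $\cale^Q_{\mathrm{MS}}(\sigma^* f) = (\pi_+^Q f)(\sigma\cdot 0) = \cale^Q_{\mathrm{MS}}(f)$. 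One then transfers this invariance to an arbitrary $\cale$ by writing $\cale = \cale^Q_{\mathrm{MS}}\circ T$ for a suitable $T$ in the locality Galois group (invoking the transitivity statement, Theorem \ref{thm:evaluators}, anticipated in the introduction) and checking that $T$ can be chosen $\Sigma_k$-equivariantly thanks to the coordinate-free description of $\Gal^Q(\cala/\calm_{\Q+})$ established in Section \ref{sec:local}.
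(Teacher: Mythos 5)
Your treatment of conditions (i)--(iii) coincides with the paper's: (i) is the filtration compatibility built into Remark~\ref{rk:locfil}, (ii) is item~(1) of Definition~\ref{defn:geneval} verbatim, and (iii) holds because disjointness of coordinate sets implies $Q$-orthogonality of dependence subspaces once the $z_i$ are coordinates in a $Q$-orthonormal basis, so locality multiplicativity is the stronger condition. Up to here you and the paper agree.

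For condition (iv) the paper gives a one-line citation to Theorem~5.1(ii) of [CGPZ3], and your attempt to replace that citation by a self-contained argument does not close. The first half of your argument is fine: $\sigma^*$ is a locality algebra automorphism preserving $Q$ and the intrinsic splitting $\calm_\Q = \calm_{\Q+}\oplus\calm^Q_{\Q-}$, whence $\mseva\circ\sigma^* = \mseva$. This is indeed in the spirit of the cited result. The gap is in the transfer to a general $\cale$. You invoke Theorem~\ref{thm:evaluators} to write $\cale = \mseva\circ T$ with $T\in\Gal^Q(\cala/\calm_{\Q+})$ and then assert that $T$ ``can be chosen $\Sigma_k$-equivariantly.'' Two objections. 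First, Theorem~\ref{thm:evaluators} requires $\QQ\Pi^Q(\mti\cals)$ to be a locality \emph{polynomial} subalgebra with a locality polynomial basis; Proposition~\ref{pp:comeva} is stated for an arbitrary locality subalgebra $\cala$ of $\calm_\QQ$, so the factorisation is not available in general. Second, and more seriously, even when it is available the $T$ produced by the proof of Theorem~\ref{thm:evaluators} is the extension of $s_\alpha\mapsto s_\alpha + \cale(s_\alpha)$ on a set $S$ of locality Lyndon generators, and its $\Sigma_k$-equivariance would precisely require $\cale(\sigma^*s_\alpha)=\cale(s_\alpha)$ for all generators and all $\sigma$ — which is the statement you are trying to prove. (It does not help that $\sigma^*s_\alpha$ need not even lie in $S$.) So the last step of your argument is circular, and you should fall back on the external citation the paper itself uses rather than try to re-derive $\Sigma_k$-invariance internally from Definition~\ref{defn:geneval} alone.
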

\begin{proof}
Indeed,
conditions~\eqref{it:fil} follows from Remark~\mref{rk:locfil}, and \eqref{it:ext} is the same as condition~\meqref{it:geneval1} in Definition~\mref{defn:geneval}.
The locality multiplicative condition (Definition~\mref{defn:geneval}.\meqref{it:geneval2}) for $\cale$ is stronger than the partial multiplicativity condition~\meqref{it:mult} since disjointness of variables of two functions implies the orthogonality for $Q$ when the variables correspond to   coordinates in an orthonormal basis for $Q$.
Condition~\eqref{it:inv} follows from Theorem 5.1(ii) in \cite {CGPZ3}.
\end{proof}

In order to address Problem~\mref{it:sp2},  combining Corollary~\mref{coro:PolyA} and Proposition~\mref{pp:comeva}, we observe that every  generalised evaluator \`a la Speer factors through the minimal subtraction scheme $\mseva$.

We compare the two generalised evaluators $\cale^{\rm iter}_k$ and $\mseva$.
\begin{enumerate}
\item We first observe a problem in the inductive procedure proposed by Speer since the function $(z_2, \cdots, z_k)\mapsto  {\rm ev}_0^{\rm reg,z_1}(f(z_1, \cdots, z_k))$ might be non-meromorphic, which is an obstacle to implementing the composition ${\rm ev}_0^{\rm reg,z_j}\circ {\rm ev}_0^{\rm reg,z_1}$ on $f$  for $j\neq 1$.

For example, the discontinuity at zero
$${\rm ev}_0^{\rm reg,z_1}(\tilde f)=\left \{\begin {array}{ll}1,&z_2=0,\\-1, &z_2\not =0,\end {array}\right . \,
$$   is an obstacle for the next step which gives the expression
$ {\rm ev}_0^{{\rm reg}, z_2} \circ {\rm ev}_1^{{\rm reg},z_1}(\tilde f)$
arising in the definition $\cale_2^{\rm iter}(\tilde f)$.
This also suggests that there is no natural way to interpret Speer's generalised evaluator as a minimal subtraction scheme in multiple variables.
In contrast, our generalised evaluator $\mseva$ is defined for meromorphic germs with linear poles, so this does not pose a problem.

\item The partial multiplicativity property (ii) for Speer's generalised evaluator applies to a smaller set of pairs of functions than the one allowed by our locality multiplicativity. For $\tilde g_1(z_1,z_2)= (z_1-z_2)^2 $ and  $\tilde g_2(z_1,z_2)= (z_1+z_2)^{-2} $ we have
\vspace{-.1cm}	
$$ \cale_2^{\rm iter} (\tilde g_1\, \tilde g_2)\neq \cale_2^{\rm iter}(\tilde g_1)\cale_2^{\rm iter}(\tilde g_2)
\vspace{-.1cm}	
$$
since $\cale_2^{\rm iter} (\tilde g_1\, \tilde g_2)=1$ and $\cale_2^{\rm iter} (\tilde g_1)=\cale_2^{\rm iter} (\tilde g_2)=0$. Yet the multiplicativity holds for $\mseva$ since the linear forms $z_1-z_2$ and $z_1+z_2$ are orthogonal (as before, the parameters $z_i$'s correspond to  coordinates  in an orthonormal basis  for $Q$) and we have
\vspace{-.1cm}	
$$\mseva(\tilde g_1\, \tilde g_2)=0=\mseva(\tilde g_1)\,\mseva(\tilde g_2).
\vspace{-.1cm}	
$$
\item As illustrated by Example~\mref{ex:sp}, Speer's generalised evaluator $\cale^{\rm iter}_k$ depends on a choice of basis since  $\cale^{\rm iter}_2 (g)\neq  \cale^{\rm iter}_2(\tilde g)$. It is not invariant even under an orthogonal transformation of the variables. In contrast, our generalised evaluator $\mseva$ does not depend on such an orthogonal transformation. For the function $g$ in Example~\mref{ex:sp} with standard basis $z_1,z_2$ under an inner product $Q$, $\mseva(g)=\mseva(\tilde{g})=0$.
\end{enumerate}

To conclude, the above observations speak in favour of  the use of the global minimal subtraction scheme $\mseva$, since it is globally defined on  $\calm_\Q^{\rm Fe}(\C^k)$ and is multiplicative on a large  set of pairs.
\vspace{-.3cm}	

\subsection{Locality Galois actions on generalised evaluators}
We consider the action of the locality Galois group on generalised evaluators.
Clearly, the group $\Gal^Q(\cala /\calm_{\Q+})$ acts on $E^Q(\cala)$:
\vspace{-.1cm}	
\begin{equation}\label{eq:actionGaloisgroup}
E^Q(\cala)\times \Gal^Q(\cala /\calm_{\Q+}) \to E^Q(\cala),\quad  (\cale,g)\mapsto \cale \circ g.
\vspace{-.1cm}	
\end{equation}

The subsequent theorem shows that the automorphism group  ${\rm Aut}^Q(\cala)\cong \Gal^Q(\cala /\calm_{\Q+})$ acts transitively on $E^Q(\cala)$, and thus relates any \qorth generalised evaluator  to the \qorth minimal subtraction scheme $\mseva$. In this respect,  $\Gal^Q(\cala /\calm_{\Q+})$ can be regarded as a renormalisation group on  generalised evaluators.

\begin{thm} 	\mlabel{thm:evaluators}
For  $\cals\subseteq \calf$, we consider  the unitary \qorth subalgebra $\cala\coloneqq \calm_{\Q+}^Q(\Pi ^Q(\mti{\cals}))$  of $\calm_\Q$.
	Suppose that $ \Q\Pi ^Q(\mti{\cals})$ is a \qorth polynomial subalgebra of $ \mti{\calf}$ with a \qorth polynomial basis $S\subset \Pi ^Q(\cals)$. Then every \qorth generalised evaluator $\cale$ on $\cala$ factorises through the minimimal subtraction scheme  $\mseva$,
	that is, there  is some $\tilde \varphi$ in $ \Gal^Q(\cala /\calm_{\Q+})$ such that
	\begin{equation}\label{eq:ClassE}\mseva \circ \tilde \varphi =\cale.\end{equation}
\end{thm}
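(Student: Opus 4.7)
The plan is to construct $\tilde\varphi$ by first defining a suitable $\varphi$ on the \qorth polynomial subalgebra $\calb=\QQ(\Pi^Q(\mti{\cals}))$ and then invoking Theorem~\mref{thm:phiiso} to lift it to $\tilde\varphi \in \Gal^Q(\cala/\calm_{\Q+})$. The guiding observation is that both $\cale$ and $\mseva\circ\tilde\varphi$ are \qorth multiplicative linear forms $\cala\to\C$ agreeing with $\mathrm{ev}_0$ on $\calm_{\Q+}$, so by the uniqueness in Proposition~\mref{pp:tensor}\meqref{it:linext} they coincide on $\cala$ as soon as they coincide on $\calb$, and by \qorth multiplicativity this last agreement reduces further to agreement on the polynomial basis $S$.

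To force this agreement, I would define the set map $\psi: S \to \calb$ by $\psi(s) \coloneqq s + \cale(s)\cdot 1$. Since the constant $1$ is $Q$-orthogonal to every meromorphic germ, $\psi$ preserves $\perp^Q$, and because $S$ is a \qorth polynomial basis of $\calb$ in the sense of Definition~\mref{defn:localg}, $\psi$ extends uniquely to a unitary \qorth algebra homomorphism $\varphi:\calb\to\calb$ via $\varphi(s_{i_1}\cdots s_{i_k})=\psi(s_{i_1})\cdots\psi(s_{i_k})$ on \qorth monomials, followed by linear extension. Should $\cale$ take values in $\C\setminus\QQ$, one works implicitly with the scalar extension of the Galois-group constructions to $\C$; the combinatorial form conditions invoked below are unaffected.

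To verify that $\varphi\in {\rm Aut}_{\rm Res}^Q(\calb)$, I expand, for a \qorth monomial $M=s_{i_1}\cdots s_{i_k}$ in $S$,
\[ \varphi(M) = M + \sum_{J\subsetneq \{1,\ldots,k\}}\Bigl(\prod_{j\notin J}\cale(s_{i_j})\Bigr)\prod_{j\in J} s_{i_j}, \]
so every correction term is either a proper sub-monomial of $M$ with strictly smaller supporting space and p-order or, in the case $J=\emptyset$, a constant, both conforming to Eq.~\meqref{eq:AutQB}. For a general $T\in\Pi^Q(\mti{\cals})$, writing $T=\sum_M c_M M$ and using the gradings of Eqs.~\meqref{eq:pgrad}--\meqref{eq:ugrad} (so only monomials with $\supsp(M)=\supsp(T)$ and $\pord(M)=\pord(T)$ contribute), one concludes that $\varphi(T)-T$ again consists of terms of strictly smaller supporting space and p-order. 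Thus $\varphi\in {\rm Aut}_{\rm Res}^Q(\calb)$, and Theorem~\mref{thm:phiiso} furnishes $\tilde\varphi\in\Gal^Q(\cala/\calm_{\Q+})$. Moreover, since $\mseva$ annihilates every polar germ and fixes constants, the displayed expansion yields
\[ \mseva(\varphi(M)) = \prod_{j=1}^k \cale(s_{i_j}) = \cale(M), \]
the last equality being the \qorth multiplicativity of $\cale$ applied to the mutually orthogonal factors $s_{i_1},\ldots,s_{i_k}$.

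The main delicate step is the universal extension of $\psi$ to a \qorth algebra homomorphism on $\calb$: it is precisely where the \qorth polynomial basis hypothesis on $S$ is used, since without \qorth algebraic independence of $S$ the prescription on $S$ need not determine a well-defined \qorth homomorphism on $\calb$. The potential non-rationality of $\cale(s)$ is a harmless issue handled by complexifying $\calb$, because the conditions of Theorem~\mref{thm:phiiso} depend only on the form of the correction terms, not on the field their coefficients lie in.
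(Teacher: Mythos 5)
Your proof is essentially the same as the paper's: you define $\varphi$ on the \qorth polynomial basis $S$ by $s\mapsto s+\cale(s)\cdot 1$, extend to a \qorth algebra endomorphism of $\calb$ using the \qorth polynomial structure, check by expanding a monomial $M=s_{i_1}\cdots s_{i_k}$ (the paper does this slightly more tersely) and invoking the gradings/Lemma~\mref{lem:SepSpace} that the correction terms have strictly smaller supporting space and p-order so $\varphi\in {\rm Aut}^Q_{\rm Res}(\calb)$, lift to $\tilde\varphi\in\Gal^Q(\cala/\calm_{\Q+})$ by Theorem~\mref{thm:phiiso}, verify $\mseva\circ\tilde\varphi=\cale$ on $S$, and propagate the equality to $\cala$ via multiplicativity and Proposition~\mref{pp:tensor}. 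The one place where you go beyond the paper's write-up is your explicit flag that $\cale(s)$ may lie in $\C\setminus\Q$, in which case $s+\cale(s)\cdot 1\notin\calb=\Q\Pi^Q(\mti{\cals})$; the paper does not comment on this, and your fix (extend scalars to $\C$, noting that the defining conditions on ${\rm Aut}^Q_{\rm Res}$ and the arguments in Theorem~\mref{thm:phiiso} only concern the form of the correction terms, not the ground field) is the right resolution.
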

\begin{proof}
	Let $\cale\in E^Q(\cala)$ be given.
	By assumption, $\calb\coloneqq \Q\Pi ^Q(\mti{\cals})$ is a \qorth polynomial subalgebra   of $\calm_\Q$ with $S\coloneqq \{s_\alpha \}\subset \Pi^Q(\cals)$ as a set of \qorth polynomial generators. Then the map
	\begin{equation}\label{eq:evvarphi} \varphi: S\to \calb, \quad  s_\alpha\mapsto s_\alpha +\cale(s_\alpha)\end{equation}
	extends to a \qorth algebra homomorphism $\varphi$ on $\calb$.
	
	For any $t$ in $\Pi ^Q(\cals)$, we now determine the form of $\varphi (t)$. Write
	$$t=P(S)$$
	as a polynomial in $S$. Then $P$ is a sum of monomials of the form
	$ \Pi_\alpha s_{\alpha}$ with $s_{\alpha}$ all distinct due to the locality algebraic independence. Further, let $W=\supsp(t)$ and $N=\pord(t)$. Then by Lemma~\mref{lem:SepSpace}, we can assume that all these monomials have supporting space $W$ and p-order $N$.
	For each of these monomials, by the locality multilplicativity, we have
	$$\varphi (\Pi_\alpha s_{\alpha})=\Pi_\alpha(s_\alpha +\cale (s_\alpha))=\Pi_\alpha s_\alpha+ {\rm lower\ order\  terms}.
	$$
	Here the lower order terms come from $\Pi_\alpha s_\alpha$
	by replacing one or more factors $s_{\alpha}$ by $\cale (s_\alpha)$, so they indeed have lower p-orders and smaller supporting spaces.
	Since all the  monomials $\Pi_\alpha s_{\alpha}$ have the same supporting space and p-order, we have
	$$\varphi(t)=\varphi(P(S))=P(S)+ {\rm lower\ order\  terms}.$$
	The map  $\varphi$ therefore defines an element in ${\rm Aut}_{{\rm Res}}^Q(\calb)$. The extension $\tilde \varphi \in \Gal^Q(\cala /\calm_{\Q+})$ of $\varphi$ obtained from Theorem~\mref{thm:phiiso} is a \qorth algebra homomorphism. Further we have
	\[ \mseva\circ \tilde{\varphi} (s_\alpha)={\rm ev}_0 \circ \pi_+^Q \circ \tilde \varphi (s_\alpha )=
	{\rm ev}_0\circ \pi_+^Q(s_\alpha+\cale(s_\alpha))
	={\rm ev}_0(\cale(s_\alpha))=\cale(s_\alpha).\]
	Hence, the \qorth algebra homomorphisms $\mseva \circ \tilde \varphi $ and $\cale$ agree on the locality polynomial generating set $S$ of $\cala$. Therefore $\mseva\circ \tilde{\varphi}=\cale$ on $\calb$. Since they also agree on $\calm_{\Q+}$, by Proposition~\mref{pp:tensor}, they agree on $\calm_{\Q+}(\Pi^Q(\cals))$.
\end{proof}

As a direct consequence of Theorem~\mref{thm:conefrac} and Corollary~\mref{coro:confrac}, we obtain  a statement which addresses  Problem \ref{it:sp3}

\begin{coro}
	\mlabel {coro:PolyA}
		The space $E^Q(\calm^{\rm Ch}_\QQ)$ $\big($resp. $E^Q(\calm^{\tch}_\QQ)$$\big)$ of locality generalised evaluators on the locality algebra $\left(\calm ^{{\rm Ch}}_\Q,\perp^Q\right)$ $\Big($resp. $\left(\calm ^{{\tch}}_\Q,\perp^Q\right)\Big)$ is a homogeneous space of $\Gal^Q\left(\calm ^{{\rm Ch}}_\Q /\calm _{\Q+}\right)$ $\Big($resp. $\Gal^Q\left(\calm ^{{\tch}}_\Q /\calm _{\Q+}\right)$$\Big)$. In other words, these groups act transitively on
$E^Q(\calm^{\rm Ch}_\QQ)$ $\big($resp. $E^Q(\calm^{\tch}_\QQ)$$\big)$.
\end{coro}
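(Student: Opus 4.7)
The proof strategy is to apply Theorem~\mref{thm:evaluators} with $\cals$ taken to be $\calf^{\rm Ch}$ and $\calf^{\tch}$ in turn, after verifying that the required hypotheses hold. Since the statement asserts that two group actions are transitive, the hard work has already been done in Section~\mref{sec:locpoly} and the proof reduces to assembling prior results.

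First I would check that $\Q\Pi^Q(\mti{\calf}^{\rm Ch})$ and $\Q\Pi^Q(\mti{\calf}^{\tch})$ are locality polynomial subalgebras of $\mti{\calf}$. This is precisely what Corollary~\mref{coro:confrac} (itself an instance of Theorem~\mref{thm:conefrac}) provides: under the choices of Example~\mref{ex:uexam}, the sets $\calf^{\rm Ch}$ and $\calf^{\tch}$ are exactly the sets $\calf^{\gc}_\sloc$ of locality ordered fractions associated to the countable locality sets $(\Z_{>0},\leq,\top_{\ne})$ and $(\fpower(\Z_{>0}),\lexle,\top_{\rm disj})$, together with the linear maps $L$ specified there. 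In both cases $L$ is a locality map into $(\call_\Q,\perp^Q)$, because distinct standard variables (resp.\ variables with disjoint index sets) are orthogonal for $Q$ in the chosen orthonormal basis $\cale$; hence Theorem~\mref{thm:conefrac}(iii) applies, and a locality polynomial basis is given by the image under $\Phi_\sloc$ of the locality Lyndon words.

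Once these identifications are made, the hypothesis of Theorem~\mref{thm:evaluators} is satisfied for $\cala=\calm^{\rm Ch}_\Q$ and $\cala=\calm^{\tch}_\Q$. That theorem then yields, for every $\cale\in E^Q(\cala)$, some $\tilde\varphi\in \Gal^Q(\cala/\calm_{\Q+})$ such that $\mseva\circ\tilde\varphi=\cale$. Since $\mseva$ itself lies in $E^Q(\cala)$, this is precisely the statement that the orbit of $\mseva$ under the right action $(\cale,g)\mapsto\cale\circ g$ from Eq.~\meqref{eq:actionGaloisgroup} covers all of $E^Q(\cala)$.

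Transitivity then follows in one line: given $\cale_1,\cale_2\in E^Q(\cala)$, write $\cale_i=\mseva\circ\tilde\varphi_i$ and set $g\coloneqq\tilde\varphi_1^{-1}\circ\tilde\varphi_2$, which lies in $\Gal^Q(\cala/\calm_{\Q+})$ by Theorem~\mref{thm:phiiso}(ii); then $\cale_1\circ g=\mseva\circ\tilde\varphi_2=\cale_2$. The only potential obstacle is a bookkeeping one, namely matching the combinatorial locality relations of Example~\mref{ex:uexam} (distinct integers in the Chen case, disjoint subsets in the Speer case) with the $Q$-orthogonality of the corresponding linear forms; but this compatibility is precisely what makes $L$ a locality map and is already encoded in Theorem~\mref{thm:conefrac}, so no additional argument is needed.
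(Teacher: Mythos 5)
Your proposal is correct and takes essentially the same route as the paper, which obtains the corollary directly from Theorem~\ref{thm:evaluators} once Theorem~\ref{thm:conefrac} and Corollary~\ref{coro:confrac} supply the locality polynomial structure of $\Q\Pi^Q(\mti{\calf}^{\rm Ch})$ and $\Q\Pi^Q(\mti{\calf}^{\tch})$. Your explicit one-line deduction of transitivity via $g=\tilde\varphi_1^{-1}\circ\tilde\varphi_2$ (using that $\Gal^Q(\cala/\calm_{\Q+})$ is a group, Theorem~\ref{thm:phiiso}) merely spells out what the paper leaves implicit.
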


To finish the paper, we use multiple zeta values to build an example of locality generalized evaluators and elements of the locality Galois group.

Recall that for $s_1,\ldots,s_k$ in $ \ZZ_{>0}$ with $s_1\geq 2$, the multiple zeta value (also called a multizeta value) at $(s_1, \cdots, s_k)$ is
\vspace{-.2cm}
\begin{equation}\label{eq:multizeta} \notag
\zeta\left( s_1, \cdots, s_k\right):= \sum_{n_k>\cdots >n_1\geq 1} n_1^{-s_1}\cdots n_k^{-s_k}=\prod_{i=1}^k\sum_{m_i=1}^\infty( m_1+\cdots + m_i)^{-s_i}.
\end{equation}

For a locality Lyndon word
$x_0^{s_1-1}x_{u_1}\ldots x_0^{s_k-1}x_{u_k}\in \Sh_{1,\sloc}(\overline{U})$ and for the corresponding ``Lyndon Chen fraction"
\begin{equation}\label{eq:lchenfrac} \notag
	\frakf\pfpair{s_1,\ldots,s_k}{{u_1},\ldots,{u_k}}\coloneqq \frac 1{z_{u_1}^{s_1}(z_{u_1}+z_{u_2})^{s_2}\cdots (z_{u_1}+z_{u_2}+\cdots+z_{u_k})^{s_k}}\,, \   u_i, s_i\in \Z_{>0},k\in \N, u_i\neq u_j  \, {\rm if}\ i\neq j,
\end{equation}
in $\calf^{\rm Ch}_\sloc$ (see Example~\mref{ex:uexam}.\meqref{it:uex1}), define
\vspace{-.2cm}
\begin{equation}\mlabel{eq:mzvchar}
{\mathcal E}^\zeta\left(\frakf\pfpair{s_1,\ldots,s_k}{{u_1},\ldots,{u_k}}\right):=\left \{ \begin {array}{ll}\zeta\left( s_1, \cdots, s_k\right), &s_1\ge 2,\\
	0,& s_1=1.\end{array} \right .
\end{equation}
Then by Corollary~\mref{coro:confrac}, this assignment
extends to a unique locality algebra homomorphism
$$ \cale^\zeta: \cala^{\rm Chen}
\to \R.$$
Here $\R$ is equipped with the full locality condition $\R\times \R$, implying the locality of the homomorphism $\cale^\zeta$.

Note that the map on $\cala^{\rm Chen}$ defined by Eq.~\meqref{eq:lchenfrac} for \emph{all} locality Chen fractions is also a locality algebra homomorphism, following~\mcite{IKZ}. It therefore coincides with $\cale^\zeta$.
Hence we conclude that assigning multiple zeta values  to locality Chen fractions as in (\ref{eq:mzvchar})  defines a locality generalized evaluator $\cale^\zeta$.
Then thanks to Theorem 	\ref{thm:evaluators},  there is a transformation $\tilde \varphi$ in the locality Galois group $ \Gal^Q(\cala^{\rm Chen} /\calm_{\Q+})$, such that
$$\mseva \circ \tilde \varphi =\cale^\zeta.$$

\noindent
{\bf Acknowledgments.} The second author is grateful to the Perimeter Institute in Waterloo where she was hosted on an  Emmy Noether fellowship. This research is supported by
the National Natural Science Foundation of China (11890663 and 11821001).

\noindent
{\bf Declaration of interests.} The authors have no conflicts of interest to disclose.

\noindent
{\bf Data availability.} Data sharing is not applicable to this article as no new data were created or analyzed in this study.

\vspace{-.3cm}

\end{document}